\date{}
\newtheorem{theorem}{Theorem}
\newtheorem*{theorem*}{Theorem}
\newtheorem{lemma}{Lemma}
\newtheorem{claim}{Claim}
\newtheorem{corollary}{Corollary}
\newtheorem{proposition}{Proposition}
\newtheorem{definition}{Definition}
\theoremstyle{definition}
\theoremstyle{remark}
\newtheorem{example}{Example}
\newcommand{\bP}{\mathbb{P}}
\newcommand{\R}{\mathbb{R}}
\newcommand{\F}{\mathcal{F}}
\newcommand{\var}{\mathrm{Var}}
\newcommand{\one}{\mathds{1}}
\providecommand{\tabularnewline}{\\}
\def\dd{\mathrm{d}}
\def\cA{\mathcal{A}}
\def\cB{\mathcal{B}}
\DeclareDocumentCommand\Pr{ m g }{\ensuremath{
    {   \IfNoValueTF {#2}
      {\mathbb{P}\mleft[{#1}\mright]}
      {\mathbb{P}\mleft[{#1}\,\middle\vert\,{#2}\mright]}
    }
}}
\DeclareDocumentCommand\E{ m g }{\ensuremath{
    {   \IfNoValueTF {#2}
      {\mathbb{E}\mleft[{#1}\mright]}
      {\mathbb{E}\mleft[{#1}\,\middle\vert\,{#2}\mright]}
    }
}}
\title{Private Private Information\footnote{
The paper benefited from numerous suggestions and comments from our colleagues. We are grateful to  Marina Agranov, Nageeb Ali, Itai Arieli, Eric Auerbach, Yakov Babichenko, Dirk Bergemann, Alexander Bloedel, Aislinn Bohren, Simina Br\^{a}nzei, Benjamin Brooks, David Dillenberger, Piotr Dworczak, Federico Echenique, Jeffrey Ely, Drew Fudenberg,  Hanming Fang, Wayne Gao, Kai Hao Yang, David Kempe, Toygar Kerman, Andreas Kleiner, Victoria Kostina, Annie Liang, Jonathan Libgober, Elliot Lipnowski, Alessandro Lizzeri, George Mailath, Moritz Meyer-ter-Vehn, Xiaosheng Mu, Pietro Ortoleva,
Luciano Pomatto, Andy Postlewaite, Doron Ravid, Ran Shorrer, Ran Spiegler, Bruno Strulovici, Ina Taneva, Caroline Thomas, Tristan Tomala, Nicolas Vieille, Alexander Wolitzky, Leeat Yariv, and seminar participants
at Caltech, University College London, UC Riverside, USC, Hong Kong Joint Theory Seminar, HSE Moscow and St.\ Petersburg, University of Maryland, Northwestern University, University of Pennsylvania,  Purdue University,  Stanford, 2021 LA Theory Conference, Pennsylvania Economic Theory Conference, Retreat on Information, Networks, and Social Economics,  and Stony Brook Workshop on Strategic Communication and Learning.}}
\author{ \large Kevin He\thanks{University of Pennsylvania. Email: hesichao@gmail.com.} \ \ \ 
Fedor Sandomirskiy\thanks{Princeton University. Email:  fsandomi@princeton.edu. Fedor Sandomirskiy was supported by the Linde Institute at Caltech and the National Science Foundation (grant  CNS 1518941).} \ \ \ 
Omer Tamuz\thanks{Caltech. Email: tamuz@caltech.edu.  Omer Tamuz was supported by a grant from the Simons Foundation (\#419427), a Sloan fellowship, a BSF award (\#2018397), and a National Science Foundation CAREER award (DMS-1944153).}}
\begin{document}
\setcounter{page}{0}
\maketitle
\thispagestyle{empty}
\begin{abstract}

    {Private signals model noisy information about an unknown state. 
    Although these signals are called ``private,'' they may still carry information about each other. Our paper introduces the concept of private private signals, which contain information about the state but not about other signals. To achieve privacy, signal quality may need to be sacrificed. We study the informativeness of private private signals and characterize those that are optimal in the sense that they cannot be made more informative without violating privacy. We discuss implications for privacy in recommendation systems, information design, causal inference, and mechanism design.}
\end{abstract}
    \vspace{2bp}

\newpage

\section{Introduction}
\label{sec:intro}

Privacy has emerged as an important {concern} in contemporary markets, characterized by the pervasive collection and exchange of information. In consequence, privacy has been the focus of a large literature in computer science \citep*{dwork2014algorithmic} and a more recent literature in economics \citep*{eilat2021bayesian, acemoglu2022too, bergemann2022economics,liang2020data}. {In this paper, we introduce the perspective and tools of Blackwell informativeness to the study of privacy.}

As a motivating example, consider an  employer who is interested in an unknown state: whether or not an applicant is qualified for a job. A recommender (e.g., an automated recommendation system or a human letter writer)  knows the applicant's qualification but also knows a sensitive attribute, such as the applicant's age. The recommender wishes to give as much information as possible to the employer about the applicant's qualification, but does not want to reveal any information about the sensitive attribute. 

If the state and the attribute are independent, then the recommender can simply report the state. However, if the state and the attribute are correlated,  reporting the state also inadvertently reveals some  information about the attribute. The recommender faces a privacy-constrained information-design problem: how to optimally give information about the state, while keeping the recommendation independent of the sensitive attribute? Intuitively, as the correlation between the sensitive attribute and the state increases, the privacy constraint becomes more restrictive and further limits how much information the recommender can provide.  

{Generalizing this example, we view the attribute and the recommendation as two signals about the state. We  study \emph{private private information structures}, in  which signals about the state are statistically independent of each other. Requiring independence between these signals imposes a joint restriction on their informativeness. This paper studies  the maximal informativeness of signals independent of each other and applies the results to settings such as the recommender's problem above. }

 {Specifically, we are interested in the maximal amount of information that can be conveyed through private private signals.} We formalize this question using the notion of Pareto optimality with respect to the Blackwell order: a private private information structure is \emph{Blackwell-Pareto optimal} if it is impossible to {make any of the signals more informative} in the Blackwell sense without violating privacy.\footnote{The term Pareto optimality is usually used in the context of allocating goods to agents who have preferences over them. {One can similarly 
 think of an information structure as an allocation of information to agents who prefer signals that are more Blackwell informative.}} The main goal of this paper is to understand the Blackwell-Pareto frontier, allowing a designer to produce optimal private private structures. This includes, for example, producing the most informative recommendation that does not  reveal information about a sensitive attribute.

\paragraph{{Main} Results.} 

In the case of two {signals} and a binary state, we obtain a simple description of the Blackwell-Pareto frontier: given a private private structure, denote by $F_1, F_2 \colon [0,1] \to [0,1]$ the cumulative distribution functions of the {induced beliefs about the state.} 
Then the structure is Blackwell-Pareto optimal if and only if $F_1(x) = 1-F_2^{-1}(1-x)$, {where $F_2^{-1}$ is the inverse of $F_2$.}
This characterization allows for a straightforward test of optimality and a constructive procedure for finding  optimal structures.

{In the general case with any number of signals and states, 
we uncover a surprising connection} 
to the field of mathematical tomography and the study of sets of uniqueness. A subset of $[0,1]^n$ is called a \emph{set of uniqueness} if it is uniquely determined by its projections onto the coordinate axes. Understanding such sets has been an active area of research since the 1940s \citep*{lorentz1949problem}. This problem gained more prominence with the advent of tomography, a technology to reconstruct three-dimensional objects from their projections \citep*[see, e.g.,][]{gardner1995geometric}. 

{In the case of a binary state, we show that private private information structures with $n$ signals can be identified with subsets of $[0,1]^n$, and that the Blackwell-Pareto optimal ones correspond exactly to sets of uniqueness. In the two-dimensional case---which corresponds to the case of two signals---the complete characterization of the sets of uniqueness is known \citep*{lorentz1949problem} and leads to the above-mentioned characterization of the Blackwell-Pareto frontier.
With three or more states and any number of signals, we establish an equivalence 
between Blackwell-Pareto optimality and a generalization of sets of uniqueness that we term \emph{partitions of uniqueness}}.

\paragraph{{Applications.}} 

{We consider a number of applications of our main results. The main application is to} \emph{privacy-preserving recommendations}, a particular case of which is the motivating example presented above.

In the problem of privacy-preserving recommendation, the goal is to  design a maximally informative signal (a recommendation) about the state under the constraint that this recommendation is independent of a given random variable correlated with the state. For example, the state can be a job applicant's productivity type or a loan applicant's creditworthiness, and a recommendation must be as informative about the state as possible but carry no information about sensitive attributes such as the applicant's age or health condition.
{The state, together with the sensitive attribute and the designed recommendation, form a private private structure, thus connecting the problem of optimal privacy-preserving recommendation to Blackwell-Pareto optimality of private private structures.}

When the state is binary (i.e., the applicant's productivity type takes {two possible} values), our results on Blackwell-Pareto optimality provide a complete solution to the problem of  privacy-preserving recommendation.  {As we show, there is a dominant recommendation: a privacy-preserving} recommendation that  Blackwell dominates any other.
{The result} is constructive and gives a simple recipe for generating this recommendation. We also quantify the information loss due to the privacy constraint and discuss its comparative statics. 

Curiously, the {existence of a dominant} recommendation for binary states 
means that knowledge of the decision-maker's  objective is unnecessary. Hence, decisions and recommendations can be decoupled, e.g., by creating an agency assessing job applicants for all prospective employers. By contrast, for three or more states, we give an example where the optimal recommendation {depends on the decision-maker's objective, and so} delegation to an agency could lead to efficiency losses.

\medskip

{Our next application is to information design with privacy across receivers. As a motivating example,  consider a platform market connecting buyers and sellers.  Buyers are uncertain about the quality of a good, which plays the role of the state variable. The platform, informed of the state, wishes to reveal some information about it to the buyers in order to increase their welfare. 
By classical results in mechanism design \citep*{milgrom1982theory,mcafee1992correlated}, even minuscule correlation of information across buyers 
can be leveraged by the seller to extract full surplus.}

{In such settings where correlations in signals lead to undesirable outcomes, the solution to the information-design problem must be a private private information structure. This motivates our study of information design under privacy constraints, in which we abstract away from the origin of the constraint and impose it exogenously. We leverage our insights into private private structures to study the designer's problem. We provide necessary conditions for the optimal structure to be on the Blackwell-Pareto frontier, and describe the optimum explicitly for a class of separable objectives in the binary-state, two-receiver case. In this case, it suffices to use signals with at most three possible realizations, as opposed to infinitely many without the privacy constraint \citep*{arieli2020feasible, cichomski2023existence} and two realizations in the single-receiver case \citep*{kamenica2011bayesian}. In \S\ref{app:applic} we explore another multi-agent information design setting---influencing players in zero-sum games---where private private structures arise endogenously.
}

\medskip

{Private private information structures can also be used to measure the extent to which independent inputs affect a state of interest. We consider two such applications in the context of causal inference and Bayesian mechanism design. One of the basic causal networks is a collider, which represents a collection of independent causes generating an effect. We observe that the collider structure is a private private structure in which the effect takes the role of the state and the causes are the signals. Likewise, in mechanism design, the joint distribution of the individual types and the public outcome defines a private private structure.   

With these applications in mind, we use information-theoretic techniques to quantify and bound signal informativeness in private private structures. Specifically, we show that the sum across signals of the mutual information is bounded by the entropy of the state. 
This bound has a simple interpretation: one can think of private private structures as a way to divide an ``information pie'' so that the sum of pieces is smaller than the whole pie. These results immediately translate to bounds on the causal strength in collider structures  and on the responsiveness of mechanism outcomes to individual types.

\medskip

The connection between private private structures and Bayesian mechanism design outlined above has additional structural implications. We consider mechanisms for public decision making, i.e., social choice rules. Each such rule defines a joint distribution of individual types and the public outcome. The one-agent marginals of such a rule are 
studied 
in the reduced-form approach to mechanism design that replaces the original multi-agent design problem with auxiliary one-agent problems under additional feasibility constraints \citep*{matthews1984implementability,border1991implementation,kleiner2021extreme}. The feasible one-agent rules turn out to be tightly connected to the belief distributions that can be induced by private private information structures. Consequently, our results on private private structures imply new characterizations of feasible one-agent rules. In particular, we extend a version of the characterization of \cite*{hart2015implementation} to asymmetric rules, show that the result of \cite*{che2013generalized} admits a majorization form, and uncover a potential origin of computational intractability of the public-decision setting from \cite*{gopalan2018public}. Our results also imply an alternative simple proof of the main derandomization result of \cite*{chen2019equivalence}.}

\paragraph{More Related Literature.}\label{sec:intro_literature}

The question of which belief distributions can arise in private private information structures was addressed in \cite*{gutmann1991existence} and \cite{arieli2020feasible}. They provide a characterization for two {signals} under additional symmetry assumptions; we discuss the relation to our work below. \cite*{hong2009interpreted} consider the feasibility of private private signals in a stylized setting with binary signals and binary states of nature.  
Concurrently and independently, \cite*{cichomski2020maximal} studies private private structures that maximize the expected divergence between the {induced} posterior beliefs. His approach relies on a connection to the Gale-Ryser Theorem---a classical result on the existence of bipartite graphs with given degree distributions. The analysis suggests that this theorem can be used to derive the same {two-signal,} binary state feasibility result as our Corollary~\ref{cor:feasible} in  Appendix~\ref{sec:feasibility}. \cite*{cichomski2022combinatorial} rely on the connection to bipartite graphs to prove a tight upper bound on the probability that two beliefs induced by private private signals differ by more than some given $\delta$. 
For the question of feasibility without  the privacy constraint, see, e.g., \cite{dawid1995coherent, burdzy2019contradictory, burdzy2020bounds, arieli2020feasible, cichomski2021maximal}.

{Particular examples of private private signals have appeared in the social learning literature \citep*{gale1996have, ccelen2004distinguishing, ccelen2004observational} and in political economy~\citep*{ely2023ruth}. 
These signals arise as} the worst-case information structure for the auctioneer in some problems of robust mechanism design: see \cite*{bergemann2017first} and \cite*{brooks2021optimal}. Private private signals also appear as counterexamples of information aggregation in financial markets: see the discussion in \cite*{ostrovsky2012information} and similar observations  in the computer science literature \citep*{feigenbaum2003computation}.

{In a follow-up paper, \cite*{strack2023privacy} consider a version of our problem of optimal privacy-preserving recommendation and generalize the analysis in a number of directions. Most importantly, they show that the result on the existence of a dominant recommendation---obtained in our paper for the case of a binary state---extends to a real-valued state if the decision-maker's objective is a function of the posterior mean. The realistic scenario where a recommender gets a noisy signal about a binary state reduces to this setting by treating the recommender's belief as a new state variable. We further discuss the relation to their paper after presenting our results; see Footnote~\ref{fn:strack}.  
An alternative to the Blackwell perspective developed in \cite*{strack2023privacy} and in our paper is to fix a particular objective, which is a common approach in the computer science literature. For example, \cite*{li2018strong} consider maximizing mutual information between the recommendation and the state and construct recommendations that are approximately optimal within a logarithmic factor.}

{Our problem of optimal privacy-preserving recommendation admits an alternative fairness interpretation within the 
field of algorithmic fairness, an active area of research at the interface of economics and computer science; see, e.g., \cite{liang2022algorithmic,kleinberg2018algorithmic,rambachan2020economic,mehrabi2021survey,barocas-hardt-narayanan} and references therein.
This literature is concerned with designing AI-powered recommendation systems that avoid discrimination when handling legally protected attributes (e.g., loan applicants' gender or race). 
Our privacy requirement is equivalent to the fairness constraint of \emph{statistical parity}, one of the most popular fairness principles \citep{kleinberg2016inherent, aswani2022optimization, dwork2012fairness}. It requires that the distribution of the recommendations (e.g., the fraction of {the population} with a given level of credit score) has to be the same across sub-populations with different values of protected attributes.\footnote{The requirement of statistical parity is close in spirit to the concept of disparate impact from the US law, which forbids substantial statistical differences in outcomes for groups with different values of protected attributes.} In other words, the recommendation and the collection of protected attributes have to be independent of each other, forming a private private information structure. Thus, our results on optimal privacy-preserving recommendation provide a characterization of optimal recommendation systems satisfying statistical parity. Moreover, our framework allows one to quantify the efficiency loss due to the fairness constraint.}

{A popular notion of differential privacy from computer science formalizes privacy in another context; see~\cite*{dwork2014algorithmic} and \cite*{pai2013privacy} for  surveys. Each member of a population already possesses private information, and a social planner aims to compute an objective depending on this information. Differential privacy ensures that this computation does not expose individual private information too much. So, differential privacy is an approximate ``vertical'' privacy notion. By contrast, our paper introduces an exact ``horizontal'' privacy notion, capturing a social planner who possesses information and aims to distribute it in a private way across the population.
Concepts resembling differential privacy appeared earlier in the economic literature in the context of exchange economies \citep*{gul1992asymptotic} and have been recently brought to mechanism design \citep*{eilat2021bayesian} and information design \citep*{schmutte2022information}.}

As mentioned above, our work is related to the mathematics of sets of uniqueness and mathematical tomography \citep*{lorentz1949problem, fishburn1990sets, kellerer1993uniqueness}. These techniques have been applied in economics, for example, by \cite*{gershkov2013equivalence} to show the equivalence of Bayesian and dominant strategy implementation in an environment with linear utilities and one-dimensional types.

\section{Model}
\label{sec:model}
{A state of nature $\omega$ is a random element of $\Omega = \{0,1,\ldots,m-1\}$
with a full-support prior distribution. Consider a collection of $n\geq 2$ signals $s_1,\ldots, s_n$ about $\omega$.} 
We call the tuple $\mathcal{I} = (\omega,s_1,\ldots,s_n)$ an \emph{information structure}. Formally, fix a standard nonatomic Borel probability space  $(X,\Sigma,\mathbb{P})$, and let $\omega,s_1,\ldots, s_n$ be random variables  defined on this space  that take values in $\Omega \times S_1 \times \cdots \times S_n$, where each $S_i$ is a measurable space.\footnote{An {equivalent and more common} approach would be to define an information structure as a joint distribution over $\Omega \times S_1 \times \cdots \times S_n$. {In our paper, it is notationally convenient to assume that all information structures are defined on the same probability space $(X,\Sigma,\mathbb{P})$ so that the joint distributions of signals from different structures can be considered naturally.}}  The marginal distribution of $\omega$ is the prior over the state. 

Denote by $p(s_i)$ the posterior {belief about $\omega$ induced by observing $s_i$.}  Formally, $p(s_i)$ is the random variable taking value in $\Delta(\Omega)$ given by $p(s_i)(k) = \Pr{\omega=k}{s_i}$. In the case of a binary state (i.e., when $\Omega=\{0,1\}$), we let $p(s_i)$ take value in $[0,1]$ by setting $p(s_i) = \Pr{\omega=1}{s_i}$. 

\begin{definition}
We say that $\mathcal{I}=(\omega,s_1,\ldots,s_n)$ is a \emph{private private information structure} if $(s_1,\ldots,s_n)$ are independent random variables.
\end{definition}
Private private signals should not be confused with \emph{conditionally} independent signals, where $(s_1,\ldots,s_n)$ are independent given $\omega$.\footnote{{For instance, consider a binary state $\omega$, distributed uniformly over $\{0,1\}$, and two agents observing conditionally independent binary signals $s_1,s_2\in \{0,1\}$ that match the state with probability 3/4, i.e., $\bP[\omega=s_i\mid s_i]=3/4$. Then, each  signal contains information about the other signal: when an agent observes a high signal, she becomes more confident that her peer also got a high signal. Indeed, by Bayes formula, $\bP(s_j=1\mid s_i=1)=5/8$ for $i\ne j$.}}
 As a simple example of a private private information structure {with two signals}
 and a binary state, let $s_1,s_2$ be independently and uniformly distributed  on $[0,1]$, and let $\omega$ be the indicator of the event that $s_1+s_2 > 1$, as illustrated in Figure~\ref{fig:uniform}. The  distribution of $(s_1,s_2)$ conditioned on $\omega=1$ is the uniform distribution on the upper right triangle of the unit square. Conditioned on $\omega=0$, $(s_1, s_2)$ have the uniform distribution on the bottom left triangle. Note that  the posterior beliefs are $p(s_i)=s_i$ in this information structure, so both
 {signals are
 strictly informative.}  
 While the two signals are independent, they are not conditionally independent given the state $\omega$.

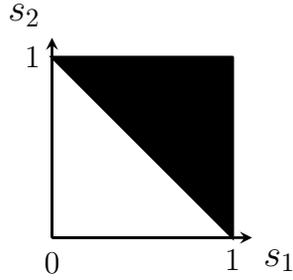
\begin{figure}[ht]
\begin{center}
\begin{tikzpicture}[scale=0.4, line width=1pt]
\draw (0,0) -- (6,0)--(6,6)--(0,6)--(0,0);

\draw [->,>=stealth,gray] (0,0) -- (6.65,0);
\draw [->,>=stealth,gray] (0,0) -- (0,6.65);
\node[below right] at (6.65,0) {\large $s_1$};
\node[above left] at (0,6.65) {\large $s_2$}; 

\node[below] at (6,0) {$1$};
\node[left] at (0,6) {$1$};
\node[below] at (0,-0.1) {$0$};

\filldraw[black] (6,6)--(6,0)--(0,6)--(6,6);

\end{tikzpicture}
\end{center}
\caption{The pair of signals $(s_1,s_2)$ is uniformly distributed on the unit square, with $\omega=1$ in the black area  and $\omega=0$ in the white area. The induced posteriors $p(s_1),p(s_2)$ coincide with the signals.
\label{fig:uniform}
}
\end{figure}

This paper focuses on characterizing the private private signals that are maximally informative,
formalized through the concept of \emph{Blackwell-Pareto optimality} of private private information structures. For the 
{case of one signal}
($n=1$), recall that an information structure $(\omega,s)$ \emph{Blackwell dominates} $(\omega,\hat s)$ if for every convex $\varphi \colon \Delta(\Omega) \to \R$ it holds that
    $\E{\varphi(p(s))} \geq \E{\varphi(p(\hat s))}.$
    
This notion captures a {uniform} sense in which $s$ contains more information about $\omega$ than $\hat s$ does: in any decision problem, an agent maximizing expected utility {performs weakly better when observing $s$ than observing $\hat s$.}

For more than one {signal,}
our next definition introduces a partial order on private private information structures that captures {Blackwell dominance for each signal.}
\begin{definition}\label{def_dominance}
Let $\mathcal{I}=(\omega,s_1,\ldots,s_n)$ and $\hat{ \mathcal{I}}=(\omega,\hat s_1,\ldots, \hat s_n)$ be private private information structures. We say that $\mathcal{I}$ \emph{dominates} $\hat{ \mathcal{I}}$, and write $\mathcal{I} \succeq \hat{ \mathcal{I}}$, if for every $i$ it holds that $(\omega,s_i)$ Blackwell dominates $(\omega,\hat s_i)$. We say that $\mathcal{I}$ and $\hat{ \mathcal{I}}$ are \emph{equivalent} if $\mathcal{I} \succeq\hat{ \mathcal{I}}$ and $\hat{ \mathcal{I}} \succeq \mathcal{I}$.
\end{definition}
It follows from this definition that $\mathcal{I}$ is equivalent to $\hat{\mathcal{I}}$ if and only if, for each $i$, the distributions of $p(s_i)$ and $p(\hat s_i)$ coincide. Thus we can partition the set of private private information structures into equivalence classes, with each class represented by~$n$ distributions $(\mu_1,\ldots,\mu_n)$ on $\Delta(\Omega)$.  

Figure~\ref{fig:34} illustrates another example of a private private information structure,  where the signals are again uniform on $[0,1]$, but each {signal induces
beliefs $\nicefrac{1}{4}$ or $\nicefrac{3}{4}$ equally likely.}
Thus this structure is equivalent to a structure {with binary signals.}
More generally, a structure $(\omega,s_1,\ldots,s_n)$ is always equivalent to the ``direct revelation'' structure $(\omega,p(s_1),\ldots,p(s_n))$ in which {each signal $s_i$ is replaced with the posterior belief it induces.}

\begin{figure}
\begin{center}
\begin{tikzpicture}[scale=0.4, line width=1pt]
\filldraw[black] (3,3)--(6,3)--(6,6)--(3,6)--(3,3);

\filldraw[black] (0,3)--(1.5,3)--(1.5,4.5)--(0,4.5)--(0,3);
\filldraw[black] (1.5,4.5)--(3,4.5)--(3,6)--(1.5,6)--(1.5,4.5);

\filldraw[black] (3,0)--(4.5,0)--(4.5,1.5)--(3,1.5)--(3,0);
\filldraw[black] (4.5,1.5)--(6,1.5)--(6,3)--(4.5,3)--(4.5,1.5);

\draw (0,0) -- (6,0)--(6,6)--(0,6)--(0,0);

\draw [->,>=stealth,gray] (0,0) -- (6.65,0);
\draw [->,>=stealth,gray] (0,0) -- (0,6.65);
\node[below right] at (6.65,0) {\large $s_1$};
\node[above left] at (0,6.65) {\large $s_2$}; 

\node[below] at (6,0) {$1$};
\node[left] at (0,6) {$1$};
\node[below] at (0,-0.1) {$0$};

\end{tikzpicture}
\end{center}
\caption{The pair of signals $(s_1,s_2)$ is uniformly distributed on the unit square, with $\omega=1$ in the black area  and $\omega=0$ in the white area. The induced posteriors $p(s_1),p(s_2)$ are binary, and equally likely to be either \nicefrac{1}{4} or $\nicefrac{3}{4}$. {The posterior $p(s_2)$ is equal to $3/4$ on the top half on the square and to $1/4$ on the bottom half. Hence, $p(s_2)$ has the same distribution even conditioned on  $s_1$, and so the induced second-order and higher-order beliefs are trivial.}
\label{fig:34}}

\end{figure}

We use the concept of dominance to define Blackwell-Pareto optimality: which private private information structures provide a maximal amount of information, 
so that more information cannot be supplied without violating privacy? 
\begin{definition}
We say that a private private information structure $\mathcal{I}$ is \emph{Blackwell-Pareto optimal} if, for every private private information structure $\hat{ \mathcal{I}}$ such that  $\hat{ \mathcal{I}} \succeq \mathcal{I}$, the structure  $\hat{ \mathcal{I}}$ is equivalent to  $\mathcal{I}$.
\end{definition}
In other words, $\mathcal{I}$ is Blackwell-Pareto optimal if there is no private private information structure $\hat{ \mathcal{I}}$ {where each signal is as informative (in the Blackwell sense) as in $\mathcal{I}$ and at least one is strictly more informative. Consider $n$ decision makers each observing one of the signals. In this interpretation,
Blackwell-Pareto optimality captures a notion of Pareto optimality that is robust across decision problems. 
Indeed, a private private structure is Blackwell-Pareto optimal if, regardless of the decision problems agents face, their utilities cannot be Pareto improved within the class of private private structures.
{A priori, one could worry that there are very few Blackwell-Pareto optimal structures, in the sense that for any structure, one can always find a structure that is dominating, perhaps by very little. Lemma~\ref{lem:pareto-dominated} in the Appendix is a compactness result showing that, for every private private information structure, there exists a weakly dominating Blackwell-Pareto optimal one. In particular, this lemma indicates that 
the set of Blackwell-Pareto optimal structures is rich.}

As we explain in the introduction, there is some  tension between the privacy of an information structure and its informativeness. For example, {consider two agents observing a pair of signals.} The most informative structure from the point of view of agent 1 is the one where $s_1$ completely reveals the state, i.e., $p(s_1) = \delta_\omega$. Likewise, agent 2 would benefit most from a structure where $s_2$ perfectly reveals the state. But then $p(s_1)=p(s_2)$, and so $s_1$ and $s_2$ are not independent.  The question is thus: what are the ways to maximally inform the agents, while still maintaining privacy?

\section{Blackwell-Pareto Optimality and Conjugate Distributions}\label{sec_Pareto_for_two}
The question of Blackwell-Pareto optimality of private private information structures is already non-trivial in the case of two
{signals} 
and a binary state. For example, is the structure given in Figure~\ref{fig:uniform} Blackwell-Pareto optimal? What about the structure in Figure~\ref{fig:34}? In this section, we give a simple description of the Blackwell-Pareto frontier, making it easy to check if a structure is Blackwell-Pareto optimal. In particular, our results imply that the structure in Figure~\ref{fig:uniform} is Blackwell-Pareto optimal while the one in Figure~\ref{fig:34} is not.

To state this result, we   introduce \emph{conjugate distributions} on $[0,1]$. Let $F \colon [0,1] \to [0,1]$ be the cumulative distribution function of a probability measure in $\Delta([0,1])$. The associated \emph{quantile function}, which we denote by $F^{-1}$, is given by
\begin{align}
\label{eq:f-inv}
    F^{-1}(x) = \min\{y \,:\, F(y) \geq x\}.
\end{align}
Since cumulative distribution functions are right-continuous, this minimum indeed exists, and so $F^{-1}$ is well defined. When $F$ is the cumulative distribution function of a  full support and  nonatomic measure, then $F$ is a bijection and $F^{-1}$ is its inverse. More generally, $F^{-1}(x)$ is the smallest number $y$ such that an $x$-fraction of the population has values less than or equal to $y$.

\begin{definition}
The \emph{conjugate} of a cumulative distribution function $F \colon [0,1] \to [0,1]$ is the function $\hat F \colon [0,1] \to [0,1]$, which is given by
\begin{align*}
    \hat F(x) = 1-F^{-1}(1-x).
\end{align*}
\end{definition}
Graphically,  $(x,y)$ is on the graph of $F$ if and only if $(1-y,1-x)$ is on the graph of $\hat F$: in other words, $\hat F$ is the reflection of $F$ around the anti-diagonal of the unit square. An example is depicted in Figure~\ref{fig:conjugate}. 

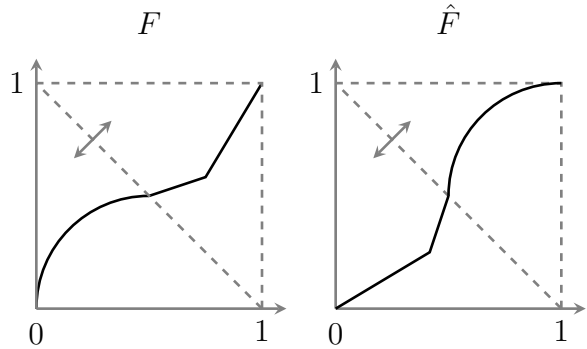
\begin{figure}
\begin{center}
\begin{tikzpicture}[scale=0.5, line width=1pt]

\draw (0,0) arc (180:90:3);
\draw (3,3) -- (4.5,3.5) -- (6,6);

\draw [->,>=stealth,gray] (0,0) -- (6.65,0);
\draw [->,>=stealth,gray] (0,0) -- (0,6.65);
\draw [dashed,gray] (6,0) -- (6,6) -- (0,6);
\draw [dashed,gray] (6,0) -- (0,6);
\draw [<->,>=stealth,gray] (1,4) -- (2,5);

\node[below] at (6,0) {$1$};
\node[left] at (0,6) {$1$};
\node[below] at (0,-0.1) {$0$};

\node[above] at (3,7) {$F$};

\end{tikzpicture}
\begin{tikzpicture}[scale=0.5, line width=1pt]
\draw [->,>=stealth,gray] (0,0) -- (6.65,0);
\draw [->,>=stealth,gray] (0,0) -- (0,6.65);
\draw [dashed,gray] (6,0) -- (6,6) -- (0,6);
\draw [dashed,gray] (6,0) -- (0,6);
\draw [<->,>=stealth,gray] (1,4) -- (2,5);

\draw (0,0) -- (2.5,1.5) -- (3,3);
\draw (3,3) arc (180:90:3);

\node[below] at (6,0) {$1$};
\node[left] at (0,6) {$1$};
\node[below] at (0,-0.1) {$0$};

\node[above] at (3,7) {$\hat F$};

\end{tikzpicture}
\end{center}
\caption{An example of a cumulative distribution function $F$ and its conjugate $\hat F$. The shapes under the curves are congruent: the transformation that maps one to the other is reflection around the anti-diagonal. Qualitatively, $F$ corresponds to the belief distribution of a more informative signal, and $\hat F$ corresponds to that of a less informative signal (because the former assigns less mass to posterior beliefs near 0.5).
\label{fig:conjugate}}

\end{figure}

As we show in the Appendix (Claim~\ref{clm:conjugate}), $\hat F$ is also a cumulative distribution function. Thus, given a measure $\mu \in \Delta([0,1])$, we can define its conjugate measure $\hat \mu \in \Delta([0,1])$ as the unique measure whose cumulative distribution function is the conjugate of the cumulative distribution function of $\mu$. It is easy to verify that the conjugate of $\hat \mu$ is again $\mu$.

The main result of this section is that Blackwell-Pareto optimality can be characterized in terms of conjugates.
\begin{theorem}
\label{thm:binary-pareto}
For a binary state $\omega$ and two {signals,} 
a private private information structure $\mathcal{I} = (\omega,s_1,s_2)$ is Blackwell-Pareto optimal if and only if the distributions of beliefs $p(s_1)$ and $p(s_2)$ are conjugates. 
\end{theorem}
{The essence of the proof is to show that every Blackwell-Pareto optimal structure is equivalent to a structure of the form described in Figures~\ref{fig:uniform} and~\ref{fig:induce_conjugates}: signals are independent and distributed uniformly on $[0,1]^2$, and there is an upward-closed\footnote{{By upward-closed sets, we mean sets that, with each point, also contain all points with higher or equal coordinates.}} subset $A \subseteq [0,1]^2$ such that $\omega=1$ whenever  $(s_1,s_2)\in A$.  Since $A$ is upward-closed, the graphs of the cumulative distribution functions of the beliefs induced by the two signals are both given by the boundary of $A$ (up to reflections) and are easily seen to be conjugates.
The formal} proof of Theorem~\ref{thm:binary-pareto} combines  our more general characterization of Blackwell-Pareto optimality in the {$n$-signal} case (Theorem~\ref{thm:pareto}) together with Theorem~\ref{th_Lorentz}, a classical result of \cite*{lorentz1949problem}  about so-called ``sets of uniqueness,'' which we discuss in detail in \S\ref{sec_Pareto_via_tomography}; these are subsets of $[0,1]^n$ that are uniquely determined by their projections to each of the $n$ axes. 

{A consequence of Theorem~\ref{thm:binary-pareto} is that there are many private private structures on the Blackwell-Pareto frontier; indeed there is such a structure for each pair of conjugate distributions. To see this, we apply an argument that is similar to the one used in the proof sketch above. Given a pair of conjugate cumulative distribution functions $F$ and $\hat{F}$, choose $(s_1,s_2)$ uniformly from the unit square, and let $\omega=1$ be the event that $s_2\geq \hat{F}(1-s_1)$. 
A simple calculation shows that
$\hat{F}(1-s_1)$ is equal to the posterior $p(s_1)$ and has 
 the distribution $F$, and $p(s_2)$ has the distribution $\hat F$. By Theorem~\ref{thm:binary-pareto}, this private private structure is Pareto optimal.  Figure~\ref{fig:induce_conjugates} illustrates this construction.}

Figure~\ref{fig:conjugate} suggests that on the Blackwell-Pareto frontier,  when $s_1$ is very  informative, $s_2$ must be very uninformative. We formalize this in the following claim, which is proved in the Appendix: 
\begin{corollary}\label{prop:comparative}
Suppose that $\omega$ is binary and that  $(\omega,s_1,s_2)$ and $(\omega,t_1,t_2)$ are Blackwell-Pareto optimal private private information structures. If $t_1$ dominates $s_1$, then $t_2$ is dominated by $s_2$.
\end{corollary}

\begin{figure}
\begin{center}
\begin{tikzpicture}[scale=0.4, line width=1pt]
\draw [->,>=stealth,gray] (0,0) -- (6.65,0);
\draw [->,>=stealth,gray] (0,0) -- (0,6.65);
\draw [dashed,gray] (6,0) -- (6,6) -- (0,6);
\filldraw (0,6) arc (90:0:3) -- (3,3) -- (3.5,1.5) -- (6,0) -- (6,6) -- (0,6);

\node[below] at (6,0) {$1$};
\node[left] at (0,6) {$1$};
\node[below] at (0,-0.1) {$0$};
\node[below right] at (6.65,0) {\large $s_1$};
\node[above left] at (0,6.65) {\large $s_2$};

\end{tikzpicture}
\end{center}
\caption{A private private information structure, where the beliefs $p(s_1)$ and $p(s_2)$ are distributed according to the pair of conjugate distributions $F$ and $\hat{F}$ from Figure~\ref{fig:conjugate}: the signals are uniform on $[0,1]^2$, and $\omega=1$ if and only if $s_2\geq \hat{F}(1-s_1)$ (black region).\label{fig:induce_conjugates} 
}

\end{figure}
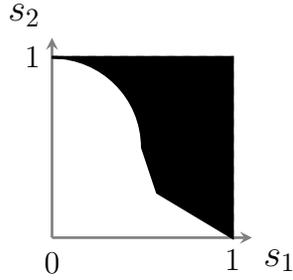

We can use Theorem~\ref{thm:binary-pareto} to understand whether the structures of Figures~\ref{fig:uniform} and~\ref{fig:34} are optimal. The uniform distribution on $[0,1]$ is its own conjugate. Hence, using Theorem~\ref{thm:binary-pareto}'s belief conjugacy test, we can conclude that   Figure~\ref{fig:uniform}'s information structure is Blackwell-Pareto optimal.

To understand the structure of Figure~\ref{fig:34},  consider, more generally, a discrete distribution $\mu$ on $[0,1]$ with $k$ atoms. Its conjugate $\hat{\mu}$ is also atomic: each atom of $\mu$ with weight $w$ corresponds to an interval of zero  mass with length $w$ for $\hat{\mu}$ and, symmetrically, each interval of length $l$ carrying no atoms in $\mu$ becomes an atom of weight $l$ in~$\hat{\mu}$ (see Figure~\ref{fig:discrete_conjugates}).
In particular, $\hat \mu$ has either $k-1$, $k$ or $k+1$ atoms, corresponding to the cases that (1) $\mu$ places positive mass on both $0$ and $1$, (2) $\mu$ places positive mass on exactly one of $\{0,1\}$, or (3) $\mu$ places zero mass on $\{0,1\}$. We obtain the following corollary.

\begin{figure}
\begin{center}
\begin{tikzpicture}[scale=0.5, line width=1pt]

\draw [->,>=stealth,gray] (0,0) -- (6.65,0);
\draw [->,>=stealth,gray] (0,0) -- (0,6.65);
\draw [dashed,gray] (6,0) -- (6,6) -- (0,6);

\draw (0,0) -- (0.6,0);
\draw (0.6,1.2)--(2.4,1.2);
\draw (2.4,3)--(3.6,3);
\draw (3.6,6) -- (6,6);
\draw [dashed,gray] (0.6,0) -- (0.6,1.2);
\draw [dashed,gray] (2.4,1.2) -- (2.4,3);
\draw [dashed,gray] (3.6,3) -- (3.6,6);
\node[below] at (0.6,0) {$0.1$};
\node[below] at (2.4,0) {$0.4$};
\node[below] at (3.6,0) {$0.6$};
\node[left] at (0,1.2) {$0.2$};
\node[left] at (0,3) {$0.5$};

\filldraw[color=black!100, fill=black!100,  thick] (0.6,1.2) circle (0.1);
\filldraw[color=black!100, fill=black!100,  thick] (0.6,1.2) circle (0.1);\filldraw[color=black!100, fill=black!100,  thick] (2.4,3) circle (0.1);\filldraw[color=black!100, fill=black!100,  thick] (3.6,6) circle (0.1);

\filldraw[color=black!100, fill=white!100,  thick] (0.6,0) circle (0.1);
\filldraw[color=black!100, fill=white!100,  thick] (2.4,1.2) circle (0.1);
\filldraw[color=black!100, fill=white!100,  thick] (3.6,3) circle (0.1);

\node[below] at (6,0) {$1$};
\node[left] at (0,6) {$1$};

\node[above] at (3,7) {$F$};

\end{tikzpicture}
\qquad
\begin{tikzpicture}[scale=0.5, line width=1pt]

\draw [->,>=stealth,gray] (0,0) -- (6.65,0);
\draw [->,>=stealth,gray] (0,0) -- (0,6.65);
\draw [dashed,gray] (6,0) -- (6,6) -- (0,6);

\draw (0,2.4) -- (3,2.4);
\draw (3,3.6)--(4.8,3.6);
\draw (4.8,5.4)--(6,5.4);
\draw [dashed,gray] (3,2.4) -- (3,3.6);
\draw [dashed,gray] (4.8,3.6) -- (4.8,5.4);
\node[below] at (3,0) {$0.5$};
\node[below] at (4.8,0) {$0.8$};
\node[left] at (0,2.4) {$0.4$};
\node[left] at (0,3.6) {$0.6$};
\node[left] at (0,5.4) {$0.9$};

\filldraw[color=black!100, fill=black!100,  thick] (3,3.6) circle (0.1);
\filldraw[color=black!100, fill=black!100,  thick] (4.8,5.4) circle (0.1);\filldraw[color=black!100, fill=black!100,  thick] (6,6) circle (0.1);

\filldraw[color=black!100, fill=white!100,  thick] (3,2.4) circle (0.1);
\filldraw[color=black!100, fill=white!100,  thick] (4.8,3.6) circle (0.1);
\filldraw[color=black!100, fill=white!100,  thick] (6,5.4) circle (0.1);

\node[below] at (6,0) {$1$};
\node[left] at (0,6) {$1$};
\node[below] at (0,-0.1) {$0$};

\node[above] at (3,7) {$\hat F$};

\end{tikzpicture}
\end{center}
\caption{The conjugate of a discrete distribution $F$ with three atoms at $0.1$, $0.4$, and $0.6$. Each atom becomes an interval of zero measure with the length equal to the atom's weight, and vice versa. Since $F$ does not have atoms at the endpoints of $[0,1]$, the number of intervals of zero measure exceeds the number of atoms by one, so its conjugate $\hat{F}$ has four atoms at $0$, $0.5$, $0.8$, and $1$. \label{fig:discrete_conjugates} 
}

\end{figure}
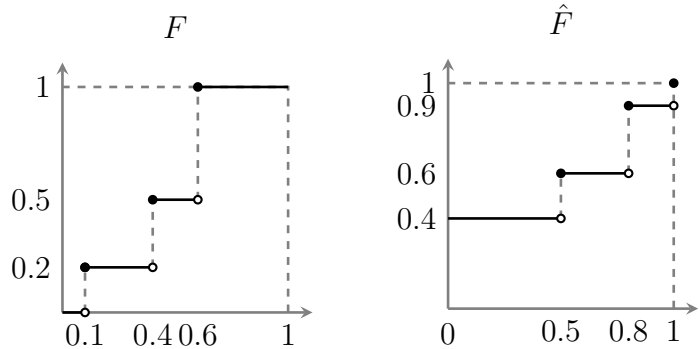
\begin{corollary}
     Two discrete distributions with the same number of atoms can only be conjugates if each of them assigns a non-zero weight to exactly one of $\{0,1\}$. 
\end{corollary}
We conclude that the information structure of Figure~\ref{fig:34},  where both signals induce  beliefs  $\nicefrac{1}{4}$ or $\nicefrac{3}{4}$ is not Blackwell-Pareto optimal.

\section{Blackwell-Pareto Optimality and Sets of Uniqueness}\label{sec_Pareto_via_tomography}

In this section, we study Blackwell-Pareto optimality of private private information structures in the general setting of $n$ {signals} and a state $\omega$ that takes value in $\Omega = \{0,\ldots,m-1\}$. Our main result shows that Blackwell-Pareto optimality can be characterized using ``sets of uniqueness'': subsets of $[0,1]^n$ that are uniquely determined by their projections to the $n$ axes.

As a first step, we show that it is without loss of generality to focus on information structures that are constructed similarly to the examples we have considered above: each $s_i$ is distributed uniformly on $[0,1]$, and each value of $\omega$ corresponds to some subset of $[0,1]^n$. That is, $\omega$ is a deterministic function of the signals (see Figures~\ref{fig:uniform} and \ref{fig:34}, as well as Figure \ref{fig_nonunique} in the Appendix).

More formally, let $\cA = (A_0,\ldots,A_{m-1})$ be a \emph{partition} of $[0,1]^n$ into measurable sets. That is, each $A_k$ is a measurable subset of $[0,1]^n$, the sets in $\cA$ are disjoint, and their union is equal to~$[0,1]^n$. 
\begin{definition}
The private private information structure \emph{associated} with a partition $\cA = (A_0,\ldots,A_{m-1})$ is $\mathcal{I} = (\omega,s_1,\ldots,s_n)$ where $(s_1,\ldots,s_n)$ are distributed uniformly on $[0,1]^n$ and $\{\omega = k\}$ is the event that $\{(s_1,\ldots,s_n) \in A_k\}$.
\end{definition}
Note that if $\cA$ and $\cA'$ are partitions such that each symmetric difference  $A_k \triangle A_k'$ has zero Lebesgue measure, then both are associated with the same information structure, in the strong sense that the joint distributions of $(\omega,s_1,\ldots,s_n)$ coincide. Accordingly, we henceforth consider two subsets of $[0,1]^n$ to be equal if they only differ on a zero-measure set. 

Private private information structures associated with partitions have two important properties that make them useful. First, signals are uniform on $[0,1]$, so that one does not need to consider abstract signal spaces. Second, the collection of signals $(s_1,\ldots,s_n)$ reveals the state. The next result shows that---up to equivalence---these assumptions are without loss of generality.
\begin{proposition}
\label{prop:associated}
For every private private information structure $\mathcal{I}$, there exists a partition $\cA$ whose associated information structure $\mathcal{I}'$ is equivalent to $\mathcal{I}$.
\end{proposition}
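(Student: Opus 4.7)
The plan is to produce, via uniformization and derandomization, an equivalent structure in which each signal is uniform on $[0,1]$ and $\omega$ is a deterministic function of the signals. First, by the standard direct-revelation argument I may assume $s_i=p(s_i)\in\Delta(\Omega)$ with distribution $\mu_i$. Using the nonatomicity of $(X,\Sigma,\mathbb{P})$, I enrich the probability space with, for each $i$, independent uniform random variables $u_i,v_i\in[0,1]$ such that $s_i=g_i(u_i)$ almost surely for some measurable $g_i\colon[0,1]\to\Delta(\Omega)$ whose pushforward of Lebesgue measure is $\mu_i$ (the probability integral transform, with the residual randomness in $u_i$ used to break ties inside any atoms of $\mu_i$); the $v_i$'s are chosen independent of $(u,\omega)$. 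Then $u_1,\dots,u_n$ are jointly independent and uniform, $\Pr{\omega=k}{u_i}=g_i(u_i)_k$, and $\tilde\pi(u):=\Pr{\omega=\cdot}{u}$ satisfies $\int\tilde\pi_k(u)\,du_{-i}=g_i(u_i)_k$ almost everywhere.

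The key step is to replace the random $\omega\mid u\sim\tilde\pi(u)$ with a deterministic function of $u$ without disturbing any marginal posterior. Define the hash $V=(v_1+\cdots+v_n)\bmod 1$, uniform on $[0,1]$. The crucial property is that $V\perp(u,v_i)$ for each $i$ separately: conditional on $(u,v_i)$, $V$ is a constant shift of $\sum_{j\neq i}v_j\bmod 1$, and a modular sum containing an independent uniform summand is itself uniform. Set
\[
\omega^*(u,v)=k \iff V\in\Bigl[\,{\textstyle\sum_{j<k}\tilde\pi_j(u)},\;{\textstyle\sum_{j\le k}\tilde\pi_j(u)}\Bigr).
\]
Since $V$ is independent of $u$ and uniform, $\Pr{\omega^*=k}{u}=\tilde\pi_k(u)$; averaging over $u_{-i}$ while invoking $V\perp(u,v_i)$ yields $\Pr{\omega^*=k}{u_i,v_i}=g_i(u_i)_k$.

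Finally, encode each pair $(u_i,v_i)\in[0,1]^2$ as a single uniform variable $\tilde u_i=\phi(u_i,v_i)\in[0,1]$ via a measure-preserving Borel isomorphism $\phi$. The $\tilde u_i$'s are independent and uniform, $\omega^*$ is a measurable function of $(\tilde u_1,\dots,\tilde u_n)$, and the partition $A_k=\{\tilde u\in[0,1]^n:\omega^*(\tilde u)=k\}$ defines $\cA$. The associated structure is equivalent to $\mathcal{I}$ because $\Pr{\omega^*}{\tilde u_i}=g_i(u_i)$, which has distribution $\mu_i$ when $\tilde u_i$ is uniform. The main obstacle throughout is preserving the marginal posterior distribution at every signal simultaneously, and this is exactly what the symmetric hash $V=\sum_j v_j\bmod 1$ accomplishes: conditional on any individual $(u_i,v_i)$ it behaves like an external uniform draw, so the substitution $\omega\leadsto\omega^*$ does not distort any coordinate's posterior distribution.
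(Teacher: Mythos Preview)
Your proof is correct and follows the same blueprint as the paper: first uniformize the signals, then derandomize $\omega$ by secret-sharing a resolving uniform via modular addition, so that each agent's extra share is individually uninformative about the state and the other signals. The only variation is that the paper constructs a single resolving uniform $t$ and splits it into two shares $r_1,r_2=\lfloor t+r_1\rfloor$ handed to agents $1$ and $2$, whereas you hand every agent an independent $v_i$ and let the resolver be the symmetric hash $V=\sum_i v_i\bmod 1$; this is an $n$-out-of-$n$ version of the same trick and is arguably cleaner since no two agents are singled out.
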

While the general proof contained in Appendix~\ref{app_proof_representation} is not constructive, for structures with a finite number of signals and a binary state, we show in Appendix~\ref{app_explicit_representation} how to construct a partition with an equivalent associated structure.

The ideas behind the proof of this proposition are the following. Using standard results,  one  can always  reparameterize the signals so that they are uniformly distributed in $[0,1]$. Thus the main challenge is to ensure that the state is determined by the signals. To this end, given signals that do not determine the state, we add a signal $t$ that resolves the remaining uncertainty, so that $\omega$ is a deterministic function of $(s_1,\ldots,s_n,t)$. Then, we use a ``secret sharing'' technique \citep[similar to][]{shamir1979share} to create a pair of independent random variables $t_1,t_2$ such that $t$ is determined by the pair $(t_1,t_2)$, but each $t_i$ is uninformative about the state and the other signals.\footnote{In the common uses of secret sharing $t,t_1,t_2$ are binary, while our construction requires that they are distributed uniformly on $[0,1]$.} We then 
{augment the signals $s_1$ and $s_2$ with} 
the additional signals $t_1$ and $t_2$, respectively. Thus the information structure $\big(\omega,(s_1,t_1),(s_2,t_2),\ldots,s_n\big)$ is equivalent to $(\omega,s_1,\ldots,s_n)$, but now the signals determine the state.\footnote{{It is instructive to compare Proposition~\ref{prop:associated} with the classical purification result of \cite*{dvoretzky1951elimination}. Given a finite set 
$\Omega = \{0, \ldots, m-1\}$, a measurable map 
$\varphi\colon [0,1]^n \to \Delta(\Omega)$, and a finite collection $F$ of 
integrable functions on $[0,1]^n$, the Dvoretzky-Wald-Wolfowitz theorem 
ensures the existence of a partition 
$\cA = (A_0, \ldots, A_{m-1})$ of $[0,1]^n$ such that
$
\int_{[0,1]^n} \varphi(x,k) f(x)\, \dd x = \int_{A_k} f(x)\, \dd x
$
for every $f \in F$ and each $k \in \Omega$. To derive 
Proposition~\ref{prop:associated} from this theorem, one would need to take 
$F$ to be the infinite collection of all univariate functions $f(x_i)$. However, the 
theorem applies only to finite collections. This obstacle also arises in 
\cite{zeng2023derandomization}, who examines how private types can be 
used to eliminate randomness in two-way communication.}
}

Proposition~\ref{prop:associated} implies that for the purpose of studying the Blackwell-Pareto optimality of private private signals, one can assume without loss of generality that information structures are always associated with partitions. In particular, the question of Blackwell-Pareto optimality can now be phrased as a question about partitions: when does a partition $\cA$ correspond to a Blackwell-Pareto optimal structure? Our next result answers this question. We state this result for the case of a binary state, as it involves significantly simpler notation; the result for a general finite state space  is given in Appendix~\ref{app:Pareto_via_uniqueness}. In the case of a binary state, a partition $\cA=(A_0,A_1)$ is determined by $A_1$, since $A_0$ is its complement. Hence, we will represent $\cA$ by a single set $A = A_1$. The information structure associated with $A$ will refer to the structure associated with the partition $(A^c,A)$.

Given a measurable set $A \subseteq [0,1]^n$, we define $n$ functions $(\alpha_1^A,\ldots,\alpha_n^A)$ that capture the projections of $A$ to the $n$ coordinate axes. Denote by $\lambda$ the Lebesgue measure on $[0,1]^{n-1}$. The projection $\alpha_i^A \colon [0,1] \to [0,1]$ of $A$ to the $i$th axis is
\begin{align*}
    \alpha_i^A(t) = \lambda(\{y_{-i} \,:\, (y_i,y_{-i}) \in A,\, y_i=t\}).
\end{align*}

If $(\omega,s_1,\ldots,s_n)$ is the information structure associated with $A$, then $\alpha_i^A(t)$ is the posterior {induced  by} 
$s_i=t$.
\begin{definition}
A measurable $A \subseteq [0,1]^n$ is a \emph{set of uniqueness} if for every measurable $B\subseteq [0,1]^n$ such that $(\alpha_1^A,\ldots,\alpha_n^A)=(\alpha_1^B,\ldots,\alpha_n^B)$,  it holds that $A=B$.
\end{definition}
Less formally, $A$ is a set of uniqueness if it is determined by the projections $(\alpha_1^A,\ldots,\alpha_n^A)$. {The black area in the left panel of Figure~\ref{fig:uniqueness} is an example of a set of uniqueness. This follows from the fact that upward-closed sets of $[0,1]^2$ are sets of uniqueness (see Theorem~\ref{th_Lorentz} below). Examples of  sets that are not a set of uniqueness are the black areas in the middle and right panels, as they are different but have the same projections on the axes.}

\begin{figure}
\begin{center}
\begin{tikzpicture}[scale=0.4, line width=1pt]
\draw [->,>=stealth,gray] (0,0) -- (6.65,0);
\draw [->,>=stealth,gray] (0,0) -- (0,6.65);
\draw [dashed,gray] (6,0) -- (6,6) -- (0,6);
\filldraw (0,6) arc (90:0:3) -- (3,3) -- (3.5,1.5) -- (6,0) -- (6,6) -- (0,6);

\end{tikzpicture}
\hskip 1 cm
\begin{tikzpicture}[scale=0.4, line width=1pt]
\filldraw[black] (3,3)--(6,3)--(6,6)--(3,6)--(3,3);

\filldraw[black] (0,3)--(1.5,3)--(1.5,4.5)--(0,4.5)--(0,3);
\filldraw[black] (1.5,4.5)--(3,4.5)--(3,6)--(1.5,6)--(1.5,4.5);

\filldraw[black] (3,0)--(4.5,0)--(4.5,1.5)--(3,1.5)--(3,0);
\filldraw[black] (4.5,1.5)--(6,1.5)--(6,3)--(4.5,3)--(4.5,1.5);

\draw (0,0) -- (6,0)--(6,6)--(0,6)--(0,0);

\draw [->,>=stealth,gray] (0,0) -- (6.65,0);
\draw [->,>=stealth,gray] (0,0) -- (0,6.65);

\end{tikzpicture}
\hskip 1 cm
\begin{tikzpicture}[scale=0.4, line width=1pt]
\filldraw[black] (3,3)--(6,3)--(6,6)--(3,6)--(3,3);

\filldraw[black] (0,4.5)--(1.5,4.5)--(1.5,6)--(0,6)--(0,4.5);
\filldraw[black] (1.5,3)--(3,3)--(3,4.5)--(1.5,4.5)--(1.5,3);

\filldraw[black] (3,1.5)--(4.5,1.5)--(4.5,3)--(3,3)--(3,1.5);
\filldraw[black] (4.5,0)--(6,0)--(6,1.5)--(4.5,1.5)--(4.5,0);

\draw (0,0) -- (6,0)--(6,6)--(0,6)--(0,0);

\draw [->,>=stealth,gray] (0,0) -- (6.65,0);
\draw [->,>=stealth,gray] (0,0) -- (0,6.65);

\end{tikzpicture}

\end{center}
\caption{{The black area in the left panel is a set of uniqueness, since it is upward-closed (Theorem~\ref{th_Lorentz}). The middle and right panels are not sets of uniqueness, since they are different but have the same projections on the axes.}\label{fig:uniqueness}}

\end{figure}

The main result of this section characterizes Blackwell-Pareto optimality in terms of sets of uniqueness.
\begin{theorem}
\label{thm:pareto}
A private private information structure is Blackwell-Pareto optimal if and only if it is equivalent to a structure associated with a set of uniqueness $A \subseteq [0,1]^n$.
\end{theorem}
{The connection between Blackwell-Pareto optimality and sets of uniqueness may seem surprising. To get some intuition, 
we first explain how we prove that Blackwell-Pareto optimal structures are associated with sets of uniqueness. Consider a private private structure  associated with a set $A$ that is not a set of uniqueness (for example, the middle panel of Figure~\ref{fig:uniqueness}). We show that this structure can be improved. Since $A$ is not a set of uniqueness, there is a set $B\ne A$ that gives rise to an equivalent structure (e.g., the shape in the right panel).
By randomizing between the two structures using an independent coin flip $t$, we arrive at another equivalent structure $(\omega,s_1,\ldots,s_n)$. In this structure, the signals do not always determine the state and the coin flip $t$ resolving this uncertainty is informative about $\omega$.
Augmenting the first signal with  $t$, we obtain a private private information structure $\big(\omega,(s_1,t),s_2,\ldots,s_n\big)$ that dominates the structure associated with $A$. 
}

Conversely, suppose the information structure associated with $A$ is not Blackwell-Pareto optimal. By considering a Blackwell-Pareto dominating information structure and garbling the signals, we can find a density $f: [0,1]^n \to [0,1]$ that is not an indicator function, but has the same marginals as $A$. We next apply a result of \cite*{gutmann1991existence}: the set of densities valued in $[0,1]$ with given marginals is a convex set whose extreme points are indicator functions. Since $f$ is not an indicator function, the corresponding convex set is not a singleton and must have at least two different extreme points. There exists some other set with the same marginals as $A$, so  $A$ is not a set of uniqueness.  

Theorem~\ref{thm:pareto} shows an equivalence between the two \emph{a priori} unrelated notions of Blackwell-Pareto optimality and sets of uniqueness; a similar result in Appendix~\ref{app:Pareto_via_uniqueness} establishes an analogous equivalence for arbitrary finite state spaces, generalizing sets of uniqueness to \emph{partitions of uniqueness}. This connection lets us use characterization results about sets of uniqueness to study Blackwell-Pareto optimality.   Sets of uniqueness have been studied since \cite*{lorentz1949problem}, who gives a simple characterization in the two-dimensional case. A version of his characterization, as we explain below, leads to Theorem~\ref{thm:binary-pareto}. Beyond the two-dimensional case, sets of uniqueness have also been more recently studied in the \emph{mathematical tomography} literature \citep[e.g.,][]{fishburn1990sets}.  We  discuss below how these newer results help us understand Blackwell-Pareto optimal structures. 

To characterize sets of uniqueness in two dimensions, we will need the following definitions. Say that $A \subseteq [0,1]^2$ is a \emph{rearrangement} of $B \subseteq [0,1]^2$ if for $i=1,2$ and every $q \in [0,1]$, the sets $\{t \in [0,1]\,:\, \alpha_i^A(t) \leq q\}$ and $\{t \in [0,1]\,:\, \alpha_i^B(t) \leq q\}$ have the same Lebesgue measure. That is, $\alpha_i^A$ and $\alpha_i^B$, when viewed as random variables defined on $[0,1]$, have the same distribution. This has a simple interpretation in terms of information structures: $A$ is a rearrangement of $B$ if and only if the two associated information structures are Blackwell equivalent.  This is immediate since in the information structure associated with $A$,  $\alpha_i^A(t)$ is the belief {induced by} 
$s_i=t$. Recall that $B \subseteq [0,1]^n$ is \emph{upward-closed} if $x=(x_1,\ldots,x_n) \in B$ implies that $y=(y_1,\ldots,y_n)\in B$ for all $y \geq x$. 
\begin{theorem}[\cite*{lorentz1949problem}]\label{th_Lorentz}

A measurable subset $A \subseteq [0,1]^2$ is a set of uniqueness if and only if it is a rearrangement of an upward-closed set.
\end{theorem}
This formulation is different than the one that appears in the paper by \cite*{lorentz1949problem}. We, therefore, show in the {Appendix~\ref{sec_proof_th1}} that it is an equivalent characterization. Theorem~\ref{thm:binary-pareto} is a consequence of Theorems~\ref{thm:pareto} and~\ref{th_Lorentz}. 

{When $n \geq 3$, the known characterizations of sets of uniqueness are more involved \citep*{kellerer1993uniqueness}. 
However, there are simple necessary and simple sufficient conditions. Upward closedness remains a necessary condition: any set of uniqueness is a rearrangement of an upward-closed set. Combining this insight with Theorem~\ref{thm:pareto}, we get the following corollary that can be seen as an extension of Theorem~\ref{thm:binary-pareto} for $n>2$ signals.
\begin{corollary}
\label{cor:necessary_pareto}
For a binary state $\omega$ and $n\geq 2$ signals, 
any Blackwell-Pareto optimal private private information structure $\mathcal{I} = (\omega,s_1,\ldots,s_n)$ is equivalent to a structure associated with an upward-closed set $A\subseteq [0,1]^n$.
\end{corollary}
Note that an upward closed set is pinned down by its frontier. By Corollary~\ref{cor:necessary_pareto}, this frontier provides a natural parametrization to the corresponding superset of Blackwell-Pareto optimal structures.

For $n \geq 3$, upward-closedness is not a sufficient condition for a set to be a set of  uniqueness. In particular, not all upward-closed sets give rise to Blackwell-Pareto optimal structures. A sufficient  condition for uniqueness \citep*{fishburn1990sets} is  to be an \emph{additive} set: this holds when there are   bounded  $h_{i}\colon [0,1]\to\mathbb{R}$ such that  $$A=\left\{x\in[0,1]^{n}:\sum_{i=1}^{n}h_{i}(x_{i})\ge0\right\}.$$
Clearly, any additive set is upward-closed up to a rearrangement making each $h_i$ non-decreasing.
In two dimensions, the two concepts are equivalent,}
and so additivity provides another characterization of the sets of uniqueness (and equivalently, of the Blackwell-Pareto optimal structures).
In higher dimensions (i.e., with three or more signals), the sufficiency of additivity implies that every additive set is associated with a Blackwell-Pareto optimal structure. With $n\ge3$, \cite*{kemperman1991sets} demonstrated that there are sets of uniqueness that are not additive. However, additivity is ``almost necessary'':  \cite*{kellerer1993uniqueness} characterizes sets of uniqueness as the closure, in a certain topology, of the class of additive sets.

In light of these results, progress in understanding Blackwell-Pareto optimal private private structures for more than two {signals} is contingent on new breakthroughs regarding these old questions.  
{Another promising} direction for future research is to characterize Blackwell-Pareto optimal private private structures  for a non-binary state and two {signals}. As we discuss in Appendix~\ref{app:Pareto_via_uniqueness}, this will require understanding {\em partitions of uniqueness} with more than two elements. To the best of our knowledge, this has not been investigated in the literature.

\section{{Application to Privacy-Preserving Recommendations}}\label{sec_optimal_recommendation}

{An informed party  wishes to disclose to an uninformed} party as much information as possible about the state $\omega$ using a message $s_2$, but must not reveal any information about a  given random variable $s_1$ that is correlated with $\omega$. In this application, we should interpret {$s_1$ 
as an additional component of the state that must be kept secret for legal or security reasons}, and where the joint distribution of $(\omega,s_1)$ is given exogenously.

As a concrete example,  suppose an uninformed company wants to learn about a decision-relevant type $\omega$ of an applicant (e.g., whether she is a good fit for a job or whether she will pay her rent on time), and an informed party (e.g., a recommender or a credit-rating company) knows both this type and a sensitive or legally protected attribute $s_1$ of the applicant that correlates with the type: this could be the applicant's private medical information, gender, or race. The informed party faces the problem of optimal privacy-preserving recommendation: convey as much information as possible about the applicant without revealing any information about her protected attribute, so that the company's downstream decision based solely on the recommendation will be independent of the protected attribute and therefore not cause disparate impact. Note that even a recommendation that does not explicitly contain the protected attribute may cause disparate impact, if it contains correlates of the attribute.  {Our analysis makes two simplifying assumptions: first, the recommender can infer the state exactly from the data\footnote{{\cite*{strack2023privacy} generalize our results dropping this assumption.}}; second, whereas the algorithmic fairness literature typically features an approximate privacy requirement~\citep*{barocas-hardt-narayanan}, we impose an exact privacy requirement.}

 A less economic (but more colorful) story is that of a government that would like to reveal a piece of intelligence $\omega$, but without revealing any information about the identity of its source $s_1$. These could be naturally correlated: for example, if $\omega$ is the location of an enemy's weapons facility and the source $s_1$ is likely to live close to it. So the government's disclosure $s_2$ should contain as much information as possible about $\omega$, while not revealing any information about $s_1$. 
 
Motivated by these examples, we give the following definition.
\begin{definition}\label{def_disclosure}
{Given an information structure $(\omega,s_1)$, a signal $s_2$ independent of $s_1$ is an \emph{optimal privacy-preserving recommendation}  if there is no $s'_2$ that is independent of $s_1$ and Blackwell dominates  $s_2$.}
\end{definition}
Note that in this definition, we take an approach that is agnostic to any decision problem faced by the receiver or, indeed, any other goal the receiver might have, such as selling this information, using it in a game, etc. Instead, we study signals that are maximal in their information content, and can be used by the receiver towards any of these goals.

The notion of an optimal privacy-preserving recommendation is closely related to Blackwell-Pareto optimality, namely, it is equivalent to requiring that $\mathcal{I} = (\omega,s_1,s_2)$ is a Blackwell-Pareto optimal private private information structure. Indeed, if $(\omega,s_1,s_2)$ is Blackwell-Pareto optimal, none of the signals can be made more informative under the privacy constraint, and so $s_2$ is an optimal privacy-preserving recommendation. The converse is  established in  Appendix~\ref{app_disclosure_proof} (Corollary~\ref{cor_disclosure_def_equivalence}).\footnote{{This corollary is non-trivial as it relies on the following result: given a sub-optimal structure $(\omega,s_1,s_2)$, there is always a dominating structure $(\omega,t_1,t_2)$ where $s_1=t_1$. A priori, one could imagine that, in some cases, only $s_1$ can be improved, or that the improvement of $s_2$ requires replacing $s_1$ with an equivalent (but not equal) signal. }}

{A natural question is whether there are many optimal signals---each appropriate for a different decision problem---or if there is one optimal signal that dominates all others.}
\begin{definition}\label{def_dominant}
{Given an information structure $(\omega,s_1)$, a signal $s_2$ independent of $s_1$ is a \emph{dominant privacy-preserving recommendation} if $s_2$ Blackwell dominates every $s_2'$ that is independent of $s_1$.}
\end{definition}
{When $\omega$ and $s_1$ are uncorrelated, a completely revealing $s_2$ is dominant. When there is correlation, a completely revealing $s_2$ is no longer feasible, and so one could expect that no dominant signal exists, i.e., decision-makers with different objectives prefer different optimal signals.
Our characterization of the Blackwell-Pareto optimal private private signals (Theorem~\ref{thm:binary-pareto}) implies that a dominant signal exists and can be easily constructed, when the state is binary.\footnote{{We call this result a theorem to highlight its importance.  The fact that the signal $s_2^\star$ inducing the conjugate distribution dominates any other $s_2$ independent of $s_1$
is a direct corollary of Theorem~\ref{thm:binary-pareto}. However, the possibility to construct such a signal $s_2^\star$ for given $s_1$ requires a proof that can be found in Appendix~\ref{app_disclosure_proof}.}}
 }
\begin{theorem}
\label{thm:disclosure}
  {For any $(\omega,s_1)$ with a binary state $\omega$, there exists a dominant privacy-preserving recommendation $s_2^\star$. The distribution of $p(s_2^\star)$ is the conjugate of the distribution of $p(s_1)$.}
\end{theorem}
{Theorem~\ref{thm:disclosure} implies that every decision maker would choose the same signal $s_2^\star$, regardless of the decision problem at hand. For example, the posterior $p(s_2^\star)$ simultaneously maximizes the mutual information with $\omega$, as well as minimizes the mean squared distance from it. This uniqueness of the optimal privacy-preserving recommendation is a rather surprising property as the Blackwell order is a partial order, and so one could expect that there are non-equivalent optimal recommendations that are all maximal and incomparable.} Indeed, in Appendix~\ref{app_non_uniqueness}, we demonstrate that uniqueness is a feature of the binary state case by considering an example with three states, binary~$s_1$, and a continuum of {non-equivalent} optimal privacy-preserving recommendations.\footnote{We conjecture that uniqueness of the optimal recommendation for two states and non-uniqueness for three or more states can be related to the fact that the Blackwell order has the lattice property only in the binary setting \citep*{bertschinger2014blackwell,de2021robust}, but we are unaware of a formal connection.}$^{,}$\footnote{\label{fn:strack}{\cite*{strack2023privacy} study a similar but subtly different optimality notion: they consider $s_2$ as a signal not just about $\omega$, but about $(\omega,s_1)$, and so $s_2$ is optimal, in their notion, if there is no $s_2'$ independent of $s_1$ that contains more information about $(\omega,s_1)$. In their setting, a dominant $s_2$ may fail to exist even for binary $\omega$. For example, if $(\omega,s_1)$ are independent and distributed uniformly on $\{-1,+1\}$, then both $s_2=\omega$ and $s_2' = \omega \cdot  s_1
$ are optimal, according to their notion, but such $s_2'$ carries no information about $\omega$ and thus is not optimal in our terms.
Their Theorem 3, which 
considers yet another optimality notion, 
generalizes our Theorem~\ref{thm:disclosure}: it can be used to show that the recommender has a dominant privacy-preserving recommendation in our sense even when only observing a noisy signal about $\omega$. 
}}

Figure~\ref{fig:disclosure} shows the dominant privacy-preserving recommendation when the two states are equally likely and $s_1$ is a symmetric binary signal that matches the state with probability $\nicefrac{3}{4}$. The {dominant} recommendation $s_2^\star$ has the following form: it completely reveals the state with probability $\nicefrac{1}{2}$, and gives no information with the remaining probability---that is, $s_2^\star$ is trinary. More generally, when the states are equally likely and $s_1$ is a symmetric binary signal that matches the state with probability $r\in [\nicefrac{1}{2},1]$, the {dominant} recommendation will be trinary. It completely reveals the state with probability~$2(1-r)$, and gives no information with the complementary probability. Thus, as   the correlation between $s_1$ and $\omega$ increases, the {dominant} recommendation becomes less informative.

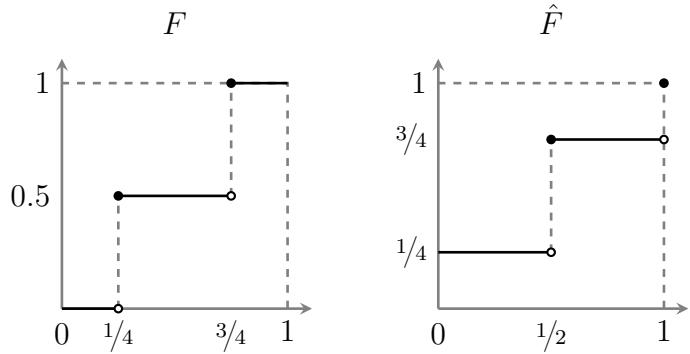
\begin{figure}[ht]
\begin{center}

\begin{tikzpicture}[scale=0.5, line width=1pt]

\draw [->,>=stealth,gray] (0,0) -- (6.65,0);
\draw [->,>=stealth,gray] (0,0) -- (0,6.65);
\draw [dashed,gray] (6,0) -- (6,6) -- (0,6);

\draw (0,0) -- (1.5,0);
\draw (1.5,3)--(4.5,3);
\draw (4.5,6)--(6,6);
\draw [dashed,gray] (1.5,0) -- (1.5,3);
\draw [dashed,gray] (4.5,3) -- (4.5,6);
\node[left] at (0,3) {$0.5$};
\node[below] at (1.5,0) {$\nicefrac{1}{4}$};
\node[below] at (4.5,0) {$\nicefrac{3}{4}$};

\filldraw[color=black!100, fill=black!100,  thick] (1.5,3) circle (0.1);
\filldraw[color=black!100, fill=black!100,  thick] (4.5,6) circle (0.1);

\filldraw[color=black!100, fill=white!100,  thick] (1.5,0) circle (0.1);
\filldraw[color=black!100, fill=white!100,  thick] (4.5,3) circle (0.1);

\node[below] at (6,0) {$1$};
\node[left] at (0,6) {$1$};
\node[below] at (0,-0.1) {$0$};

\node[above] at (3,7) {$F$};
\end{tikzpicture}
\qquad
\begin{tikzpicture}[scale=0.5, line width=1pt]

\draw [->,>=stealth,gray] (0,0) -- (6.65,0);
\draw [->,>=stealth,gray] (0,0) -- (0,6.65);
\draw [dashed,gray] (6,0) -- (6,6) -- (0,6);

\draw (0,1.5) -- (3,1.5);
\draw (3,4.5)--(6,4.5);
\draw [dashed,gray] (3,1.5) -- (3,4.5);
\node[below] at (3,0) {$\nicefrac{1}{2}$};
\node[left] at (0,1.5) {$\nicefrac{1}{4}$};
\node[left] at (0,4.5) {$\nicefrac{3}{4}$};

\filldraw[color=black!100, fill=black!100,  thick] (3,4.5) circle (0.1);
\filldraw[color=black!100, fill=black!100,  thick] (6,6) circle (0.1);

\filldraw[color=black!100, fill=white!100,  thick] (3,1.5) circle (0.1);
\filldraw[color=black!100, fill=white!100,  thick] (6,4.5) circle (0.1);

\node[below] at (6,0) {$1$};
\node[left] at (0,6) {$1$};
\node[below] at (0,-0.1) {$0$};

\node[above] at (3,7) {$\hat{F}$};

\end{tikzpicture}

\end{center}
    \caption{{Dominant} privacy-preserving recommendation when a $\nicefrac{3}{4}$-binary signal $s_1$ must be kept secret. The left panel depicts the cumulative distribution function $F$ of posteriors induced by the symmetric binary signal $s_1$  matching the state with probability $\nicefrac{3}{4}$. The {dominant} recommendation $s_2^\star$ corresponds to the conjugate distribution $\hat{F}$ depicted in the right panel. We see that $s_2^\star$ is trinary: it is completely uninformative  with probability $\nicefrac{1}{2}$ and fully reveals the state with the complementary chance, inducing the posteriors $0$ or $1$ with equal probabilities.
    \label{fig:disclosure}}
\end{figure}

We provide a simple practical procedure for generating a {dominant} privacy-preserving recommendation $s_2^\star$,  given realizations of $(\omega,s_1)$. We know that $s_1$ and $s_2^{\star}$ induce conjugate belief distributions, and so we can use the general procedure outlined in Figure~\ref{fig:induce_conjugates} to construct  $s_2^\star$ as follows:
\begin{quote}
  \begin{itemize}
   \item Calculate $p(s_1)$, the conditional probability of $\omega=1$ given $s_1$.
   
   \item If $\omega=1$, sample $s_2^\star$ uniformly from the interval $[1-p(s_1),\,1]$.
   \item If $\omega=0$, sample $s_2^\star$ uniformly from the interval $[0,\,1-p(s_1)]$. 
   \end{itemize}
\end{quote}
This procedure yields an  $s_2^\star$ that, conditioned on $s_1$, is distributed uniformly on $[0,1]$, and hence is independent of $s_1$. It is simple to verify that $s_2^\star$ is {dominant} (see the proof of Theorem~\ref{thm:disclosure}).

This procedure can be simplified if the posterior $p(s_1)$  only takes finitely many values, in which case there exists a {dominant} privacy-preserving recommendation that is also finitely valued.   Let $[0,1]=\bigsqcup_{k=0}^K I_k$ be a  partition of the unit interval into subintervals using the values of $p(s_1)$. The belief $p(s_2^\star)$ is constant when $s_2^\star$ ranges within $I_k$. Hence, the constructed {dominant} recommendation $s_2^\star$ with values in $[0,1]$ is equivalent to a signal $t_2^\star\in \{0,\ldots,K\}$ such that $t_2^\star=k$ whenever $s_2^\star\in I_k$. The signal $t_2^\star$ is also a {dominant} privacy-preserving recommendation and takes at most one more value than the number of values  of $p(s_1)$.

Consider the symmetric binary $s_1$ matching $\omega$ with probability $\nicefrac{3}{4}$ from Figure~\ref{fig:disclosure}. A {dominant} privacy-preserving recommendation $s_2^\star$ of the state can be generated as follows. It takes three values,  $\{0,1,2\}$.  If $s_1\ne\omega$,  then  $s_2^\star=2\cdot \omega$,  and if $s_1=\omega$ then $s_2^\star=1$ with probability $\nicefrac{2}{3}$ and  $s_2^\star=2\cdot \omega$ with probability $\nicefrac{1}{3}$. As a result, the realization $s_2^\star=1$ is completely uninformative and $s_2^\star\in \{0,2\}$ completely reveals~$\omega$.  

\paragraph{Information Loss {in Privacy-Preserving Recommendations.}} 
We now discuss how the informativeness of the {dominant} privacy-preserving recommendation $s_2^\star$ depends on the correlation between $s_1$ and the state~$\omega$. {As before, we assume
 $\omega$ is binary.} {Since} $s_1$ can represent the collection of all protected attributes, this question is of interest to a policymaker considering the welfare implications of adding an extra protected attribute, thus increasing the correlation.

Corollary~\ref{prop:comparative} implies that informativeness of $s_1$ and $s_2^\star$ moves in opposite directions with respect to the Blackwell order. However, this comparative static does not quantify the amount of information lost due to the privacy constraint. Suppose the informativeness of a signal $s$ about binary $\omega$ is measured by the mutual information $
    I(\omega;s) = H(p)-\E{H(p(s))},$ where $H$ is the Shannon entropy defined by $H(q) =q\log_2(q)+ (1-q)\log (1-q).$ The mutual information $I(\omega;s)=0$ if $s$ is uninformative and $I(\omega;s)=H(p)$ if $s$ pins down the realization of $\omega$. The following inequality provides an upper bound on the informativeness of the {dominant} privacy-preserving recommendation:
\begin{equation}\label{eq_optimal_disclosure_informativeness}
{I(\omega;s_2^\star)\leq H(p)-I(\omega;s_1).}
\end{equation}
The information-theoretic intuition behind~\eqref{eq_optimal_disclosure_informativeness} is that privacy does not allow $s_1$ and $s_2^\star$ to carry the same ``bit of information'' about $\omega$, and the ``total number of bits'' is~$H(p)$. A counterintuitive phenomenon is that~\eqref{eq_optimal_disclosure_informativeness} is always strict whenever $H(p) > I(\omega;s_1)>0$, i.e.,  $s_1$ is informative about $\omega$ but does not reveal it completely. This effect is discussed in~\S\ref{sec:feasibility}, where we prove a tighter bound in a more general setting. {We  also consider quadratic loss as an alternative measure of informativeness in Appendix~\ref{app_quadratic_informativeness}.} 

The bound in inequality~\eqref{eq_optimal_disclosure_informativeness} does not {speak directly to the utility loss suffered by a decision maker due to the privacy constraint.} 
Suppose 
the {dominant} recommendation~$s_2^\star$ is observed by a decision maker,  who selects an action $a\in A$ and receives a payoff of~$u(\omega,a)$. Such a decision maker may be interested in evaluating how good the signal is in her decision problem. Her expected payoff can be expressed as
$$\E{\sup_{\sigma\colon S_2\to A} u(\omega,\sigma(s_2^\star))}.$$
We assume that $u$ is bounded from above and thus 
 the indirect utility  $U(q) =\sup_{a\in A} \big((1-q)\cdot u(0,a)+ q\cdot u(1,a)\big)$ is a continuous convex function. The following proposition 
provides an explicit formula for the decision-maker's expected payoff.
\begin{proposition}\label{prop_optimal_utility}
Consider the {dominant} privacy-preserving recommendation $s_2^\star$ for $(\omega,s_1)$ with binary $\omega$, and let $F$ be the cumulative distribution function of $p(s_1)$. The expected payoff of a decision maker with an indirect utility function $U$ observing $s_2^\star$ is equal to 
\begin{equation}\label{eq_optimal_utility}
\int_{[0,1]} U(1-F(t))\,\dd t.
\end{equation}
\end{proposition}
The mutual information discussed above corresponds to a particular decision problem with $U(q)=H(p)-H(q)$.

The proposition follows from Theorem~\ref{thm:disclosure}. Indeed, the  distribution of beliefs induced by the {dominant}  recommendation is the conjugate $\hat{F}$ of $F$. Thus 
the expected payoff of the decision maker observing $s_2^\star$ is given by 
$$\int_{[0,1]} U(q)\,\dd \hat{F}(q).$$
Changing the variable $t=\hat{F}(q)$ and using the definition of the conjugate, we get formula~\eqref{eq_optimal_utility}. Details can be found in Appendix~\ref{app_optimal_utility}.

We end this section with two notes.

\paragraph{Three and More States.}
Beyond the binary state case, optimal privacy-preserving recommendations are not unique (see Appendix~\ref{app_non_uniqueness}), and so the choice of the recommendation can depend on the receiver's decision problem. In this case, we do not have a simple characterization or construction and leave this question as a direction for future research.

\paragraph{{Approximate Privacy.}}
{Our notion of privacy is very strong, as it requires complete independence between the signals. In some applications, it might be more natural to impose a softer constraint. For example, one could imagine a recommendation system that tries to maximize the difference between some utility, as described above, and a privacy violation cost, e.g., the mutual information between~$s_1$ and~$s_2$. This could refer to a situation where the designer must pay a fine  if they violate the privacy constraint, and the fine is larger for more egregious violations of privacy. 

This setup is closely related to the rational inattention literature. There, a classical model \citep*{sims2003implications} describes a decision maker who chooses $s_2$ to maximize
\begin{align*}
    \E{U(p(s_2))} - \lambda\cdot  I(\omega;s_2),
\end{align*}
where $U$ is an indirect utility,  and $\lambda > 0$ captures the marginal cost of making the signal $s_2$ more informative. The approximate privacy model we refer to would involve maximizing
\begin{align}\label{eq_costly_privacy}
    \E{U(p(s_2))} - \lambda\cdot  I(s_1;s_2).
\end{align}
Exact privacy corresponds to the limit $\lambda\to +
\infty$. However, our results may shed light on the behavior of this problem even for fixed $\lambda$. For example, in the case of a binary state, it is immediate that the solution to~\eqref{eq_costly_privacy} Blackwell dominates the dominant privacy-preserving recommendation $s_2^\star$ as $I(s_1;s_2^\star)=0$.

Since rational inattention models are very well understood, some of the insights from this literature could potentially be useful in studying approximate privacy.  We leave this as a direction
for future research.}

\section{{Application to Information Design with Privacy Across Receivers}}

{In this section, we consider a multi-receiver interpretation of our privacy constraint. We study information design problems in which the designer supplies information to several receivers and is constrained to using private private information structures. As we illustrate in \S\ref{sec_bidders}, such constraints may arise because the designer faces an environment where compromising agents' privacy leads to undesirable outcomes. In \S\ref{sec_multiagent_inf_design}, we define our constrained information design problem and characterize a family of designer objectives where the optimal information structure lies on the Blackwell-Pareto frontier. In a special case, we give a more explicit characterization of the optimal structure and show that it involves very few signal realizations. Finally, in \S\ref{sec:feasibility}, we provide a general recipe for solving information design problems with privacy across multiple receivers.

\subsection{Motivating Example: Informing Bidders}\label{sec_bidders}

As motivation for our privacy constraint, we outline an example scenario where privacy across receivers is crucial for the information designer's objective.

Early auction design literature has pointed out that, in common value auctions,  the privacy of bidders' information matters more than its accuracy for improving bidders' expected profits. Indeed, \cite*{milgrom1982theory} demonstrate that if the information of one bidder contains the information of another, then the latter bidder gets zero profit. 
\cite*{mcafee1992correlated} generalize these extreme examples to a general Bayesian mechanism design context:
\begin{quote}
\emph{Introducing arbitrarily small amounts of correlation into the joint
distribution of private information among the players is enough to render
private information valueless, in the sense that its possessors earn no rents.}\footnote{{The difference between full surplus extraction by \cite*{mcafee1992correlated} and by \cite*{cremer1988full} is that the former considers correlated signals about a common payoff-relevant state, while the latter considers correlated private values.}} 
\end{quote}

To understand the information design implications of this phenomenon, consider the following example: a platform market where a seller sells a single unit of a good to $n\geq 2$ buyers. The buyers have a positive common value $v(\omega)$ for the good, which is determined by a random state  $\omega\in \{0,1\}$. The realized state $\omega$ is observed by the platform only. Without information about $\omega$, a seller can sell the good to one of the buyers, charging her the expected value $\E{v(\omega)}$. This leaves no surplus to the buyers and no reason for them to stay on the platform ({especially if platform monetization relies on subscription fees)}. Thus, the platform is interested in revealing some information about the state to the buyers to ensure that they get positive surplus.
 What is the buyers' welfare for different information structures $\mathcal{I}=(\omega, s_1,\ldots, s_n)$, provided that the profit-maximizing seller can tailor the mechanism to the platform's choice of~$\mathcal{I}$?

First, observe that sending fully informative signals to all buyers leaves them with no surplus. Indeed, the seller can ask one agent to report her value and sell the good to another agent for a price equal to the reported value. The analogous argument shows that identical (but not necessarily fully informative) signals also lead to full surplus extraction.

Full surplus extraction persists even if there is only a tiny correlation between buyers' signals. Indeed, consider $n=2$ buyers and a family of information structures~$I_\varepsilon$ indexed by $\varepsilon\in[0,\,1/4]$,  given by the following tables:

\begin{center}
    \begin{tabular}{|c||c|c|}
\hline 
$\omega=1$ & $s_{2}=1$ & $s_{2}=0$\tabularnewline
\hline 
\hline 
$s_{1}=1$ & $0.5+2\varepsilon$ & $0.25-\varepsilon$\tabularnewline
\hline 
$s_{1}=0$ & $0.25-\varepsilon$ & $0$\tabularnewline
\hline 
\end{tabular} %
\hskip 1cm
\begin{tabular}{|c||c|c|}
\hline 
$\omega=0$ & $s_{2}=1$ & $s_{2}=0$\tabularnewline
\hline 
\hline 
$s_{1}=1$ & $0$ & $0.25-\varepsilon$\tabularnewline
\hline 
$s_{1}=0$ & $0.25-\varepsilon$ & $0.5+2\varepsilon$\tabularnewline
\hline 
\end{tabular}
\end{center}

When $\varepsilon = 0$, the two signals are independent, and we get a private private structure equivalent to the one in Figure~\ref{fig:34}. When $\varepsilon = \nicefrac{1}{4}$, both signals are fully informative. For any $\varepsilon > 0$, the two signals are correlated, and it turns out that the seller has a profit-maximizing mechanism that fully extracts the buyers' surplus. Inspired by \cite*{cremer1988full}, the idea is that the seller asks the buyers to report their signals.  The seller gives both buyers a large bonus if the reports match and charges both a large penalty if the reports mismatch, so that each type's expected transfer is zero when reporting their signal truthfully but sufficiently negative when lying. The seller then calculates the expected value of the good given the two reports and sells it to the first buyer for this amount, extracting full surplus. Thus any $\varepsilon>0$ leaves zero surplus to the buyers. For $\varepsilon=0$, the buyers' surplus is positive since the seller cannot use one buyer to learn about the other buyer's signal, and thus, the buyers retain information  rents.

{A related phenomenon underlies the results of \cite*{bergemann2016informationally}, \cite{du2018robust}, and \cite{brooks2021optimal}, who study worst-case revenue 
maximization in common-value environments, where the worst case is taken over the buyers' possible 
information structures. One of the key insights from these papers is that the worst case arises when buyers receive private private signals about the common value. The intuition is 
that a version of the min-max theorem applies in this environment \citep{brooks2020strong}, allowing the analyst to assume that the adversary moves first, choosing the information structure, and the designer best-responds. If the adversary were to choose a correlated 
information structure, the designer could extract full surplus---just as in our earlier 
example. Hence, without loss of generality, the analysis can focus on private private structures.}

\subsection{{An Information Design Problem with a Privacy Constraint}}\label{sec_multiagent_inf_design}

{More generally, in environments where compromising agents' privacy leads to undesirable outcomes (such as full surplus extraction),
the information designer is effectively restricted to maximizing her objective over private private structures. This consideration motivates a class of information design problems, where the designer aims to maximize some objective under a privacy constraint across agents.

Consider a stylized information-design problem under a privacy constraint across receivers. There are $n$ receivers $i \in \{1,2,\ldots,n\}$ who have common prior $p$ for a state $\omega\in \Omega=\{1,\ldots, m\}$ and observe signals $(s_1,\ldots,s_n)$. The designer's goal is to maximize
\begin{equation}\label{eq_persuasion_n_receivers}
\E{U\big(p(s_1),p(s_2),\ldots, p(s_n)\big)}
\end{equation}
over all private private information structures $(\omega,s_1,s_2,\ldots, s_n)$. Here, $U$ is the designer's utility depending on the profile of realized posterior beliefs. 

For example, utilities of this form arise if each receiver has a decision problem to solve, and the designer aims to maximize the social welfare. Suppose each receiver has to choose an action $a_i \in A_i$ after observing a signal $s_i$ and receives a payoff $u_i(\omega,a_i)$. The social welfare is given by 
\begin{align}\label{eq_social_welfare}
\E{\sum_{i}\sup_{\sigma_{i}\colon S_{i}\to A_{i}}u_{i}(\omega,\sigma_{i}(s_{i}))},
\end{align}
which can be rewritten in the form~\eqref{eq_persuasion_n_receivers} with separable $U$
\begin{equation}\label{eq_U_separable}
U(q_1,\ldots, q_n)=U_1(q_1)+\ldots+U_n(q_n),
\end{equation}
with 
\begin{equation}\label{eq_welfare_U_i}
   U_i(q)=\sup_{a_i\in A_{i}} \  \left(\sum_{k\in \Omega}  q_k \cdot u_{i}(k,a_i)\right).
\end{equation}
Note that the utility $U_i$ defined by~\eqref{eq_welfare_U_i}---as the upper envelope of affine functions of~$q$---is convex and continuous for any $u_i$ bounded from above.
The following lemma demonstrates that the convexity of $U$ plays an important role as it allows the designer to focus on Blackwell-Pareto optimal private private structures. Interestingly, convexity in a single belief $q_i$ is enough, since the designer can always improve her objective by giving the residual informativeness to  agent $i$.\footnote{{If $U$ is convex in each argument, the designer aims to make each signal as informative as possible under the privacy constraint. Contrast this with the secret sharing of} \cite*{shamir1979share}, where the goal is to split information in a way that each agent knows as little as possible both about others and about $\omega$.} The following result is proved in  Appendix~\ref{sec_proof_lm_convex_U}.
\begin{lemma}\label{lm_convex_U}
{Let $U$ be an upper semicontinuous function of $(q_1,\ldots, q_n)$, and suppose there is at least one agent $i$ such that $U$ is convex in $q_i$. Then, the optimal value of~\eqref{eq_persuasion_n_receivers} is attained at a Blackwell-Pareto optimal information structure.}  
\end{lemma}
{The idea behind the proof is the following: if a private private information structure optimizes $U$ but is not Pareto optimal, we can give more information to agent $i$, without hurting the objective.  

We will now discuss the most tractable binary-state two-receiver setting with a separable $U$, before turning to    the general case in \S\ref{sec:feasibility}.} 

\paragraph{Binary State, Two Receivers, and Separable Objective.}
{Consider a binary~$\omega$,  two receivers, and $U(p_1,p_2)=U_1(p_1)+U_2(p_2)$. We assume that at least one of $U_i$ is convex. By Lemma~\ref{lm_convex_U}, one can look for the optimal structures on the Blackwell-Pareto frontier characterized in Theorem~\ref{thm:binary-pareto}.
While the frontier contains a rich set of information structures, including some that induce a continuum of beliefs, the optimal ones have a simpler form.}
\begin{proposition}
\label{prop:ladder}
  {Given a binary $\omega$ and continuous $U(p_1,p_2)=U_1(p_1)+U_2(p_2)$ with convex $U_2$, there exists an optimal private private information structure $(\omega,s_1,s_2)$ such that $s_1$ takes at most two values, $s_2$ takes at most three values, and the distributions of beliefs induced by $s_1$ and $s_2$ are conjugates.
  Furthermore, if both $U_1$ and $U_2$ are convex, there exists such a structure where both $s_1$ and $s_2$ are binary. 
  }
\end{proposition}

{Recall that in the classical single-agent  model of \cite*{kamenica2011bayesian}, signals with two possible values are enough for optimal persuasion. By Proposition~\ref{prop:ladder}, to persuade a pair of receivers under our privacy constraint, we may need three. On the other hand, the fact that only three suffice  is surprising, since in multi-agent persuasion without privacy, there is no such bound on the number of signal values, i.e., there are objectives requiring arbitrarily many signals
\citep{arieli2020feasible, cichomski2023existence}.}
The proposition is proved in Appendix~\ref{app_welfare} using a combination of an extreme-point argument and the characterization of Blackwell-Pareto optimal structures via conjugate distributions (Theorem~\ref{thm:binary-pareto}).

{To illustrate Proposition~\ref{prop:ladder}, we apply it to a simple example of social welfare maximization~\eqref{eq_social_welfare}. There are  two equally likely states, $A_i = \Omega = \{0,1\}$,  and each agent gets utility 1 from matching the state and utility -1 from mismatching it, so that $u_1(\omega,a)=u_2(\omega,a)=1-2|\omega-a|$.} If we reveal the state to agent~1 and give agent~2 no information, then the social welfare is 1. Consider instead a private private information structure where each agent has a posterior belief of $\sqrt{\nicefrac{1}{2}}$ with probability~$\sqrt{\nicefrac{1}{2}}$ and a posterior belief of 0 with the complementary probability. Such a structure exists as this distribution is its own conjugate: see also Figure~\ref{fig:social-welfare}.
Then the social welfare is $4-2\sqrt{2}\approx 1.17$. Let us check that this is the highest possible social welfare  across all private private information structures. 

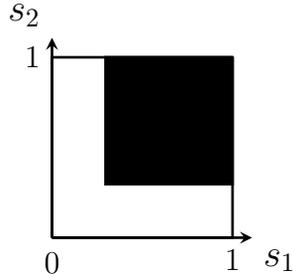
\begin{figure}
\begin{center}
\begin{tikzpicture}[scale=0.4, line width=1pt]
\draw (0,0) -- (6,0)--(6,6)--(0,6)--(0,0);

\draw [->,>=stealth] (0,0) -- (6.65,0);
\draw [->,>=stealth] (0,0) -- (0,6.65);
\node[below right] at (6.65,0) {\large $s_1$};
\node[above left] at (0,6.65) {\large $s_2$}; 

\node[below] at (6,0) {$1$};
\node[left] at (0,6) {$1$};
\node[below] at (0,-0.1) {$0$};

\filldraw[black] (6,6)--(6,1.78)--(1.78,1.78)--(1.78,6)--(6,6);

\end{tikzpicture}
\end{center}
\caption{A social welfare-maximizing private private information structure for the decision problem in which $u_1(\omega,a)=u_2(\omega,a)=1-2|\omega-a|$ .\label{fig:social-welfare} 
}

\end{figure}

By Proposition~\ref{prop:ladder}, we can assume that the distribution of posteriors  $\mu$ induced by $s_1$ is supported on two points. It  has mean $\nicefrac{1}{2}$ since the average posterior equals the prior, and thus can be represented as $\frac{\alpha}{\alpha+\beta}\delta_{\frac{1}{2}-\beta}+\frac{\beta}{\alpha+\beta}\delta_{\frac{1}{2}+\alpha}$ for some constants $\alpha,\beta\in (0,\nicefrac{1}{2}]$, where $\delta_x$ denotes the point mass at $x$. The contribution of the first agent to the welfare is, therefore, $\frac{4\alpha\beta}{\alpha+\beta}$.

The conjugate distribution $\hat{\mu}$ takes the form $\left(\frac{1}{2}-\alpha\right)\delta_0+(\alpha+\beta)\delta_\frac{\beta}{\alpha+\beta}+\left(\frac{1}{2}-\beta\right)\delta_1$. As the problem is state-symmetric, we can assume $\beta\geq\alpha$ without loss of generality and, hence, the middle atom of $\hat{\mu}$ is above $\nicefrac{1}{2}$. Therefore, the second agent contributes $1-2\alpha$ to the welfare, and the total welfare equals 
$\frac{4\alpha\beta}{\alpha+\beta}+1-2\alpha.$ A simple calculation shows that this is maximized when $\beta=\nicefrac{1}{2}$ and $\alpha=\sqrt{\nicefrac{1}{{2}}}-\nicefrac{1}{2}$, which yields the structure described above.

\medskip

\subsection{{A Feasibility Perspective on 
Privacy-Constrained Information Design}}
\label{sec:feasibility}

{Consider the general information-design problem~\eqref{eq_persuasion_n_receivers} where the designer is restricted to private private information structures. So far, we have focused on the case where the designer's objective is maximized at a Blackwell-Pareto optimal structure. However, if the conditions of Lemma~\ref{lm_convex_U} are not satisfied, an optimal structure may not lie on the Blackwell-Pareto frontier. In this case, it is convenient to take the belief-based approach of \cite*{kamenica2011bayesian}, 
identifying information structures with the  belief distributions that they can induce and
rewriting the designer's problem as the maximization over these distributions.} 

\begin{definition}\label{def:feasible_beliefs}
An $n$-tuple $(\mu_1,\ldots,\mu_n)$ of probability measures on $\Delta(\Omega)$ is said to be \emph{feasible under the constraint of privacy} if there exists a private private information structure $\mathcal{I}=(\omega,s_1,\ldots,s_n)$ such that $\mu_i$ is the distribution of $p(s_i)$. 
\end{definition}
The set of all distributions feasible under the constraint of privacy for a given prior $p\in \Delta(\Omega)$ and $n$ agents is denoted by~$\F_n(p)\subset \Delta(\Omega)\times \ldots \times \Delta(\Omega)$. 
For example, Figure~\ref{fig:34} shows that for symmetric binary states and two agents, it is feasible for both agents to have binary signals that induce beliefs of either $\nicefrac{1}{4}$ or $\nicefrac{3}{4}$. That is,  $(\mu_1,\mu_2)\in \F_2(1/2)$ for
\begin{align*}
    \mu_1=\mu_2=\frac{1}{2}\delta_{1/4}+\frac{1}{2}\delta_{3/4}.
\end{align*}
{The designer's objective~\eqref{eq_persuasion_n_receivers} depends on the information structure only through the belief distributions that it induces. Accordingly, the problem can be rewritten as the maximization of 
$$\int_{\Delta(\Omega)\times \ldots\times \Delta(\Omega)} U(q_1,\ldots, q_n)\ \dd\mu_1(q_1)\ldots \dd\mu_n(q_n) $$
over $(\mu_1,\ldots, \mu_n)\in \F_n(p)$. To make this perspective useful, one needs to characterize the set of privacy-constrained feasible distributions $\F_n(p)$, and we focus on this question for the remainder of this section.} 

\medskip

We first give a simple characterization of the set of {privacy-constrained feasible distributions} in the special case of two agents and a binary state, and then discuss what happens in the general case. 

A simple necessary condition for feasibility is given by the  so-called martingale condition, i.e., by the law of iterated expectations. It implies that if the posterior $p(s_i)$ has distribution $\mu_i$ then the expected posterior $\int q\,\dd\mu_i(q)$ must equal to the prior distribution of $\omega$. Thus a necessary condition for $(\mu_1,\ldots, \mu_n)\in \F_n(p)$ is that $\int q\,\dd\mu_i(q) = \int q\,\dd\mu_j(q)$ for all~$i$ and~$j$. 

To characterize feasible distributions, we relate them to 
Blackwell-Pareto optimality. By Lemma~\ref{lem:pareto-dominated} from the Appendix, any private private information structure is  weakly dominated by a Blackwell-Pareto optimal one. Thus $(\mu_1,\ldots,\mu_n)\in \F_n(p)$ for some prior $p$ if and only if there exists a Blackwell-Pareto optimal structure represented by some $(\nu_1,\ldots,\nu_n)$, such that each $\mu_i$ is a mean-preserving contraction of~$\nu_i$. This holds since mean-preserving contractions of the posterior belief distributions correspond to Blackwell dominance. By Blackwell's theorem,  one can take a structure with posteriors $(\nu_1,\ldots,\nu_n)$, and apply an independent garbling to {each signal} to arrive at a structure with posteriors $(\mu_1,\ldots,\mu_n)$.

This observation, together
with Theorem~\ref{thm:binary-pareto}, gives the following characterization for the case of a binary state and two {signals}.
\begin{corollary}
\label{cor:feasible}
For a binary state and two {signals}, a pair $(\mu_1,\mu_2)$ of distributions on $[0,1]$ {belongs to $\F_2(p)$} if and only if  $\mu_2$ is a mean preserving contraction of the conjugate  of~$\mu_1$ {and $p=\int_0^1 q\, \dd\mu_1(q)=\int_0^1 q\, \dd\mu_2(q)$.}\footnote{{For conjugates, $\int_0^1 q \dd\mu_1(q)=\int_0^1 q \dd\mu_2(q)$, so it is enough to impose this last condition on~$\mu_1$.}}
\end{corollary}
Corollary~\ref{cor:feasible} provides a simple tool for checking feasibility {in terms of the cumulative distribution functions of the beliefs}. Indeed, by applying a standard characterization of mean-preserving spreads, the pair of distributions $(\mu_1,\mu_2)$ belongs to $\F_2(p)$ if and only if they have the same expectation~$p$, and the corresponding cumulative distribution functions $(F_1,F_2)$ satisfy 
\begin{align*}
    \int_y^1 F_2(x)\,\dd x \geq \int_y^1 \hat F_1(x)\,\dd x
\end{align*}
for every $y \in [0,1]$. 

{This corollary generalizes partial results on privacy-constrained feasibility obtained by \cite{gutmann1991existence} and  \cite{arieli2020feasible}.
For example, Proposition~2 of \cite{arieli2020feasible}  focuses on $\mu_1=\mu_2=\mu$ being symmetric to reflection around the prior $p=\nicefrac{1}{2}$  and demonstrates that $(\mu,\mu)\in  \F_2(1/2)$ if and only if $\mu$ is a mean-preserving contraction of the uniform distribution on $[0,1]$. 
In Corollary~\ref{cor:feasible}, the two states need not be equally likely, the two signals need not induce the same belief distribution, and the belief distributions need not be symmetric around $\nicefrac{1}{2}$.}

\smallskip

For $n\geq 2$ {signals}, the characterization of Blackwell-Pareto optimal private private structures via sets of uniqueness (Theorem~\ref{thm:binary-pareto}) implies the following characterization of feasible distributions. 

\begin{corollary}\label{cor_feasible_n_geq_3}
For $n\geq 2$ {signals}, a collection  $(\mu_1,\ldots, \mu_n)$ of distributions on $[0,1]$ {belongs to $\F_n(p)$} if and only if there exists a set of uniqueness $A\subset [0,1]^n$ {of Lebesgue measure~$p$} such that each $\mu_i$ is a mean-preserving contraction of 
{the belief distribution 
induced by the signal $s_i$ in the private private information structure $(\omega,s_1,\ldots,s_n)$ associated with $A$.}
\end{corollary}

{Note that one can replace sets of uniqueness $A\subset [0,1]^n$ in Corollary~\ref{cor_feasible_n_geq_3} with 
any other bigger collection of sets and the ``if and only if'' statement will continue to hold. Indeed, for any set $A\subset [0,1]^n$, the information structure associated with $A$ gives rise to a feasible belief distribution. Thus by enlarging the family of $A$, we do not get any new elements beyond $\F_n(p)$.
Accordingly, one can use the collection of all upward-closed sets $A$  instead of sets of uniqueness and obtain a tractable representation of all privacy-constrained feasible distributions. In the context of the privacy-constrained information design problem, this observation reduces the choice of information structure to choosing an upward-closed set and, possibly, individual garbling.}

Corollary~\ref{cor_feasible_n_geq_3} admits a straightforward extension to the general case of $n\geq 2$ {signals} and $m\geq 2$ states  by replacing sets of uniqueness with partitions of uniqueness and relying on Theorem~\ref{thm:pareto_appendix} from Appendix~\ref{app:Pareto_via_uniqueness} instead of Theorem~\ref{thm:binary-pareto}.

\section{{Applications to Causal Strength and Responsiveness of Social Choice Rules}}\label{sec_responsiveness}

{In this section, we consider applications of private private information structures to causal inference and mechanism design. In both applications, we aim to understand how strongly independent inputs can affect a common outcome.

\subsection{Causal Strength}
In causal inference, a \emph{collider} is one of the basic causal structures. It represents a situation where multiple independent causes $s_1,\ldots, s_n$ produce an effect $\omega$ \citep[see, e.g.,][]{pearl2009causality}.\footnote{{In applied work, causes are often correlated. In this case,  colliders are often applied to orthogonalized causes, obtained via a suitable version of the principal component analysis.}}
Consider, for example, two causes $s_1,s_2$---say, nature (DNA) and nurture (socioeconomic status)---that produce an effect $\omega$---say, the winning of an Olympic medal. Assuming that $s_1$ and $s_2$ are independent, $(\omega,s_1,s_2)$ is a private private structure.

More generally, any collider $(\omega, s_1,\ldots, s_n)$ can be thought of as a private private information structure, with the causes  interpreted as signals and the effect interpreted as the state. Similarly to private private signals, which cannot be simultaneously all very informative,
it is impossible for all causes  to simultaneously influence the effect~$\omega$ too strongly. How strong can these causal strengths be?

Following the causal inference literature, we use mutual information to quantify causes' strength  \citep{janzing2013quantifying}. Recall that 
the} \emph{Shannon entropy} of a measure $q \in \Delta(\Omega)$ is 
\begin{align}
\label{eq:H}
    H(q) = -\sum_{k \in \Omega}q(k)\log_2(q(k)).
\end{align}
The \emph{mutual information} between $\omega$ and $s_i$ is given by
\begin{align}\label{eq:I}
    I(\omega;s_i) = H\left(\E{p(s_i)}\right)-\E{H(p(s_i))}.
\end{align}
{In information theory, entropy is often used to quantify the amount of randomness in a distribution. Mutual information is then the expected reduction in this randomness, and is  used as a measure for the amount of information contained in a signal. These notions are also used in economics, e.g., in the rational inattention literature \citep*{sims2010rational, matvejka2015rational}.

We can bound the causal strengths of different causes in a collider by bounding the feasible profiles of mutual information  $I(\omega;s_i)$, $i=1,\ldots, n$, in private private information structures. Before turning to these mutual information bounds, we will discuss another context motivating these results.} 

\subsection{{Responsiveness of Social Choice Rules}}\label{sec_responsive}
{To elicit information from an agent, it is often important for the agent to know that her report matters: that is, the social outcome is sufficiently responsive to her report. For instance, an election in a large society where each ballot has a very low chance of being pivotal provides little incentive for citizens to participate; see, e.g., \cite*{chamberlain1981note}.

We consider a general setting where a social choice rule aggregates agents' reported types into a public outcome. We ask how responsive this rule can be to individual reports.\footnote{
{We focus on rules where each individual agent exerts maximal influence on the outcome. 
In this sense, our objective is in direct contrast to differential privacy 
\citep{dwork2014algorithmic}, which requires that changes in each agent's input 
have only negligible impact on the public outcome, and to contextual privacy 
\citep{haupt2021contextually}, which aims to minimize the information revealed beyond 
what is strictly necessary to compute that outcome.}
}

Consider an environment with $n$ agents in a society, where agent $i$ has a type~$t_i$ drawn from a set $T_i$ according to a known distribution $\nu_i$, independently across agents. A public outcome is chosen from a finite set $\Omega$ via a social choice rule ${C}\colon T_1 \times \ldots\times T_n \to \Delta(\Omega)$ that maps the realized type profile to a possibly random outcome. For example, $t_i\in \R_+^m$ may equal $i$'s value for each of $m$ candidates, and $\omega\in \{1,\ldots, m\}$ determines which one gets elected. Or, $t_i\in \R$ may be the agent's value for a good, and $\omega\in \{1,\ldots, n\}$ specifies who gets it. We ignore incentive issues and 
assume that agents report their true types to ${C}$. So, our results provide an upper bound on feasible levels of responsiveness.

We measure the responsiveness of a social choice rule ${C}$ to agent $i$'s report by the mutual information $I(\omega;t_i)$; see~\eqref{eq:I}. In other words,  responsiveness captures the amount of information about the public outcome that is contained in an individual's report. For example, in an election between two candidates, our measure of responsiveness is related to \emph{pivotality}, defined as the difference in the winning probability of the first candidate when an individual voter switches her vote.
In more general environments, responsiveness measures the average influence of an agent on the outcome across different realizations of her type. By contrast, a natural extension of pivotality to general type and outcome spaces looks at the maximal influence of an agent on the outcome across all types of the agent, so that an agent is deemed influential even if they are only decisive for the outcome for a rare type realization \citep*{al2000pivotal}.

A dictatorship is an example of a social choice rule highly responsive to a single agent's type. Intuitively,  no rule can be highly responsive to all agents simultaneously. We formalize this intuition by uncovering constraints that the profile of responsiveness ~$I(\omega;t_i),$ $i=1,\ldots, n$, must satisfy.

The key observation is that any social choice rule defines a private private information structure. Indeed, consider type spaces $T_i$ for $i=1,\ldots,n$,  type distributions  $\nu_i\in \Delta(T_i) $ , and a social choice rule ${C}\colon T_1\times \ldots \times T_n\to \Delta(\Omega)$. This data defines a joint distribution of the public outcome $\omega$ and types $t_1,\ldots, t_n$. Interpreting types as signals and the outcome as the state, we obtain that $(\omega, t_1,\ldots, t_n)$ is a private private information structure, since the types of different agents are independent. Moreover, responsiveness of ${C}$ to agent $i$ coincides with the informativeness of the signal $t_i$ about $\omega$. Thus understanding feasible profiles of responsiveness---similarly to causal strength in colliders---boils down to the question of signal informativeness in private private structures.} 

\subsection{{Bounds on Signal Informativeness}}\label{sec_bounds_informativeness}
{Consider a private private information structure $(\omega,s_1,\ldots, s_n)$. Depending on the application, $s_1,\ldots, s_n$ and $\omega$ can be interpreted as independent causes and the effect they produce or as individual types and the outcome of a social choice rule. Our goal is to understand how informative $s_1,\ldots, s_n$ can be about $\omega$ as measured by the mutual information $I(\omega;s_i)$, thus obtaining bounds on causal strength and the social choice rule's responsiveness in these settings.}

\begin{proposition}
\label{prop:mutual-info}
{For any finite $\Omega$ 
and a private private structure $(\omega, s_1,\ldots, s_n)$, 
 $$\sum_i I(\omega; s_i) \leq H(p),$$
 where $p\in \Delta(\Omega)$ is the prior distribution of $\omega$.}
\end{proposition}

{In the context of causal inference, the proposition implies that the effect can depend strongly only on a few independent causes. Similarly, a social choice rule can only be strongly responsive to  a few agents' reports. In particular, if the social choice rule treats all agents in the same way---so  $I(\omega;s_i)$ has the same value for each $i$---the responsiveness to each agent vanishes at the rate of $1/n$ as the size $n$ of the population gets large.}

The fact that {$I(\omega;s_i) \leq H(p)$  for each $i$} follows immediately from the definition of mutual information. For general information structures (e.g., conditionally independent signals), there are no further restrictions on the tuple $(I(\omega;s_1),\ldots,I(\omega;s_n))$: each can take any value between $0$ and $H(p)$. Proposition~\ref{prop:mutual-info} shows that the situation is different when it comes to private private information structures. Here, the sum of mutual information is bounded by the entropy of the prior over $\omega$, so that the entropy of $\omega$ behaves like a finite resource that needs to be split between the {signals}.  The proof of this proposition  uses standard information-theoretic tools and is contained in Appendix~\ref{app:mutual-info}. 

{Note that $I(\omega;s_i)$ can be expressed in terms of the prior distribution $p$ and the distribution of posteriors $\mu_i$ induced by $s_i$,
\begin{align}
\label{eq:I:measure}
I(\omega;s_i) = H\left(p\right)-\int H(q)\,\dd\mu_i(q),
\end{align}
 and so it is a Blackwell-equivalence invariant; see Definition~\ref{def_dominance}.
Thus Proposition~\ref{prop:mutual-info} provides an easily-verifiable necessary condition  for the feasibility of the tuple $(\mu_1,\ldots, \mu_n)$ of posterior distributions.\footnote{{Replacing the feasibility constraint with the more tractable constraint from Proposition~\ref{prop:mutual-info} provides an upper bound on a designer's objective in the privacy-constrained multi-agent information design setting discussed in \S\ref{sec:feasibility} and in the mechanism design setting from \S\ref{sec_border}. One can also use the proposition to easily show that some distributions are not feasible under the constraint of privacy,  e.g., the uniform distribution over the $4$-dimensional cube.} 
}
}

{Proposition~\ref{prop:mutual-info} suggests that one can think of private private structures as a way to divide an ``information pie'' among the signals. However, this analogy happens to be incomplete since there are negative externalities, making the sum of the pieces smaller than the whole pie. We show this in the following strengthening of Proposition~\ref{prop:mutual-info} for a binary state.} 
\begin{proposition}
\label{prop:mutual-info-two}
For a binary state and any private private structure $(\omega, s_1,\ldots, s_n)$, 
\begin{align*}
    \sum_i I(\omega; s_i) \leq H(p) -c_p\cdot \sum_{i < j}I(\omega; s_i)\cdot I(\omega; s_j),
\end{align*}
{where $p\in [0,1]$ is the prior of $\omega=1$ and $c_p=2\ln 2\cdot p^2(1-p)^2$.}
\end{proposition}
{This proposition is proved in Appendix~\ref{app:mutual-info-two}.
The proposition shows that, for a binary state, entropy is a finite resource that is costly to divide: the sum of mutual information is  strictly less than the entropy of $\omega$, as long as at least two signals are informative. Thus, one can make the pie-cutting analogy more precise by assuming that the pie is cut using a blunt knife.
Beyond binary $\omega$, the impossibility of perfect division   depends on the structure of the state space---perfect division becomes possible if the state space has a product structure, which requires $|\Omega|$ to be a composite number. For example, if $\omega$ is uniformly distributed over $\Omega=\{0,1\}\times\{0,1,2\}$, then the structure in which $s_1$ is equal to the first coordinate of $\omega$ and $s_2$ is equal to the second satisfies $I(\omega; s_1)+ I(\omega; s_2)=H(p)$, and both signals are informative about~$\omega$.} 

\medskip
{So far, we have relied on entropy as a measure of uncertainty and defined informativeness as uncertainty reduction. By using a different measure of uncertainty, we can obtain a different measure of informativeness, which would correspond to another way of measuring causal strength or the social choice rule's responsiveness. For example, another popular measure of uncertainty is variance, which leads to the expected reduction in variance as the measure of informativeness. In Appendix~\ref{app_quadratic_informativeness}, we show an analog of Proposition~\ref{prop:mutual-info} for this informativeness measure.}

\section{{Application to Feasibility of Reduced-Form Social Choice Rules}}
\label{sec_border}

{In this section, we provide an application of private private signals to feasibility questions in Bayesian mechanism design. 
In this literature, the one-agent marginals of a multi-agent mechanism are known as its reduced forms.} Pioneered by \cite*{matthews1984implementability}, \cite*{maskin1984optimal}, and \cite*{border1991implementation}, the reduced-form approach {reduces the multi-agent problem to a single-agent problem with additional feasibility constraints}. 

{We discuss a version of the reduced-form approach for public decision making  in a general setting where a social choice rule specifies a public outcome as a function of agents' independent types. We describe an equivalence between the feasibility of reduced-form rules and the feasibility of private private belief distributions. Thus, our results on privacy-constrained feasible belief distributions---see \S\ref{sec:feasibility} and \S\ref{sec_bounds_informativeness}---also speak to the feasibility of reduced-form rules.} This connection highlights the link between feasible reduced forms and partitions of uniqueness. It also {indicates the possible relevance} of information-theory concepts such as Shannon's entropy to mechanism design. These insights contribute to a diverse list of known connections of the reduced-form feasibility problem to majorization \citep*{hart2015implementation,kleiner2021extreme, nikzad2022constrained},
support functions of convex sets \citep*{goeree2016reduced,goeree2023geometric}, network flows \citep*{che2013generalized,zheng2021reduced}, and  polymatroids \citep*{vohra2011mechanism,valenzuela2022greedy,lang2022reduced}. 

\medskip

{Formally, consider the Bayesian mechanism design setting as discussed in \S\ref{sec_responsive} in the context of social choice rule responsiveness: There are $n$ agents and a set~$\Omega$ of~$m$ public outcomes.} Each agent $i=1,\ldots, n$ has a type $t_i\in T_i$, where the sets of types $T_i$ are arbitrary measurable spaces, and $t_i$ are random draws from $\nu_i\in \Delta(T_i)$, independent across agents. {To simplify the statements, we assume that each $\nu_i$ is non-atomic.}\footnote{{Our results can be applied to atomic type distributions without loss of generality, by augmenting each agent's type with an auxiliary dimension uniformly distributed on $[0,1]$ independently of the rest of the variables.}}
For example, $t_i\in \R_+^m$ may equal $i$'s value for each of $m$ public projects, and $\omega\in \{1,\ldots, m\}$ determines which project gets implemented.

{A social choice rule ${C}$} selects a randomized outcome for each profile of types,\footnote{One could augment a mechanism by a vector of transfers, but they are irrelevant to the question of feasibility; see, e.g., \cite*{gopalan2018public}.} i.e., ${C}\colon T_1\times\ldots\times T_n\to\Delta(\Omega)$. 
As each agent $i$ only knows her own type, the relevant information for $i$  about ${C}$ is captured by the reduced-form rule ${C}_i\colon T_i\to\Delta(\Omega)$ obtained by averaging of ${C}$ over $t_{-i}$, i.e.,
$${C}_i(t_i)=\int_{T_{-i}} {C}(t_i,t_{-i})\,\dd \nu_{-i}(t_{-i}).$$
The idea dating back to \cite*{maskin1984optimal} and \cite*{matthews1984implementability} is that constraints of Bayesian incentive compatibility, interim individual rationality, and revenue or welfare objectives can all be reformulated in terms of the collection of reduced forms $({C}_1,\ldots, {C}_n)$ rather than ${C}$. {Thus, a multi-agent mechanism design problem boils down to a collection of single-agent ones if we know which reduced forms are feasible.} 

\begin{definition} Single-agent rules $({C}_1,\ldots,{C}_n)$ with ${C}_i\colon T_i\to \Delta(\Omega)$ are \emph{feasible reduced forms} for given type distributions $\nu_i\in \Delta(T_i)$ if there is {an $n$-agent social choice rule ${C}$} such that ${C}_i$ are its reduced forms.
\end{definition}

The feasibility of reduced-form rules turns out to be tightly related to the feasibility of belief distributions induced  by private private information structures, as studied in \S\ref{sec:feasibility}. 
Recall that an $n$-tuple  of distributions $(\mu_1,\ldots, \mu_n)$ is \emph{feasible under the constraint of privacy}  if there is a private private 
information structure $(\omega,s_1,\ldots,s_n)$ such that the  belief $p(s_i)$ is distributed according to $\mu_i$ for each $i=1,\ldots,n$; see Definition~\ref{def:feasible_beliefs}.
\begin{claim}\label{claim_feasible_reduced_forms}
Let ${C}_i\colon T_i\to \Delta(\Omega)$, $i=1,\ldots, n$,
be a collection of single-agent rules, and each agent's type $t_i\in T_i$ be distributed according to non-atomic
$\nu_i\in\Delta(T_i)$. Denote the distribution of ${C}_i(t_i)$ by $\mu_i$. Then $({C}_1,\ldots, {C}_n)$ are feasible reduced forms if and only if {distributions $(\mu_1,\ldots,\mu_n)$ are feasible under the constraint of privacy .}
\end{claim}
{The claim is proved in Appendix~\ref{app:proof_reduced_forms}. The idea is that the agents' types can be interpreted as ``signals'' about the outcome.} 

\smallskip

{The connection to private private signals provides structural insights about mechanisms.
By Claim~\ref{claim_feasible_reduced_forms}, any social choice rule ${C}$ can be interpreted as a private private information structure, where ${C}_i(t_i)$ plays a role of the posterior of agent $i$. Proposition~\ref{prop:associated} implies that for any private private information structure with non-atomic distributions of signals, there is an equivalent one where the signal spaces are the same and the posteriors are the same after each signal realization, but the state is a deterministic function of signals.

\begin{corollary}\label{cor_derandom}
For non-atomic distributions of types $\nu_i\in \Delta(T_i)$ and a social choice rule ${C}\colon T_1\times \ldots \times T_n\to\Delta(\Omega)$ with a finite set of outcomes $\Omega$, there exists another rule ${C}'$ with the same reduced forms ${C}_i'={C}_i$ and such that $\omega$ is a deterministic function of types.
\end{corollary}

This equivalence of deterministic and stochastic mechanisms has recently been established by \cite*{chen2019equivalence}. We show that their main result can alternatively be obtained from our analysis of private private signals. In particular, Corollary~\ref{cor_derandom} implies that a designer maximizing revenue, welfare, or any other objective that can be expressed through reduced forms can focus on deterministic mechanisms without loss of generality. However, it is important to make sure that agents' types contain enough randomness for the derandomization result to apply. For example,   it is known that there is a revenue gap between deterministic and stochastic multi-good auctions with atomic value distributions \citep*{chawla2010power}.

We note that our setting also captures auction design, which has been the main focus of the literature on the reduced-form approach. Indeed, each $t_i\in \R$ may represent the agent's value for a private good while $\omega\in \{1,\ldots, n\}$ specifies who gets it. However, in private good allocation, each agent $i$ typically cares only about her own probability of getting the good, i.e., about the $i$'s component ${C}_i(t_i)(i)\in [0,1]$ rather than the whole vector ${C}_i(t_i)\in \Delta(\Omega)$. Accordingly, it is common to call ${C}_i(t_i)(i)$ the reduced form of ${C}$ for agent~$i$ in the context of private-good allocation. A version of Claim~\ref{claim_feasible_reduced_forms} in that context has appeared in concurrent papers by~\cite*{lang2022feasible,lang2023belief}. 

Claim~\ref{claim_feasible_reduced_forms} allows one to use our feasibility results about belief distributions 
---criteria from Corollaries~\ref{cor:feasible} and~\ref{cor_feasible_n_geq_3} and necessary conditions 
 from Propositions~\ref{prop:mutual-info}, \ref{prop:mutual-info-two}, and~\ref{prop:quadartic-info}---to make statements about the feasibility of reduced-form mechanisms.
For example, taking into account that one can replace the sets of uniqueness in Corollary~\ref{cor_feasible_n_geq_3} with a broader family of upward-closed sets, we get the following result.
\begin{corollary}\label{cor_feasible_reduced_forms_binary}
For a binary outcome space $\Omega=\{0,1\}$,  single-agent rules  $({C}_1,\ldots, {C}_n)$ are feasible reduced forms if and only if there exists an upward-closed set $A\subset [0,1]^n$ such that the distribution of each ${C}_i(t_i)$ is a mean-preserving contraction of 
the distribution 
of beliefs induced by signal $s_i$
in the private private information structure $(\omega, s_1,\ldots,s_n)$ associated with~$A$.
\end{corollary}

Similarly to Corollary~\ref{cor_feasible_n_geq_3}, this result admits a straightforward extension for arbitrary finite outcome spaces $\Omega$, where the role of the set $A$ will be played by a partition of uniqueness. These general feasibility criteria 
resemble the majorization requirements obtained by \cite*{hart2015implementation} and \cite*{kleiner2021extreme} for symmetric auctions and can be thought of as an extension of these results to public decisions and non-symmetric mechanisms.} The only crucial difference is that, instead of a single dominating distribution, we get a family of such distributions indexed by sets of uniqueness~$A$. 
The necessity of checking a continuum of dominating distributions may serve as an explanation for the computational intractability of the feasibility problem for public decisions, established by \cite*{gopalan2018public}.

{For $n=2$ agents, private good allocation is essentially equivalent to public decisions with binary $\Omega$, where the state $\omega$ is the agent that received the good. Thus, for the first agent, the probability of receiving the good conditional on her type is equal to her belief about $\{\omega=1\}$, i.e., $C_1(t_1)(1)=\bP(\omega=1\mid t_1)$. Similarly, for the second agent, $C_2(t_2)(2)=1-\bP(\omega=1\mid t_2)$. We can allow the good to be retained by the designer by considering the set of states $\{1,2,\emptyset\}$, in which case
$C_1(t_1)(1)$ and $C_2(t_2)(2)$ decrease pointwise. Combining these observations with our characterization of feasibility via conjugate distributions, we obtain the following corollary.}
\begin{corollary}\label{cor_asymmetric_hr}
{Suppose a private good is allocated to one of  two agents or retained by the designer using a social choice rule ${C}\colon T_1\times T_2\to \Delta(\{1,2,\emptyset\})$. For a pair of functions $Q_i\colon T_i \to [0,1]$, $i=1,2$, there exists ${C}$ such that $Q_i(t_i)={C}_i(t_i)(i)$ if and only if there is a pair of distributions $G_i$ that are conjugates---i.e., $G_2=\hat{G}_1$---such that the distributions of $Q_1(t_1)$ and $Q_2(t_2)$ are obtained from  $G_1$ and $1-G_2(1-\,\cdot\,)$, respectively, by applying mean-preserving contraction and shifting mass to the left.}\footnote{{Equivalently, the distribution of $1-Q_1$ second-order stochastically dominates $1-G_1(1-\,\cdot\,)$ and the distribution of $1-Q_2$ second-order stochastically dominates $G_2(\,\cdot\,)$.}}  
\end{corollary}
{For the symmetric case of $Q_1=Q_2$ studied by \cite*{hart2015implementation}, one can take $G_1(x)=1-G_2(1-x)$. Together with $G_2=\hat{G}_1$ this yields $G_1(x)=G_2(x)=x$. Thus~$G_i$ are uniform on $[0,1]$, and we deduce the two-agent case of the characterization by \cite*{hart2015implementation} from Corollary~\ref{cor_asymmetric_hr}. The corollary shows that their result naturally extends to asymmetric mechanisms. In that setting, the connection between feasibility and majorization has not been known; see, e.g., \cite*{che2013generalized}.}

\section{Application to Persuasion  in Zero-Sum Games}\label{app:applic}

Here, we outline another applications of private private signals. 
Suppose agents compete in a zero-sum game, and a designer who knows the state wishes to influence how agents' actions correlate with the state. We show that equilibrium signals must be private private. So,  our bounds on the informativeness of private private signals   limit how much the designer can adapt the agents' actions to the state, thus constraining the designer's payoffs. 

Consider a zero-sum game played by two players. The action set of player $i \in \{1,2\}$ is $A_i$, which we take to be finite, and the utilities are given by $u_1=-u_2=u$ for some $u \colon A_1 \times A_2 \to \R$. We assume that this game has a unique mixed Nash equilibrium, which holds for  generic zero-sum games \citep*{viossat2008having}.

There is a random state $\omega$ taking value in $\Omega$. The two players do not know the state and their payoffs do not depend on it. But, there is another agent (the designer) who knows the state and has a utility function $u_d \colon \Omega \times A_1 \times A_2 \to \R$ that depends on the state and the actions of the players. This can model a setting where a designer wants to influence the actions of two competitors, with the  designer's preference over actions given by her private type $\omega$. The designer commits to a (not necessarily private private) information structure $(\omega,s_1,s_2)$. When the state $\omega$ is realized, the designer observes it and sends the signal $s_1$ to player 1 and $s_2$ to player 2. The players  choose their actions after observing the signals. 

As a simple example, suppose the game is rock-paper-scissors, so that $A_1 = A_2 = \{R,P,S\}$ and $u(a_1,a_2)$ equals $1$ on $\{(P,R),(R,S),(S,P)\}$, zero on the diagonal, and $-1$ on the remaining action pairs. The state $\omega$ takes values in $\{0,1\}$ and is equal to $1$ with probability $\nicefrac{1}{2}$. The designer gets a payoff of 1 for each player who chooses scissors in the high state or chooses rock in the low state.

A pure strategy of player $i$ is a map $f_i \colon S_i \to A_i$, and a mixed strategy $\sigma_i$ is a random pure strategy. An equilibrium consists of an information structure together with a strategy profile $(\sigma_1,\sigma_2)$ such that each agent maximizes her expected utility given her signal. That is, for every $s_i \in S_i$ and~$a_i \in A_i$
\begin{align*}
    \E{u_i(\sigma_i(s_i),\sigma_{-i}(s_{-i}))|s_i} \geq \E{u_i(a_i,\sigma_{-i}(s_{-i}))|s_i}.
\end{align*}
This is just the incentive compatibility condition of a correlated equilibrium, and so, by a direct revelation argument, we can assume that $S_i = A_i$ and that $\sigma_i$ is always the identity: in equilibrium, the designer recommends an action to each agent, and the agents follow the recommendations. We refer to such equilibria as direct-revelation equilibria.

The next claim shows that private private information structures arise endogenously in this setting.
\begin{claim}
\label{clm:zero-sum}
In every direct-revelation equilibrium, the information structure $(\omega,s_1,s_2)$ is a private private information structure.
\end{claim}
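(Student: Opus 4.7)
The plan is to leverage the classical fact that in a zero-sum game with a unique Nash equilibrium, every correlated equilibrium is the (product) Nash equilibrium, adapting it to the current setup. Since $(\omega,s_1,s_2)$ is a direct-revelation equilibrium, the joint distribution $\pi$ of $(s_1,s_2)$ is (conditional on $\omega$, and hence unconditionally) a correlated equilibrium of the base game. Let $v$ denote the value of the game and let $(\sigma_1^*,\sigma_2^*)$ be its unique Nash equilibrium. The independence conclusion will be obtained by showing that, for $\pi$-almost every $s_1$, the conditional distribution $\mu_{s_1}$ of $s_2$ given $s_1$ coincides with the marginal $\pi_2=\sigma_2^*$.

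First I would pin down the marginals. Consider the ``ignore-the-signal'' deviation in which player~$1$ plays some fixed $a_1$ regardless of $s_1$; incentive compatibility gives $\mathbb{E}_\pi[u(s_1,s_2)] \geq \max_{a_1}\mathbb{E}_{\pi_2}[u(a_1,\cdot)] \geq v$, where the last inequality uses the minimax characterization of $v$. The symmetric argument for player~$2$ yields $\mathbb{E}_\pi[u(s_1,s_2)] \leq \min_{a_2}\mathbb{E}_{\pi_1}[u(\cdot,a_2)] \leq v$. Thus all three quantities equal $v$, which says precisely that $\pi_2$ is a minimax strategy for player~$2$ and $\pi_1$ is a maximin strategy for player~$1$. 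In a zero-sum game the set of Nash equilibria factors as the product of the set of maximin and minimax strategies, so uniqueness of Nash forces uniqueness of both, giving $\pi_1=\sigma_1^*$ and $\pi_2=\sigma_2^*$.

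Next I would look at the conditional game facing player~$1$ given $s_1$. Write $V(s_1) := \mathbb{E}_{\mu_{s_1}}[u(s_1,\cdot)]$. The best-response condition for following the recommendation $s_1$ says $V(s_1) = \max_{a_1}\mathbb{E}_{\mu_{s_1}}[u(a_1,\cdot)]$, and this last maximum is at least $v$ since $\mu_{s_1}$ is some strategy of player~$2$ and player~$1$ can always guarantee $v$. But the tower property combined with the previous paragraph gives $\mathbb{E}[V(s_1)] = \mathbb{E}_\pi[u(s_1,s_2)] = v$. A nonnegative-gap random variable whose mean vanishes must vanish almost surely, so $V(s_1) = v$ for $\pi_1$-almost every $s_1$. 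That means $\max_{a_1}\mathbb{E}_{\mu_{s_1}}[u(a_1,\cdot)] = v$, i.e., $\mu_{s_1}$ itself is a minimax strategy for player~$2$. By the uniqueness statement from the previous paragraph, $\mu_{s_1}=\sigma_2^* = \pi_2$ almost surely, which is exactly independence of $s_1$ and $s_2$.

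The only real subtlety I anticipate is the appeal to uniqueness of the minimax strategy from uniqueness of Nash, and the careful handling of ``for $\pi_1$-almost every $s_1$'' (so that the conclusion is about the random variables $(s_1,s_2)$ and not just their supports). Both are standard: the first follows from the product structure of the Nash set in zero-sum games, and the second is immediate from $V(s_1)\geq v$ with mean $v$. Everything else is a bookkeeping application of the incentive-compatibility constraints of a direct-revelation equilibrium and of the minimax theorem.
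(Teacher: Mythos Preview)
Your argument is correct and follows the same conceptual arc as the paper's: the incentive-compatibility constraints of a direct-revelation equilibrium say precisely that the joint law of $(s_1,s_2)$ is a correlated equilibrium of the underlying zero-sum game, and uniqueness of the Nash equilibrium forces this correlated equilibrium to be the product Nash measure. The only difference is packaging. The paper dispatches the second step in one sentence by citing Forges (1990) for the fact that a zero-sum game with a unique Nash equilibrium has a unique correlated equilibrium; you instead reprove that fact from first principles via the minimax theorem and the product structure of the Nash set in zero-sum games (pinning down the marginals as the unique optimal strategies, then showing each conditional law $\mu_{s_1}$ is itself minimax and hence equal to $\sigma_2^*$). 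Your route is more self-contained and elementary; the paper's is a two-line appeal to a known result.

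One small quibble: the parenthetical ``(conditional on $\omega$, and hence unconditionally)'' is unnecessary and not obviously justified. The players do not observe $\omega$, so the IC constraints are conditional on $s_i$ only; they directly certify that the \emph{unconditional} law $\pi$ of $(s_1,s_2)$ is a correlated equilibrium, with no detour through the $\omega$-conditional laws needed. This does not affect your argument, which in any case only uses the unconditional $\pi$.
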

Indeed, a zero-sum game with a unique Nash equilibrium has a unique correlated equilibrium which is equal to that Nash equilibrium \citep*{forges1990correlated}. Thus $(s_1,s_2)$ form a Nash equilibrium, and in particular $s_1$ must be independent of~$s_2$.

The intuition behind this result is simple: revealing to player $i$ any information about the recommendation given to player $-i$ gives $i$ an advantage that she can exploit to increase her expected utility beyond the value of the game. But player $-i$ can guarantee that $i$ does not get more than the value, and hence $s_i$ cannot contain any information about $s_{-i}$. Note that Claim~\ref{clm:zero-sum} applies beyond generic zero-sum games to any game with any number of players, provided that it has a unique correlated equilibrium.\footnote{The set of games with a unique correlated equilibrium is open \citep*{viossat2008having},  so a small enough perturbation of (for example) the rock-paper-scissors game will still have a unique correlated equilibrium, although it will not be zero-sum. As a side note, we are unaware of interesting examples of three-player games with a unique mixed correlated equilibrium. In particular, the following question is open, to the best of our knowledge: does there exist a three-player game with a unique correlated equilibrium in which no player plays a pure strategy?}

In the rock-paper-scissors example above, the joint distribution of $(s_1,s_2)$ must be uniform over $\{R,P,S\}\times\{R,P,S\}$, by Claim~\ref{clm:zero-sum}. However, the designer is free to choose the joint distribution between $(s_1,s_2)$ and  $\omega$. Thus her problem is to maximize $\E{u_d(\omega,s_1,s_2)}$ over all  structures in which $(s_1,s_2)$ is uniform over $\{R,P,S\}\times\{R,P,S\}$. Choosing $(s_1,s_2)$ independently of $\omega$ yields a payoff of $\nicefrac{6}{9}$. A straightforward calculation shows that an optimal structure yields her a payoff of  $\nicefrac{10}{9}$. By comparison, in a relaxed problem where the designer is allowed to dictate the players' actions without worrying about the privacy constraint,   she can achieve utility 2 by revealing the state to both players, telling them to both choose scissors when the state is high and rock when the state is low.

Beyond the specifics of the rock-paper-scissors example, the fact that equilibrium signals are private private means that any bound on the informativeness of private private signals yields a bound on the designer's equilibrium utility: if the designer's recommendations only contain a limited amount of information about the state, then she cannot hope that the players' actions efficiently adapt to the state and yield her high utility. Thus our results, including Theorem~\ref{thm:binary-pareto} and Propositions~\ref{prop:mutual-info},~\ref{prop:mutual-info-two}, and~\ref{prop:quadartic-info}, constrain what can be achieved by the designer in any such setting.

\bibliography{divide_info}

\newpage
\appendix
\begin{center}
\textbf{\LARGE{}Appendix}{\LARGE\par}
\par\end{center}

\section{Omitted Proofs}
Note that we sometimes prove results in a different order than the order that they appear in the main text, since some of the results we state earlier are implied by some of the later results. 

\subsection{Preliminary Lemmas}\label{app_prelim_lemmas}
Let $\mathcal{I}=(\omega,s_1, \ldots,s_n)$ be a private private information structure. The signals $s_1,\ldots,s_n$ can be combined into a new signal $s=(s_1,\ldots,s_n)$. The following lemma gives a lower bound on the informativeness of the combined signal  $s$ in terms of the informativeness of the individual signals. It can be seen as superadditivity of mutual information for independent signals. Recall that the mutual information $I(\omega\,;\,s)$ is defined in \eqref{eq:I}.

\begin{lemma}\label{lm_supperadditive}
For a private private information structure $(\omega,s_1,\ldots, s_n)$ the following inequality holds
\begin{equation}\label{eq_superadditive}
\sum_{i=1}^n I\big(\omega\,;\,s_i\big)\leq I\big(\omega\,;\,(s_1,\ldots,s_n)\big).
\end{equation}
\end{lemma}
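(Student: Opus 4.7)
The plan is to apply the chain rule for mutual information and exploit the independence of the signals $s_1,\ldots,s_n$ at each step. Specifically, the chain rule gives
\begin{equation*}
    I\big(\omega\,;\,(s_1,\ldots,s_n)\big) = \sum_{i=1}^{n} I\big(\omega\,;\,s_i \,\big|\, s_1,\ldots,s_{i-1}\big),
\end{equation*}
so it suffices to show the pointwise inequality $I(\omega\,;\,s_i \mid s_1,\ldots,s_{i-1}) \geq I(\omega\,;\,s_i)$ for each $i$.

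To establish this pointwise inequality, I would rewrite the conditional mutual information using the identity
\begin{equation*}
    I\big(\omega\,;\,s_i\,\big|\,s_1,\ldots,s_{i-1}\big) = I\big((\omega, s_1,\ldots,s_{i-1})\,;\,s_i\big) - I\big((s_1,\ldots,s_{i-1})\,;\,s_i\big).
\end{equation*}
The second term on the right vanishes because $s_i$ is independent of $(s_1,\ldots,s_{i-1})$ by the private private assumption. The first term is at least $I(\omega\,;\,s_i)$ by the monotonicity (data processing) property of mutual information, since $\omega$ is a function of the tuple $(\omega, s_1,\ldots,s_{i-1})$. Summing the resulting inequalities over $i$ yields \eqref{eq_superadditive}.

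The main step that requires care is making sure the chain rule and the identities above apply in the generality of the paper's setup, where signals take values in arbitrary measurable spaces rather than finite alphabets. This is standard but should be cited rather than glossed over; the formal definitions via Radon--Nikodym derivatives or the supremum over finite partitions give the chain rule and data processing inequality in full generality, so no further obstacle arises.
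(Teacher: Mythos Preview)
Your argument is correct and is the standard information-theoretic route: chain rule plus data processing. It is genuinely different from the paper's proof, which reduces to $n=2$ and then works directly with posteriors, applying Jensen's inequality to $\log$ and the martingale identity $\E{p_{12}\mid p_i}=p_i$. The paper's computation shows that
\[
I(\omega;(s_1,s_2))-I(\omega;s_1)-I(\omega;s_2)\ \geq\ \log\sum_k \frac{\E{p_1(k)p_2(k)}}{p(k)},
\]
and then uses only that $p_1(k)$ and $p_2(k)$ are \emph{uncorrelated} to conclude the right-hand side is $0$. That is the one thing the paper's approach buys that yours does not: the inequality holds under the weaker hypothesis that the agents' posteriors are merely uncorrelated rather than fully independent, a point the paper explicitly notes after the proof and which is also exploited in the proof of Proposition~\ref{prop:quadartic-info}. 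Your approach, by contrast, uses full independence of $s_i$ from $(s_1,\ldots,s_{i-1})$ to kill the term $I((s_1,\ldots,s_{i-1});s_i)$. On the other hand, your argument is cleaner, avoids the reduction to $n=2$, and makes the role of independence transparent. Your caveat about general measurable signal spaces is well placed; the chain rule and nonnegativity of conditional mutual information do hold in that generality (e.g., via the supremum-over-partitions definition), so there is no hidden obstacle.
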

\begin{proof}
The result for $n\geq 3$ follows from the result for $n=2$ by applying it sequentially to $(s_1,\ldots,s_k)$ for $k \leq n$. Consequently, in the rest of the proof we assume $n=2$.

Our goal is to show that $\Delta = I\big(\omega\,;\,(s_1,s_2)\big)-I\big(\omega\,;\,s_1\big)-I\big(\omega\,;\,s_2\big)\geq 0.$ Let $p_1(k) = p(s_1)(k) = \Pr{\omega=k}{s_1}$, define $p_2$ likewise, and let $p_{12}(k) = p(s_1,s_2)(k) = \Pr{\omega=k}{s_1,s_2}$. Let $p$ denote the prior distribution of $\omega$. By the martingale property $\E{p_{12}}{p_i} = p_i$ and $\E{p_i}=p$. (Here we view $p_1$, $p_2$, and $p_{12}$ as functions of the random variables $(s_1,s_2)$ and always take expectation with respect to the joint distribution of $(\omega, s_1, s_2)$.) 

Using this notation and the definition of mutual information, we can write for $i \in \{1,2\}$
\begin{align*}
 I\big(\omega\,;\,s_i\big) = \E{\sum_k p_i(k)\log\frac{p_i(k)}{p(k)}}~~~\text{and}~~~I\big(\omega\,;\,s_1,s_2\big) = \E{\sum_k p_{12}(k)\log\frac{p_{12}(k)}{p(k)}}.
\end{align*}
By the martingale property we can replace the first $p_i$ by $p_{12}$:
\begin{align*}
 I\big(\omega\,;\,s_i\big) = \E{\sum_k p_{12}(k)\log\frac{p_i(k)}{p(k)}}.
\end{align*}
Thus
\begin{align*}
  \Delta
  &= -\E{\sum_k p_{12}(k)\log\frac{p_1(k)p_2(k)}{p_{12}(k)p(k)}}.
\end{align*}
Applying Jensen's inequality to the logarithm, we get that
\begin{align*}
  \Delta
  \geq -\log\E{\sum_k p_{12}(k)\frac{p_1(k)p_2(k)}{p_{12}(k)p(k)}}.
\end{align*}
By cancelling and rearranging, we get
\begin{align*}
  \Delta
  \geq -\log\sum_k\frac{1}{p(k)}\E{ p_1(k)p_2(k)}.
\end{align*}
Since $p_1(k)$ and $p_2(k)$ are independent,
\begin{align*}
  \Delta
  \geq -\log\sum_k\frac{1}{p(k)}\E{ p_1(k)}\E{p_2(k)}.
\end{align*}
By the martingale property $\E{p_i(k)}=p(k)$, and so $\Delta
  \geq -\log\sum_kp(k) = 0$.
\end{proof}
Note that this proof only used the independence of $(s_1,s_2)$ to the extent that it implies that $p(s_1)$ is uncorrelated with $p(s_2)$.

\bigskip

To show that a given private private information structure $\mathcal{I}=(\omega,s_1,\ldots,s_n)$ is Blackwell-Pareto dominated, we will often use the following technique:  construct an additional informative signal $t$ independent of $s_1,\ldots,s_n$, and {augment one of $s_i$ with~$t$,}
say, the first one. The new information structure $\mathcal{I}' = (\omega,(s_1,t),s_2,\ldots,s_n)$ strictly dominates $\mathcal{I}$ thanks to the following direct corollary of Lemma~\ref{lm_supperadditive}.
\begin{corollary}
\label{cor_extra_independent_signal}
Fix $\omega$, and consider a pair of signals $s$ and $t$ such that
\begin{itemize}
    \item $s$ and $t$ are independent, and
    \item $t$ is not independent of $\omega$.
\end{itemize}
Then the information structure $(\omega, (s,t))$ strictly dominates
$(\omega, s)$ with respect to the Blackwell order.
\end{corollary}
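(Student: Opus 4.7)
The plan is to deduce the corollary as a short consequence of Lemma~\ref{lm_supperadditive} applied to the pair $(s,t)$, combined with the fact that Blackwell equivalent structures share the same mutual information.

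First I would establish the easy direction: $(\omega,(s,t))$ weakly Blackwell dominates $(\omega,s)$. This is immediate, since $s$ is a (measurable) garbling of $(s,t)$ — just forget the $t$-coordinate — and any signal Blackwell dominates any of its garblings. Equivalently, for any continuous convex $\varphi\colon\Delta(\Omega)\to\R$, Jensen's inequality applied to $p(s)=\E{p(s,t)\mid s}$ yields $\E{\varphi(p(s,t))}\ge\E{\varphi(p(s))}$.

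Next I would rule out equivalence. Since the signals $s$ and $t$ are independent, the triple $(\omega,s,t)$ satisfies the hypothesis of Lemma~\ref{lm_supperadditive} with $n=2$, which gives
\begin{equation*}
I(\omega;s)+I(\omega;t)\ \le\ I\big(\omega;(s,t)\big).
\end{equation*}
Because $t$ is not independent of $\omega$, we have $I(\omega;t)>0$, and therefore
\begin{equation*}
I\big(\omega;(s,t)\big)\ >\ I(\omega;s).
\end{equation*}

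The last step is to note that mutual information depends only on the distribution of the posterior $p(\cdot)$ (cf.\ equation~\eqref{eq:I}), hence it is an invariant of the Blackwell equivalence class. If $(\omega,s)$ and $(\omega,(s,t))$ were Blackwell equivalent, they would induce the same distribution of posteriors and hence the same mutual information, contradicting the strict inequality above. Combined with the weak domination established in the first step, this shows that the domination is strict. I do not anticipate a genuine obstacle here; the corollary is essentially a one-line unpacking of Lemma~\ref{lm_supperadditive} together with the Blackwell-invariance of mutual information.
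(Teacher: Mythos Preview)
Your proposal is correct and follows essentially the same approach as the paper's proof: both establish weak dominance trivially, then invoke Lemma~\ref{lm_supperadditive} to get $I(\omega;(s,t))>I(\omega;s)$, and finally use that mutual information is a Blackwell invariant to rule out equivalence. The only cosmetic difference is that the paper phrases the last step as ``$I(\omega;\cdot)$ is the value of a particular decision problem,'' whereas you phrase it as ``mutual information depends only on the posterior distribution''; these are equivalent observations.
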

\begin{proof}
Clearly $(\omega,s)$ is weakly dominated by $(\omega,(s,t))$. We show that this domination is strict.

Since $t$ is informative, $I(\omega;t)>0$. Hence, by Lemma~\ref{lm_supperadditive}, $I(\omega;(s,t)) \geq I(\omega;s) + I(\omega;t) > I(\omega;s).$
Since $I(\omega;s)$ is the value of a particular decision problem (where the indirect utility is given by minus the entropy), it follows that $(\omega,(s,t))$ strictly dominates $(\omega,s)$.
\end{proof}
The next lemma shows that, without loss of generality, induced posteriors are equal to signals, which can be seen as a version of the revelation principle for private private information structures. 
\begin{lemma}\label{lm_posteriors_as_signals}
Any private private information structure $\mathcal{I}=(\omega,s_1,\ldots,s_n)$ is equivalent to 
$\mathcal{J}=(\omega,t_1,\ldots,t_n)$ where each {signal $t_i$ is the posterior $p(s_i)$ in the structure~$\mathcal{I}$.}
\end{lemma}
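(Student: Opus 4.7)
The plan is to verify the two defining properties of the equivalence $\mathcal{I}\sim\mathcal{J}$ separately: first, that $\mathcal{J}$ is a private private information structure, and second, that for each $i$ the marginals $(\omega,s_i)$ and $(\omega,t_i)$ are Blackwell equivalent. Both will follow from the fact that the posterior belief is a sufficient statistic for the signal.

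To see that $\mathcal{J}$ is private private, I would observe that $t_i=p(s_i)$ is a (measurable) deterministic function of $s_i$. Since $(s_1,\ldots,s_n)$ are mutually independent under $\mathcal{I}$, the tuple $(t_1,\ldots,t_n)$, being a coordinatewise measurable function of an independent tuple, is also mutually independent. Hence $\mathcal{J}$ satisfies the defining privacy condition.

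For equivalence, I would handle each agent separately. The direction $(\omega,s_i)\succeq(\omega,t_i)$ is immediate because $t_i$ is a deterministic garbling of $s_i$. For the other direction, I would compute the posterior induced by $t_i$ using the tower property of conditional expectation: for each $k\in\Omega$,
\begin{align*}
    p(t_i)(k) \;=\; \Pr{\omega=k}{t_i} \;=\; \E{\,\one_{\omega=k}\,}{t_i} \;=\; \E{\,\E{\one_{\omega=k}}{s_i}\,}{t_i} \;=\; \E{\,p(s_i)(k)\,}{t_i} \;=\; t_i(k),
\end{align*}
where the last equality uses that $t_i=p(s_i)$ is $\sigma(t_i)$-measurable. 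Thus $p(t_i)=t_i$ almost surely, so the distribution of $p(t_i)$ coincides by construction with the distribution of $p(s_i)$. Since two one-agent structures about $\omega$ that induce the same distribution over posteriors are Blackwell equivalent, we conclude $(\omega,t_i)\succeq(\omega,s_i)$ as well.

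I do not anticipate any real obstacle here: the argument is a routine application of the sufficient-statistic property of posteriors, and the only subtlety is ensuring that $p(s_i)$ is well-defined as a measurable function of $s_i$, which is guaranteed by the standard Borel assumption on the underlying probability space stated in Section~\ref{sec:model}.
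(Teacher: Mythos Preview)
Your proposal is correct and follows essentially the same approach as the paper: the key step in both is the tower-property computation showing $p(t_i)=t_i$, hence $p(t_i)$ and $p(s_i)$ have the same distribution. The paper's proof is just a two-line version of yours; you additionally make explicit why $\mathcal{J}$ is private private (coordinatewise functions of independent variables are independent), which the paper leaves implicit.
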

\begin{proof}
By the law of total expectation, $p(t_i)=t_i$. It follows that $p(s_i)$ and $p(t_i)$ have the same distribution, and so are Blackwell equivalent.
\end{proof}
For a private private information structure $\mathcal{I}=(\omega,s_1,\ldots, s_n)$, recall that we denote by $\mu_i\in \Delta(\Delta(\Omega))$  the distribution of the belief $p(s_i)$. 
Let $\F_n\subset \Delta(\Delta(\Omega))^n$ be the set of distributions $\mu_1,\ldots,\mu_n$ {feasible under the constraint of privacy}, i.e., those that correspond to some private private information structure $\mathcal{I}$ for some prior.  
\begin{lemma}\label{lm_feasible_is_closed}
{For any number of signals $n$,} the set of distributions $\F_n$ {feasible under the constraint of privacy } is compact in the topology of weak convergence.
\end{lemma}
\begin{proof}
Since the set of probability measures $\Delta(\Delta(\Omega))$ is compact, to prove the compactness of $\F_n$, it is enough to check that it is closed. In other words, we need to check that if a sequence of feasible distributions $(\mu_1^{l},\ldots,\mu_n^{l})$ weakly converges to $(\mu_1^{\infty},\ldots,\mu_n^{\infty})$ as $l\to\infty$, then the limit is also feasible.

Let $\mathcal{I}^{l}=(\omega,s_1^{l},\ldots, s_n^{l})$ be an information structure inducing $(\mu_1^{l},\ldots,\mu_n^{l})$. By Lemma~\ref{lm_posteriors_as_signals}, we can assume without loss of generality that the signals $s_i^{l}$ are in $\Delta(\Omega)$ and they coincide with the induced beliefs, i.e., $p\big(s_i^{l}\big)=s_i$. Let $\psi^{l}\in \Delta(\Omega\times \Delta(\Omega)^n)$ be the joint distribution of $\omega$ and the beliefs $s_1^{l},\ldots,s_n^{l}$. By compactness of the set of probability measures, we can extract a subsequence of $\psi^{l}$ converging to some $\psi^{\infty}$. By definition, the marginal of $\psi^{\infty}$ on the belief coordinates equals $\mu_1^{\infty}\times\ldots\times\mu_n^{\infty} $. 

Consider a private private information structure $\mathcal{I}^{\infty}=(\omega,s_1^{\infty},\ldots, s_n^{\infty})$, where signals $s_i^{\infty}$ belong to $\Delta(\Omega)$ and the joint distribution of the state and signals is given by $\psi^{\infty}$. Each signal $s_i^{\infty}$ has distribution $\mu_i^{\infty}$. Let us check that the induced beliefs coincide with the signals, i.e., $p\big(s_i^{\infty}\big)(k)=s_i^{\infty}(k)$ almost surely for each $k\in \Omega$. We verify an equivalent integrated version of this identity:
\begin{equation}
\int \left(\sum_k h(k,s_i^{\infty})p(s_i^\infty)(k)\right)\dd \psi^{\infty}=
\int \left(\sum_k h(k,s_i^{\infty})s_i^{\infty}(k)\right)\dd \psi^{\infty}
\end{equation}
for any continuous function $h$ on $\Omega\times \Delta(\Omega)$.  
Since the left-hand side is just the integral of $h$, this is equivalent to 
\begin{equation}\label{eq_posteriors_are_signals_in_the_limit}
\int h(\omega,s_i^{\infty})\dd \psi^{\infty}=
\int \left(\sum_k h(k,s_i^{\infty})s_i^{\infty}(k)\right)\dd \psi^{\infty}.
\end{equation}
For each $l<\infty$, the beliefs in $\mathcal{I}^l$ coincide with the signals, i.e.,  
$$\int h(\omega,s_i^{l})\dd \psi^{l}=\int \left(\sum_k h(k,s_i^{l})s_i^{l}(k)\right)\dd \psi^{l}.$$
As integration of a continuous function commutes with taking weak limits, letting $l$ go to infinity, we obtain~\eqref{eq_posteriors_are_signals_in_the_limit}.

We conclude that each belief $p\big(s_i^\infty\big)$ in $\mathcal{I}^\infty$ coincides with the signal $s_i^\infty$ and the latter is distributed according to $\mu_i^\infty$. Therefore, $(\mu_1^\infty \ldots,\mu_n^\infty )$ are {feasible under the constraint of privacy}, and so the set of feasible distributions is closed and thus compact.
\end{proof}
The next lemma shows that our order on private private information structures is well-behaved, in the sense that each structure is dominated by a Blackwell-Pareto optimal one: each element of the partially ordered set of private private information structures is upper bounded by a maximal element.
\begin{lemma}
\label{lem:pareto-dominated}
For any private private information structure $\mathcal{I}=(\omega,s_1,\ldots,s_n)$, there exists a Blackwell-Pareto optimal structure $\mathcal{I}'=(\omega,s_1',\ldots,s_n')$ that weakly dominates~$\mathcal{I}$.
\end{lemma}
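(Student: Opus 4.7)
The plan is to reduce the problem to a Zorn-type existence argument on the compact space of feasible belief distributions $\M$. By Lemma~\ref{lm_posteriors_as_signals}, every private private information structure is equivalent to one whose signals coincide with its induced posteriors, so we may identify equivalence classes of structures with elements $\mu=(\mu_1,\ldots,\mu_n)\in\M$. Under this identification, $\succeq$ becomes a genuine partial order on $\M$: reflexivity and transitivity are immediate, and antisymmetry holds because two tuples in $\M$ whose components mutually Blackwell-dominate must have identical component distributions.

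Writing $\mu^{(0)}\in\M$ for the tuple corresponding to $\mathcal{I}$, I would apply Zorn's lemma to the upper set $\mathcal{U}=\{\nu\in\M:\nu\succeq\mu^{(0)}\}$. A maximal element $\nu^{\star}\in\mathcal{U}$ translates back, via Lemma~\ref{lm_posteriors_as_signals}, to a Pareto optimal structure $\mathcal{I}'\succeq\mathcal{I}$, which is the statement we want. So the task reduces to verifying the chain hypothesis: every totally ordered $C\subseteq\mathcal{U}$ has an upper bound in $\mathcal{U}$.

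For this I would exploit the finite-intersection property inside the compact set $\M$ provided by Lemma~\ref{lm_feasible_is_closed}. First, $\mathcal{U}$ is weakly closed in $\M$: it is the intersection, over $i$ and continuous convex $\varphi\colon\Delta(\Omega)\to\R$, of the weakly closed half-spaces $\int\varphi\,\dd\nu_i\geq\int\varphi\,\dd\mu^{(0)}_i$. Given a chain $C\subseteq\mathcal{U}$, let $\overline{C}$ denote its weak closure in $\M$, which is compact and still contained in $\mathcal{U}$. For each $\nu\in C$, set $D_\nu=\{\theta\in\overline{C}:\theta\succeq\nu\}$; each $D_\nu$ is closed by the same integral characterization of Blackwell dominance. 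Since $C$ is totally ordered, any finite subfamily $\{D_{\nu_1},\ldots,D_{\nu_k}\}$ contains the $\succeq$-largest of $\nu_1,\ldots,\nu_k$, hence has nonempty intersection. Compactness of $\overline{C}$ then yields $\bigcap_{\nu\in C}D_\nu\neq\emptyset$, and any element of this intersection is an upper bound for $C$ in $\mathcal{U}$.

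The main obstacle is ensuring that the candidate upper bound is itself a feasible private private structure rather than merely a weak limit of such; this is precisely the content of Lemma~\ref{lm_feasible_is_closed}, which is the technical backbone of the argument. The remainder is a routine application of Zorn's lemma, with only the small bookkeeping needed to check antisymmetry of $\succeq$ on $\M$ and the weak-continuity characterization of the Blackwell order in terms of integrals against continuous convex functions.
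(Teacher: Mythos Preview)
Your proof is correct and follows essentially the same approach as the paper: both reduce to the compact space $\M$ of feasible belief tuples via Lemma~\ref{lm_posteriors_as_signals}, invoke compactness from Lemma~\ref{lm_feasible_is_closed}, and use weak-closedness of the Blackwell order to obtain a maximal element dominating the given structure. The paper compresses the existence of a maximal element into a single sentence (``$\M$ is compact and the dominance order is continuous, so there is a maximal element''), whereas you spell out the underlying Zorn/finite-intersection argument explicitly; the content is the same.
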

\begin{proof}
Recall that $\mathcal{I}\preceq \mathcal{J}$ if for any continuous convex $\varphi \colon \Delta(\Omega) \to \R$ and any~$i=1,\ldots, n$,
\begin{equation}\label{eq_dominance_mu}
\int \varphi(q)\dd \mu_i(q)\leq \int \varphi(q)\dd \nu_i(q)
\end{equation}
where $\mu_i$ and $\nu_i$ are the distributions {of beliefs induced by  signal $i$  in $\mathcal{I}$ and $\mathcal{J}$, respectively. As the map $\mu \mapsto \int \varphi\dd\mu$ of integration against a continuous function $\varphi$ is, by definition, a continuous function in the weak topology on $\Delta(\Delta(\Omega))$, it follows that this partial order is continuous.}

   Let $(\mu_1,\ldots,\mu_n)$ be the distributions of posteriors induced by $\mathcal{I}$ and let $\F$ be the set of distributions {feasible under the constraint of privacy }, endowed with the weak topology. As $\F$ is compact by Lemma~\ref{lm_feasible_is_closed} and the dominance order is continuous, there is  a maximal element $(\nu_1,\ldots,\nu_n)\in \F_n$ dominating $(\mu_1,\ldots,\mu_n)$. Since $(\nu_1,\ldots,\nu_n)$ is feasible, it is induced by some private private information structure $\mathcal{I}'$. By the construction, $\mathcal{I}'$ dominates $\mathcal{I}$ and is Blackwell-Pareto optimal.
\end{proof}

\subsection{Proof of Proposition~\ref{prop:associated}}\label{app_proof_representation}
We need to show that, given a private private information structure $\mathcal{I}=(\omega,s_1,\ldots,s_n)$ with $\Omega=\{0,\ldots m-1\}$, there is an equivalent structure associated with a partition $\cA=(A_0,\ldots, A_{m-1})$ of $[0,1]^n$.
The construction relies on two lemmas. Lemma~\ref{lm_uniform_signals_wlog} shows that assuming signals $s_i$ are uniform on $[0,1]$ is without loss of generality. Hence it remains to show that there is an equivalent information structure where  signal realizations determine the state. This is done using a secret-sharing scheme from Lemma~\ref{lm_secret}.

\begin{lemma}\label{lm_uniform_signals_wlog}
For any private private information structure $\mathcal{I}=(\omega,s_1,\ldots,s_n)$, there is an equivalent  private private information structure $\mathcal{I}'=(\omega,s_1',\ldots,s_n')$ such that each $s_i'$ is uniformly distributed on $[0,1]$. 
\end{lemma}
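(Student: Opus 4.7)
The plan is to construct, for each agent $i$, a new signal $s_i'$ that is uniformly distributed on $[0,1]$, carries exactly the same information about $\omega$ as $s_i$, and such that the resulting signals $(s_1',\ldots,s_n')$ remain mutually independent. By Lemma~\ref{lm_posteriors_as_signals}, I may assume that each $s_i$ equals its posterior $p(s_i)$ and thus takes values in the standard Borel space $\Delta(\Omega)$; let $\mu_i$ denote its distribution. Since the underlying probability space is standard and nonatomic, I enlarge it so that it also carries auxiliary random variables $u_1,\ldots,u_n$ that are i.i.d.\ uniform on $[0,1]$ and jointly independent of $(\omega,s_1,\ldots,s_n)$.

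Next, for each $i$, I would reduce to the case of a real-valued signal by composing with a Borel isomorphism between $\Delta(\Omega)$ and $[0,1]$, and then apply the randomized probability integral transform. Writing $F_i$ for the cumulative distribution function of $s_i$ after this reduction, define
\begin{equation*}
s_i' \;=\; F_i(s_i-)\,+\,u_i\cdot\bigl(F_i(s_i)-F_i(s_i-)\bigr).
\end{equation*}
A short direct check shows that $s_i'$ is uniform on $[0,1]$ regardless of whether $\mu_i$ has atoms---the auxiliary $u_i$ precisely smooths each atom into its corresponding sub-interval---and that $s_i$ is almost surely recovered from $s_i'$ via the quantile function $F_i^{-1}$ defined as in \eqref{eq:f-inv}.

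The verification then has three parts. First, uniformity of each $s_i'$ is immediate from the construction. Second, $(\omega,s_i)$ and $(\omega,s_i')$ are Blackwell equivalent: on one hand, $s_i'=f_i(s_i,u_i)$ with $u_i$ independent of $(\omega,s_i)$, so $s_i'$ is a garbling of $s_i$; on the other, $s_i=F_i^{-1}(s_i')$ almost surely, so $s_i$ is a measurable function of $s_i'$. Third, $(s_1',\ldots,s_n')$ are mutually independent, because each $s_i'$ is a function of the pair $(s_i,u_i)$, the pairs are mutually independent (the $s_j$'s are independent by the private private assumption, and the $u_j$'s were constructed independent of everything), and independence is preserved under coordinate-wise measurable maps.

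The main technical obstacle is handling atoms of the distributions $\mu_i$: without them the deterministic transformation $F_i(s_i)$ would already yield a uniform output, but in general it does not, which is why the construction requires enlarging the probability space with the auxiliary uniforms $u_i$. The other subtle point is the independence bookkeeping: the $u_i$'s must be independent not only of all the $s_j$'s but also of $\omega$, so that the garbling direction of the Blackwell equivalence goes through; and they must be mutually independent, so that the private private structure is preserved after passing to the $s_i'$'s.
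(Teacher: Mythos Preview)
Your argument is correct and complete: the randomized probability integral transform does exactly what you need, and your bookkeeping on independence and Blackwell equivalence is sound.

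The paper's proof, however, takes a shorter and more abstract route. Rather than passing through the posterior map, a Borel isomorphism, and the explicit transform $s_i' = F_i(s_i-) + u_i\,(F_i(s_i)-F_i(s_i-))$, the paper simply sets $t_i=(s_i,r_i)$ with $r_i$ an independent uniform on $[0,1]$, observes that $t_i$ is equivalent to $s_i$ (the extra coordinate is pure noise), and then invokes the isomorphism theorem for standard nonatomic probability spaces to reparametrize each $t_i$ as a uniform on $[0,1]$. This sidesteps the case analysis on atoms entirely: pairing with $r_i$ guarantees nonatomicity in one stroke, and the abstract isomorphism does the rest. Your approach buys an explicit, computable formula for $s_i'$ together with an explicit inverse $F_i^{-1}$, which could be useful if one later needs the map concretely; the paper's approach buys brevity and avoids the preliminary reduction via Lemma~\ref{lm_posteriors_as_signals} and the choice of Borel isomorphism $\Delta(\Omega)\to[0,1]$.
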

\begin{proof}
Consider the information structure $\mathcal{J}=(\omega,t_1,\ldots,t_n)$ where $t_i=(s_i,r_i)$, and each $r_i$ is independent and uniformly distributed on $[0,1]$. Clearly, $\mathcal{I}$ and $\mathcal{J}$ are equivalent.  As $t_i$ is nonatomic, and since all standard nonatomic probability spaces are isomorphic, $t_i$ can be reparametrized to be uniform on $[0,1]$.
\end{proof}

We say that a signal $r$ is \emph{split} into $r_1$ and $r_2$ if $r$ is a function of $r_1$ and $r_2$, i.e., $r=f(r_1,r_2)$.
\begin{lemma}\label{lm_secret}
A signal $r$ distributed uniformly  on $[0,1]$ can be split into $r_1$ and $r_2$ such that each $r_i$ is uniformly distributed on $[0,1]$ and the three random variables $r$, $r_1$, and $r_2$ are pairwise independent. Furthermore, if $t$ is an additional signal that is independent of $r$, then we can take the pair $(r_1,r_2)$ to be independent of $t$.
\end{lemma}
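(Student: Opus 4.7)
The plan is to use the standard ``one-time pad'' trick: let $r_1$ be uniform on $[0,1]$ and independent of $t$, and set $r_2 = t - r_1 \pmod 1$. Since the paper works on a standard nonatomic Borel probability space, we can enlarge the space if necessary (without loss of generality, via a product with an auxiliary coordinate) to accommodate such an $r_1$. Then $t$ is recovered from $(r_1,r_2)$ via $t = r_1 + r_2 \pmod 1$, establishing the splitting property $t=f(r_1,r_2)$.

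Next I would verify uniformity and pairwise independence by a direct conditioning argument. Conditioning on $r_1$, the variable $r_2 = t - r_1 \pmod 1$ is a deterministic rotation of $t$, hence uniform on $[0,1]$ (since $t$ is uniform and independent of $r_1$); this gives both marginal uniformity of $r_2$ and independence of $r_1$ and $r_2$. For independence of $t$ and $r_2$, condition instead on $t$: then $r_2 = t - r_1 \pmod 1$ is a deterministic rotation of $r_1$, and since $r_1$ is uniform and independent of $t$, the conditional law of $r_2$ given $t$ is again uniform on $[0,1]$. Independence of $t$ and $r_1$ holds by construction. This handles all three pairwise independence claims.

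For the ``furthermore'' clause, I would strengthen the construction by choosing $r_1$ independent of the pair $(t,t')$ rather than of $t$ alone; this is possible because, after an isomorphism, $(t,t')$ lives on a standard Borel space and can be placed on a product with an independent uniform coordinate. Then $r_2$ is a measurable function of $(t,r_1)$, and since both coordinates of $(t,r_1)$ are independent of $t'$ jointly (because $r_1 \perp (t,t')$ and $t \perp t'$ imply $(t,r_1) \perp t'$), the same holds for $r_2$, and indeed for the pair $(r_1,r_2)$. The pairwise independence of $t, r_1, r_2$ verified above remains valid since enlarging $r_1$'s independence does not weaken it.

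I do not expect a serious obstacle. The only delicate step is the joint-independence bookkeeping in the last paragraph: one must note that independence of $r_1$ from $(t,t')$ combined with independence of $t$ from $t'$ yields joint independence of $(t,r_1)$ from $t'$, so that the derived variable $r_2 = t - r_1 \pmod 1$ inherits independence from $t'$. This, together with the modular-arithmetic symmetry used for uniformity, is the whole content of the argument.
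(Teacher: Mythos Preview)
Your proposal is correct and is essentially the paper's own argument: the paper takes $r_1$ uniform and independent of $(t,t')$ and sets $r_2=\lfloor r_1+t\rfloor$ (fractional part), which is the same one-time-pad construction up to a sign, and then asserts the pairwise-independence and independence-from-$t'$ claims that you spell out more carefully.
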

This lemma extends the classic secret sharing idea from cryptography which applies to discrete random variables. The proof, by construction, is immediate.\footnote{We are thankful to Tristan Tomala for suggesting this construction.} 
\begin{proof}
{Denote by $\lfloor x \rfloor$ the number $x$ rounded down to the nearest integer, and denote by $\mathrm{frac}(x)=x-\lfloor x \rfloor$ the fractional part of $x \in \R$. Take $r_1$ independent of both $r$ and $t$ and distributed uniformly on $[0,1]$, and let $r_2=\mathrm{frac}(r_1 + r)$. Then $r = \mathrm{frac}(r_2-r_1)$ and $r_1$, $r_2$, and $r$ are easily seen to be pairwise independent and also independent of $t$ altogether. }
\end{proof}

With the help of Lemmas~\ref{lm_uniform_signals_wlog} and~\ref{lm_secret}, we are ready to prove Proposition~\ref{prop:associated}.  
\begin{proof}[Proof of Proposition~\ref{prop:associated}]
We are given a private private information structure $\mathcal{I}=(\omega,s_1,\ldots,s_n)$ with sets of signal realizations $S_i$, $i=1,\ldots, n$. We aim to construct an equivalent one, $\mathcal{I}'$, where each signal $s_i'$ is uniformly distributed on $[0,1]$ and the realization of signals $(s_1',\ldots, s_n')$ determines the state or, equivalently, $\mathcal{I}'$ is associated with some partition $\cA=(A_1,\ldots, A_{m-1})$ of $[0,1]^n$.

By Lemma~\ref{lm_uniform_signals_wlog}, we can find a private private information structure $(\omega,t_1,\ldots,t_n)$ equivalent to $\mathcal{I}$ where each $t_i$ is uniformly distributed in $[0,1]$. If the signals $(t_1,\ldots, t_n)$ determine the state, then the proof is completed.

Consider the case where $(t_1,\ldots, t_n)$ do not determine the state $\omega$. To capture the  uncertainty in $\omega$ remaining after the  signals have been realized, we construct a new signal $t$ as follows.

Let $q:\, [0,1]^n\to \Delta(\Omega)$ be a conditional distribution of $\omega$ given all the signals, i.e., $q(x_1,\ldots, x_n)(k)=\Pr{\omega =k}{t_1=x_1,\ldots, t_n=x_n}$ for any $k \in \Omega$. 
With each distribution $q\in \Delta(\Omega)$ we associate a partition of $[0,1)$ into $m$ intervals 
$$B_k(q)=\left[\sum_{l=0}^{k-1}q(l),\ \sum_{l=0}^{k}q(l)\right),\qquad k=0,\ldots, m-1.$$
The length of $B_k(q)$ equals the mass assigned by $q$ to $\omega=k$.
Let $r$ be a random variable uniformly distributed on $[0,1]$ and independent of $(t_1,\ldots, t_n)$. Consider a new state variable $\omega'\in \Omega$ such that $\omega'=k$ whenever $r\in B_k\big(q(t_1,\ldots, t_n)\big)$. By definition, the joint distributions of $(\omega,t_1,\ldots, t_n)$ and $(\omega',t_1,\ldots, t_n)$ coincide and, therefore, the two  structures are equivalent.

The new state $\omega'$ is determined by the realizations of $t_1,\ldots, t_n$  and the new signal $r$. Using Lemma~\ref{lm_secret}, we split the signal $r$ into $r_1$ and $r_2$ that are independent of each other, and where each $r_i$ is independent of $r$. Note that, by this lemma, we can take $r_1$ and $r_2$ to be independent of $(t_1,\ldots,t_n)$. Since each $r_i$ is uninformative of $r$, the  structure $(\omega',(t_1,r_1),(t_2,r_2),t_3,\ldots, t_n)$,  where $r_1$ {augments the first signal} 
and  $r_2$ {augments} the second one, is a private private structure equivalent to $\mathcal{I}$. Since $r$ is a function of $r_1$ and $r_2$, the signals $(t_1,r_1),(t_2,r_2),t_3,\ldots, t_n$ determine the state. 

It remains to reparameterize the {first two signals} 
so that, instead of being uniform on $[0,1]^2$, they become uniform on $[0,1]$. Consider any  bijection $h:\, [0,1]^2\to[0,1]$ preserving the Lebesgue measure; such a bijection exists since both are standard nonatomic spaces. 
Define $s_1'=h(t_1,r_1)$, $s_2'=h(t_2,r_2)$, and $s_i'=t_i$ for $i=3,\ldots, n$. The private private information structure $\mathcal{I}'=(\omega',s_1',s_2',\ldots, s_n')$ is equivalent to $\mathcal{I}$, all the signals are uniform on $[0,1]$, and the realization of signals determines $\omega'$.
\end{proof}

\subsection{Proof of Theorem~\ref{thm:pareto}}\label{app:Pareto_via_uniqueness}

We formulate and prove an extension of Theorem~\ref{thm:pareto} applicable to non-binary sets of states $\Omega=\{0,1,\ldots, m-1\}$.

Consider a partition of $[0,1]^n$ into $m$ measurable sets $\cA=(A_0,\ldots A_{m-1})$.
Recall that the structure $\mathcal{I}=(\omega,s_1,\ldots,s_m)$ is said to be associated with a partition $\cA$ if all the signals are uniform on $[0,1]$ and $\omega=k$ whenever $(s_1,\ldots s_n)\in A_k$.

We say that two partitions $\cA=(A_0,\ldots A_{m-1})$ and $\cB=(B_0,\ldots B_{m-1}) $ are \emph{equal} if $A_k$ and $B_k$ differ by a set of zero Lebesgue measure for each $k$. Recall that the projection of a measurable set $A \subseteq [0,1]^n$ on the $i$th coordinate is denoted by $\alpha^A_i$ (see~\S\ref{sec_Pareto_via_tomography}).
The notion of sets of uniqueness from~\S\ref{sec_Pareto_via_tomography} extends to partitions as follows.
\begin{definition}
A partition $\cA=(A_0,\ldots, A_{m-1})$ is a \emph{partition of uniqueness} if for any partition $\cB=(B_0,\ldots, B_{m-1})$ such that $\alpha_i^{A_k}=\alpha_i^{B_k}$ for all $i$ and $k$, it holds that $\cA=\cB$.
\end{definition}
\begin{theorem}[Extension of Theorem~\ref{thm:pareto} to $m$ states]\label{thm:pareto_appendix}
A private private information structure $\mathcal{I}$ is Blackwell-Pareto optimal if and only if it is equivalent to a structure associated with a partition of uniqueness $\cA$.
\end{theorem}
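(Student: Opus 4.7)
The plan is to prove both implications by reducing, via Proposition~\ref{prop:associated}, to a statement about partitions $\cA=(A_0,\ldots,A_{m-1})$ representing $\mathcal{I}$. For the forward direction, I would argue by contradiction: assume $\mathcal{I}$ is Pareto optimal with representative $\cA$ but that $\cA$ is not a partition of uniqueness, so there exists another partition $\cB=(B_0,\ldots,B_{m-1})$ with matching projections $\alpha_i^{B_k}=\alpha_i^{A_k}$ for all $i,k$. I would introduce an auxiliary variable $U$ uniform on $[0,1]$ independent of signals $(s_1,\ldots,s_n)$ uniform on $[0,1]^n$, and define $\omega$ by inverse-CDF sampling from the mixed conditional distribution $\eta_x(k):=\tfrac12\mathbf{1}_{A_k}(x)+\tfrac12\mathbf{1}_{B_k}(x)$ at $x=(s_1,\ldots,s_n)$ using $U$. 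The resulting structure is equivalent to $\mathcal{I}$ (each agent's posterior depends only on the common projections), but $\omega$ is no longer a function of the signals alone. Since $\cA\neq\cB$, the distribution $\eta_x$ is non-constant on a positive-measure set of $x$, and a direct calculation shows that $U$ is unconditionally informative about $\omega$ while remaining independent of $(s_1,\ldots,s_n)$. By Corollary~\ref{cor_extra_independent_signal}, revealing $U$ to agent~1 strictly Blackwell dominates her signal, contradicting Pareto optimality.

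For the backward direction I would proceed by contrapositive: assume $\mathcal{I}$ associated with $\cA$ is not Pareto optimal and produce a partition $\cB\neq\cA$ with the same projections. Let $\mathcal{J}$ strictly dominate $\mathcal{I}$, associated (by Proposition~\ref{prop:associated}) with a partition $\cA^{\mathcal{J}}$. For each agent $i$, Blackwell's theorem provides a Markov kernel $K_i$ with independent randomization such that $(\omega,K_i(s_i^{\mathcal{J}}))$ has the same joint law as $(\omega,s_i)$. The garbled signals $(K_1(s_1^{\mathcal{J}}),\ldots,K_n(s_n^{\mathcal{J}}))$ are then jointly uniform and independent on $[0,1]^n$, and the functions $f_k(t_1,\ldots,t_n):=\Pr{\omega=k}{K_1(s_1^{\mathcal{J}})=t_1,\ldots,K_n(s_n^{\mathcal{J}})=t_n}$ form a fuzzy partition with $f_k\in[0,1]$, $\sum_k f_k\equiv 1$, and axis projections equal to those of $\cA$.

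The crux---and the main obstacle---is showing $(f_0,\ldots,f_{m-1})\neq(\mathbf{1}_{A_0},\ldots,\mathbf{1}_{A_{m-1}})$. If the two coincided, then $\omega$ would be determined by the garbled signals via $\cA$, so $I(\omega;(K_i(s_i^{\mathcal{J}}))_i)=H(\omega)=I(\omega;(s_i^{\mathcal{J}})_i)$. The identity $I(\omega;(X_1,\ldots,X_n))=\sum_i I(\omega;X_i)+C((X_i)_i\mid\omega)$ for mutually independent $X_i$'s, with $C$ denoting conditional total correlation (a multivariate form of the gap exploited in the proof of Lemma~\ref{lm_supperadditive}), would then force $C((K_i(s_i^{\mathcal{J}}))_i\mid\omega)-C((s_i^{\mathcal{J}})_i\mid\omega)=\sum_i I(\omega;s_i^{\mathcal{J}})-\sum_i I(\omega;s_i)>0$ by strict Pareto dominance. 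But since the garblings operate coordinatewise with noise independent of $\omega$, the data-processing inequality applied to the KL-divergence representation of $C$, conditionally on $\omega$, yields the reverse inequality $C((K_i(s_i^{\mathcal{J}}))_i\mid\omega)\le C((s_i^{\mathcal{J}})_i\mid\omega)$, a contradiction.

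To conclude, I would extend the extreme-point theorem of \cite{gutmann1991existence} to the $m$-state case: the compact convex set of $[0,1]$-valued $m$-tuples on $[0,1]^n$ summing pointwise to $1$ with prescribed axis projections has only indicator-tuple extreme points, via a standard perturbation argument that exhibits any non-indicator configuration as the midpoint of two nearby feasible perturbations. Since this set contains both $(\mathbf{1}_{A_0},\ldots,\mathbf{1}_{A_{m-1}})$ and the non-indicator tuple $(f_0,\ldots,f_{m-1})$, it is not a singleton and must have at least two distinct extreme points; the additional indicator tuple corresponds to a partition $\cB\neq\cA$ sharing $\cA$'s projections, so $\cA$ is not a partition of uniqueness.
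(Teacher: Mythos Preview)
Your outline matches the paper's proof closely: the forward direction (mix $\cA$ with a competing partition $\cB$, sample $\omega$ via a uniform auxiliary variable, reveal that variable to one agent) is exactly the route the paper takes via Lemmas~\ref{lm_PO_implies_perfection} and~\ref{lm_PO_implies_uniqueness}, and your final extreme-point step is the content of the paper's Lemma~\ref{lm_uniqueness_as_extreme_point}. The one genuine point of divergence is how you show the fuzzy partition $(f_0,\ldots,f_{m-1})$ built from the garbled signals differs from $(\one_{A_0},\ldots,\one_{A_{m-1}})$.

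The paper handles this step with an elementary conditional-independence argument (Lemma~\ref{lm_garbling_imperfection}): if the garbled signals $(K_i(s_i^{\mathcal J}))_i$ determined $\omega$, then for each $i$ the state would be a function of $(K_j(s_j^{\mathcal J}))_{j\neq i}$, which is independent of $s_i^{\mathcal J}$; hence $s_i^{\mathcal J}$ would itself be a garbling of $K_i(s_i^{\mathcal J})$, forcing equivalence of $\mathcal I$ and $\mathcal J$. Your route instead uses the identity $I(\omega;(X_i)_i)=\sum_i I(\omega;X_i)+C((X_i)_i\mid\omega)$ together with data processing for the conditional total correlation under coordinatewise kernels. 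This is correct---the garbling noise is independent of $\omega$, so both the joint and the product of conditional marginals are pushed forward by the same product kernel, and strict convexity of $-H$ gives the strict inequality $\sum_i I(\omega;s_i^{\mathcal J})>\sum_i I(\omega;s_i)$---but it is heavier machinery for the same conclusion. The paper's two-line independence argument buys simplicity; your argument has the advantage of making the information-theoretic tension quantitative. Either way the overall architecture is the same.
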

Note that in the case of $m=2$ states, a set $A_1$ in a partition $\cA=(A_0,A_1)$ determines $A_0=[0,1]^n\setminus A_1$. Hence, $\cA=(A_0,A_1)$ is a partition  of uniqueness if and only if $A_1$ is a set of uniqueness. Hence, Theorem~\ref{thm:pareto} is an immediate corollary of its extended version. 
For an application of the theorem for $m>2$, see the example contained in Appendix~\ref{app_non_uniqueness}. This example also demonstrates that   partitions of uniqueness are not necessarily composed of sets of uniqueness  for $m> 2$ and, hence, the requirement of a partition to be a partition of uniqueness does not boil down to restrictions on its elements unless $m=2$.  
\medskip

The proof of the theorem is split into a sequence of lemmas. We say that a private private information structure is \emph{perfect} if the {realizations of all the signals together determine}
the realization of $\omega$, i.e., there exists a function $f\colon S_1\times\ldots\times S_n\to \Omega$ such that $\omega=f(s_1,\ldots,s_n)$. In particular, a structure with signals uniform in $[0,1]$ is associated with some partition if and only if it is perfect.

The next lemma shows that perfection is necessary for Blackwell-Pareto optimality. 
\begin{lemma}\label{lm_PO_implies_perfection}
If a private private information structure $\mathcal{I}=(\omega,s_1,\ldots,s_n)$ is equivalent to a structure that is not perfect, then $\mathcal{I}$ is not Blackwell-Pareto optimal.
\end{lemma}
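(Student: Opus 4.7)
Since Pareto optimality is invariant under equivalence of structures, it suffices to prove that any non-perfect private private structure $\mathcal{J}=(\omega,t_1,\ldots,t_n)$ is itself not Pareto optimal. The plan is to construct an additional signal $t$, independent of $(t_1,\ldots,t_n)$ but informative about $\omega$, and then give $t$ exclusively to agent~$1$: this yields a strict Blackwell improvement for her while preserving privacy, by Corollary~\ref{cor_extra_independent_signal}.

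Following the construction used in the proof of Proposition~\ref{prop:associated}, let $q(t_1,\ldots,t_n)$ be a regular version of the conditional distribution of $\omega$ given the signals. Introduce $t$ uniformly distributed on $[0,1]$ and independent of $(\omega,t_1,\ldots,t_n)$, and define $\omega'$ by $\omega'=k$ iff $t\in B_k(q(t_1,\ldots,t_n))$, where the intervals $B_k$ are those from that proof. Then $(\omega',t_1,\ldots,t_n)$ has the same joint distribution as $\mathcal{J}$, hence is equivalent to it, and in the enlarged structure $\omega'$ is a deterministic function of $(t_1,\ldots,t_n,t)$.

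The critical step is to check that $t$ is not independent of $\omega'$. By construction, $t$ is independent of $(t_1,\ldots,t_n)$. For the dependence on $\omega'$, non-perfection of $\mathcal{J}$ guarantees that on an event of positive probability $q(t_1,\ldots,t_n)$ fails to be a point mass, so at least two of the intervals $B_k(q(t_1,\ldots,t_n))$ are non-degenerate there. A direct computation then shows that $\Pr{\omega'=k}{t\le a}$ is a non-constant function of $a$ for some $k$, which exhibits the required dependence.

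Given this, consider the structure $\mathcal{J}'=(\omega',(t_1,t),t_2,\ldots,t_n)$. Independence of $t$ from $(t_1,\ldots,t_n)$ makes $((t_1,t),t_2,\ldots,t_n)$ mutually independent, so $\mathcal{J}'$ is private private. Agents $i\ge2$ have the same signal as in the equivalent structure $(\omega',t_1,\ldots,t_n)$, hence the same distribution of posteriors. For agent~$1$, Corollary~\ref{cor_extra_independent_signal} applied to $s=t_1$ and the independent, informative signal $t$ gives that $(\omega',(t_1,t))$ strictly Blackwell-dominates $(\omega',t_1)$. Hence $\mathcal{J}'$ strictly dominates a structure equivalent to $\mathcal{J}$, so $\mathcal{J}$ is not Pareto optimal. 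The main obstacle in executing this plan is verifying the informativeness of the constructed $t$; the rest follows mechanically.
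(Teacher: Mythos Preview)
Your proposal is correct and follows essentially the same approach as the paper: reduce to an imperfect structure, build the auxiliary uniform signal $t$ and the redefined state $\omega'$ via the interval partition $B_k(q)$, and then hand $t$ to agent~$1$, invoking Corollary~\ref{cor_extra_independent_signal}. The one place where the paper is more explicit than your ``direct computation'' is the informativeness step: it first fixes a specific state $k_0$ with $q(\cdot)(k_0)\notin\{0,1\}$ on a positive-probability event and relabels so that $k_0=0$; then $B_0(q)=[0,q(0))$ is the leftmost interval, and $\Pr{\omega'=0}{t}=1-G(t)$ with $G$ the CDF of $q(t_1,\ldots,t_n)(0)$, which is visibly non-constant on $(0,1)$. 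This relabeling trick is exactly what turns your deferred computation into a one-line argument.
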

The construction of the Blackwell-Pareto improvement resembles the proof of Proposition~\ref{prop:associated} except for the fact that the newly constructed signal {is revealed entirely on top of one of the existing signals thus strictly improving its informativeness in the Blackwell order,} by Corollary~\ref{cor_extra_independent_signal}.
\begin{proof}
Without loss of generality, $\mathcal{I}$ itself is imperfect.
Let 
$q\colon S_1\times\ldots\times S_n\to\Delta(\Omega)$
be the  distribution of $\omega$ conditional on $s_1=x_1,\ldots, s_n=x_n$, i.e., $q(x_1,\ldots,x_n)(k)=\Pr{\omega=k}{ s_1=x_1,\ldots, s_n=x_n}$, $k=0,\ldots,m-1$. Since $\mathcal{I}$ is not perfect, we can find a state $k_0 \in \Omega$ such that the event  $\{\omega=k_0\}$ is not always determined by the signals. That is, the random variable $q(s_1,\ldots,s_n)(k_0)$ does not always take values in $\{0,1\}$.
Without loss of generality, we assume that $k_0=0$.
With each $q\in \Delta(\Omega)$ we associate a partition of $[0,1)=\bigsqcup_{k\in \Omega} B_k(q)$, where 
$$B_k(q)=\big[q(\{0,\ldots,k-1\}),\ q(\{0,\ldots,k\})\big),\qquad k=0,\ldots, m-1$$
so that the length of $B_k(q)$  equals $q(k)$. 

We construct a new equivalent structure with an extra signal $t$ as in the proof of Proposition~\ref{prop:associated}. Let $t$ be a random variable uniformly distributed on $[0,1]$ and independent of  $s_1,\ldots, s_n$. Define a new state $\omega'$ as a function of these variables in the following way: $\omega'=k$  whenever $t\in B_k\big(q(s_1,\ldots, s_n)\big).$
The joint distribution of $(\omega',s_1,\ldots,s_n)$ coincides with that of $(\omega,s_1,\ldots,s_n)$ and, hence, the two structures are equivalent.

To get a Blackwell-Pareto improvement, we {augment the first signal with $t$}
and obtain a private private information structure $\mathcal{I}'=(\omega', (s_1,t),s_2,\ldots, s_n)$. To argue that $\mathcal{I}'$ is indeed a Blackwell-Pareto improvement we need to show that $t$ itself is an informative signal about $\omega'$, i.e., the posterior $p(t)\in\Delta(\Omega)$ is not equal to the prior $p$ with a positive probability. It is enough to show $p(t)(k_0)$ takes different values for $t$ in $[0,\varepsilon]$ and in $[1-\varepsilon,1]$. As we assume without loss of generality that $k_0=0$, the interval $B_{k_0}$ is the leftmost one in the partition, and so $p(t)(k_0) = p(t)(0)=\Pr{t< q(s_1,\ldots,s_n)(0)}{t}.$
That is, if we denote by $Q$ the cumulative distribution function of $q(s_1,\ldots,s_n)(0)$, then $p(t)(k_0) = 1-G(t)$. Since  $G$ is a non-constant function on $(0,1)$ by our assumption on $k_0$, the induced belief $p(t)(k_0)$ is not a constant. Thus $t$ is informative. By Corollary~\ref{cor_extra_independent_signal}, this implies that the signal $(s_1,t)$ which 
in $\mathcal{I}'$ strictly dominates the signal $s_1$ 
in $\mathcal{I}$. As the {all other signals remain}
the same in the two structures,  $\mathcal{I}'$ strictly Blackwell-Pareto dominates~$\mathcal{I}$.
\end{proof}
The next step is to show that only structures corresponding to partitions of uniqueness can be Blackwell-Pareto optimal.
\begin{lemma}
\label{lm_PO_implies_uniqueness}
If a private private information structure $\mathcal{I}=(\omega,s_1,\ldots,s_n)$ is Blackwell-Pareto optimal, then $\mathcal{I}$ is equivalent to  a structure associated with a partition of uniqueness. 
\end{lemma}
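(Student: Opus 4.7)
The plan is to argue the contrapositive: if $\mathcal{I}$ is equivalent to a partition-associated structure whose partition fails to be a partition of uniqueness, then $\mathcal{I}$ is not Pareto optimal. Combined with Proposition~\ref{prop:associated}, this yields the lemma.

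First, I would invoke Proposition~\ref{prop:associated} to replace $\mathcal{I}$ by an equivalent structure $\mathcal{I}_\cA$ associated with some partition $\cA = (A_0,\ldots,A_{m-1})$ of $[0,1]^n$. Since Pareto optimality depends only on the equivalence class, it suffices to show that whenever $\mathcal{I}_\cA$ is Pareto optimal, $\cA$ is a partition of uniqueness. Suppose on the contrary that there exists $\cB = (B_0,\ldots,B_{m-1}) \neq \cA$ with $\alpha_i^{A_k} = \alpha_i^{B_k}$ for every agent $i$ and state $k$.

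I would then construct a coin-flip mixture of the two partition-associated structures, adapting the sketch given in the main text for the binary-state case. Let $(s_1,\ldots,s_n)$ be uniform on $[0,1]^n$ with independent coordinates, and introduce an auxiliary fair Bernoulli variable $\theta$ independent of the signals. Define a state $\omega'$ by $\omega' = k$ on $\{\theta = 0,\, (s_1,\ldots,s_n) \in A_k\} \cup \{\theta = 1,\, (s_1,\ldots,s_n) \in B_k\}$, with $\theta$ hidden from the agents. The tuple $(\omega', s_1, \ldots, s_n)$ is private private because the signals remain independent. The crucial computation, which uses independence of $s_i$ from the remaining coordinates together with the matching-projections assumption, is
\begin{align*}
\Pr{\omega' = k}{s_i = t} = \tfrac{1}{2}\alpha_i^{A_k}(t) + \tfrac{1}{2}\alpha_i^{B_k}(t) = \alpha_i^{A_k}(t),
\end{align*}
which coincides with agent $i$'s posterior under $\mathcal{I}_\cA$. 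Since $s_i$ is uniform in both structures, the distribution of $p(s_i)$ is identical, so the mixed structure is Blackwell equivalent to $\mathcal{I}_\cA$ for every agent.

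On the other hand, the mixed structure is not perfect: because $\cA$ and $\cB$ differ as partitions modulo null sets, there exist indices $k \neq k'$ with $\lambda(A_k \cap B_{k'}) > 0$, and on this positive-measure set $\omega'$ genuinely depends on $\theta$, which is not a function of the signals. Applying Lemma~\ref{lm_PO_implies_perfection} to $\mathcal{I}_\cA$, which is equivalent to a non-perfect structure, then shows that $\mathcal{I}_\cA$ fails Pareto optimality, contradicting the hypothesis. The step that requires the most care is the mixed-posterior calculation above, since it is the only place where the matching-projections hypothesis is used to collapse the mixture back to the original posterior; the rest is essentially bookkeeping around Proposition~\ref{prop:associated} and Lemma~\ref{lm_PO_implies_perfection}.
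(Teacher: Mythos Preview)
Your proposal is correct and follows essentially the same approach as the paper: invoke Proposition~\ref{prop:associated}, take a fair coin-flip mixture of the two partition-associated structures with matching projections, verify the posteriors coincide, observe the mixture is not perfect on $A_k\cap B_{k'}$, and conclude via Lemma~\ref{lm_PO_implies_perfection}. The paper's argument is identical in structure and detail.
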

\begin{proof}
By Proposition~\ref{prop:associated}, we can find a private private information structure $\mathcal{J}=(\omega,t_1,\ldots, t_n)$ equivalent to $\mathcal{I}$ and associated with some partition $\cA=(A_0,\ldots, A_{m-1})$ of $[0,1]^n$. Let us demonstrate that $\cA$ is a partition of uniqueness. Towards a contradiction, assume that there is another partition $\cA'=(A_0',\ldots,A'_{m-1})$ not equal to $\cA$ but such that the projections $\alpha_i^{A_k}=\alpha_i^{A'_k}$ for all $i$ and $k$. So $\mathcal{I}$ is also equivalent to the structure $\mathcal{J}'=(\omega,t'_1,\ldots, t'_n)$ associated with $\cA'$.

By Lemma~\ref{lm_PO_implies_perfection}, to get a contradiction, it is enough to construct an information structure $\mathcal{I}'$ that is equivalent to $\mathcal{I}$ but not perfect, as this would imply the existence of a strict Blackwell-Pareto improvement. We define $\mathcal{I}'$ as a structure where the joint distribution of the state and signals is a convex combination of the corresponding distributions in $\mathcal{J}$ and $\mathcal{J}'$. Formally, let $s_1',\ldots, s_n'$ be independent random variables each uniformly distributed on $[0,1]$ and let $\theta\in \{0,1\}$ be a symmetric Bernoulli random variable independent of $(s_1',\ldots, s_n')$. Define the state $\omega'$ as follows:
$$\omega'=k \quad\mbox{if}\quad\left[\begin{array}{c}
(s_1',\ldots, s_n')\in A_k \  \mbox{and} \ \theta=0\\
(s_1',\ldots, s_n')\in A_k' \  \mbox{and} \ \theta=1 \end{array}\right..$$
Since elements of the partitions $\cA$ and $\cA'$ have the same projections, the posterior induced by observing $t_i=x$ in $\mathcal{J}$ is identical to the one induced by observing $t'_i=x$ in $\mathcal{J}'$. Hence it is again identical to the posterior induced by observing $s_i'=x$ in $\mathcal{I}'$. 
As the partitions $\cA$ and $\cA'$ are not equal, there are $k\ne k'$ such that the intersection $A_k\cap A_{k'}'$ has a non-zero Lebesgue measure. Hence, if $(s_1',\ldots, s_n')\in A_k\cap A_{k'}'$, whether $\omega=k$ or $\omega=k'$ is determined by $\theta$. We conclude that, with positive probability, the signals $(s_1',\ldots, s_n')$ do not determine the state, so $\mathcal{I}'$ is not perfect and thus both $\mathcal{I}'$ and $\mathcal{I}$ can be Blackwell-Pareto improved by Lemma~\ref{lm_PO_implies_perfection}. 
\end{proof}

We see that Blackwell-Pareto optimal structures are contained  in those associated with  partitions of  uniqueness (up to equivalence of information structures). This shows one direction of  Theorem~\ref{thm:pareto_appendix}. It remains to demonstrate that any partition of uniqueness leads to a Blackwell-Pareto optimal structure, i.e., the structures associated with different partitions of uniqueness cannot dominate each other. 

For this purpose, we need two intermediate steps contained in the next two lemmas. Lemma~\ref{lm_garbling_imperfection} shows that a garbling of an information structure is never perfect and Lemma~\ref{lm_uniqueness_as_extreme_point} implies that imperfect structures cannot be equivalent to those associated with partitions of uniqueness.  Recall that for a pair of information structures $(\omega,t)$ and $(\omega,s)$, the signal $t$ is  a garbling of $s$ if, conditional on $s$, $t$ and $\omega$ are independent.
A structure $\mathcal{I}=(\omega, s_1, \ldots, s_n)$ is a garbling of $\mathcal{I}'=(\omega, s_1',\ldots,s_n')$ if each $s_i$ is a garbling of $s'_i$ and each $s_i$ is independent of $(s_j')_{j\ne i}$. The last requirement means that each 
signal is garbled independently. 
 Note that, by Blackwell's Theorem \citep[Theorem 12]{blackwell1951comparison}, $\mathcal{I}'$ (weakly) dominates $\mathcal{I}$ if and only if $\mathcal{I}$ is equivalent to a garbling of~$\mathcal{I}'$.

\begin{lemma}\label{lm_garbling_imperfection}
If $\mathcal{I}$ is a garbling of a private private information structure $\mathcal{I}'$, then $\mathcal{I}$ is not perfect unless $\mathcal{I}$ and $\mathcal{I}'$ are equivalent.
\end{lemma}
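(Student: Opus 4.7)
My plan is to prove the contrapositive in a strong form: if $\mathcal{I}$ is perfect, then each garbled signal $s_i$ is Blackwell equivalent to $s_i'$. Since the garbling hypothesis already gives $s_i' \succeq s_i$, it suffices to establish the reverse Blackwell domination $s_i \succeq s_i'$, which is equivalent to the conditional independence $\omega \perp s_i' \mid s_i$.

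The first step is to unpack the independent-garbling structure. Using the two definitional conditions (each $s_i$ garbles $s_i'$, and $s_i$ is independent of $(s_j')_{j \neq i}$), together with private-privacy of $\mathcal{I}'$ and of $\mathcal{I}$, I realize the garbling as $s_i = h_i(s_i', \epsilon_i)$ where the $\epsilon_1, \ldots, \epsilon_n$ are jointly independent and independent of $(\omega, s_1', \ldots, s_n')$. The crucial consequence I need is that $s_{-i}$, being a deterministic function of $(s_{-i}', \epsilon_{-i})$, is jointly independent of the pair $(s_i, s_i')$, which is a deterministic function of $(s_i', \epsilon_i)$; this follows because $(s_i', \epsilon_i)$ and $(s_{-i}', \epsilon_{-i})$ are independent by private-privacy of $\mathcal{I}'$ and joint independence of the noises.

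The second step combines this joint independence with perfection. Writing $\omega = f(s_1, \ldots, s_n)$ almost surely and conditioning on $(s_i, s_i')$,
\begin{align*}
\Pr{\omega = k}{s_i, s_i'} = \E{\one\{f(s_i, s_{-i}) = k\}}{s_i, s_i'} = \E{\one\{f(s_i, s_{-i}) = k\}}{s_i} = \Pr{\omega = k}{s_i},
\end{align*}
where the middle equality uses $s_{-i} \perp (s_i, s_i')$ and the last uses $s_{-i} \perp s_i$ from private-privacy of $\mathcal{I}$. Taking a further conditional expectation with respect to $s_i'$ yields $p(s_i') = \E{p(s_i)}{s_i'}$, exhibiting the posterior distribution of $s_i'$ as a mean-preserving contraction of that of $s_i$, and hence $s_i \succeq s_i'$ by Blackwell's theorem. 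Combined with the reverse direction from the garbling hypothesis, this gives $\mathcal{I} \sim \mathcal{I}'$.

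The main obstacle is justifying the independent-noise realization used in the first step, so that $s_{-i}$ is jointly independent of the pair $(s_i, s_i')$ rather than merely independent of each coordinate separately; once that joint independence is in hand, the rest is a routine tower-of-expectations calculation. Intuitively, independent garbling means each agent's new signal is obtained from her old one with fresh private noise, so perfect recoverability of $\omega$ from the garbled signals is only possible if no information was lost along the way.
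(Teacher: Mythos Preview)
Your approach is essentially the same as the paper's: assuming $\mathcal{I}$ is perfect, both arguments establish $\omega \perp s_i' \mid s_i$ (i.e., that $s_i'$ is a garbling of $s_i$), which together with the reverse garbling direction gives equivalence. The paper's version is terser---it cites only the definitional independence $s_i' \perp s_j$ for $j \neq i$ and, noting that $\omega$ is a function of $s_{-i}$ once $s_i$ is fixed, directly concludes $\omega \perp s_i' \mid s_i$; your independent-noise realization makes explicit the joint independence $s_{-i} \perp (s_i,s_i')$ that this step tacitly relies on, so the ``obstacle'' you flag is precisely the point the paper leaves implicit.
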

\begin{proof}
Suppose that $\mathcal{I}$ is perfect, and so $\omega=f(s_1,\ldots,s_n)$  for some $f\colon\ S_1\times\ldots\times S_n\to \Omega$. Our goal is to show that $\mathcal{I}$ is equivalent to $\mathcal{I}'$. For a given realization of $s_i$, the state $\omega$ is a function of the remaining signals $s_j$, $j\ne i$. Since $s_i'$ and the collection $(s_j)_{i\ne j}$ are independent, we see that $\omega$ is independent of $s_i'$ conditional on $s_i$. In other words, $s_i'$ is also a garbling of $s_i$. We conclude that both $\mathcal{I}$ is a garbling of $\mathcal{I}'$ and $\mathcal{I}'$ is a garbling of $\mathcal{I}$, so they are equivalent.
\end{proof}
The next lemma is used to show that imperfect private private information structures  cannot correspond to partitions of uniqueness. Before stating it, we will need to introduce the following concept. A \emph{fuzzy partition} is a tuple $(g_0,\ldots,g_{m-1})$ of measurable functions $g_k \colon [0,1]^n \to [0,1]$ such that $\sum_k g_k=1$. We can identify this tuple with a single function $g \colon [0,1]^n \to \Delta(\Omega)$. The case of a partition is one in which each $g_k$ is the indicator of a set $A_k$ in a partition of $[0,1]^n$. As with partitions, we identify two fuzzy partitions if they agree almost everywhere. We denote the collection of fuzzy partitions by $G$. 

We define the projection of $g_k$ to its $i$th coordinate by
$$\alpha_i^{g_k}(x_i)=\int_{[0,1]^{n-1}} g_k(x_i,x_{-i})\,\dd x_{-i}.$$
When $g_k$ is the indicator of a set $A_k$, the projection $\alpha_i^{A_k}$ as defined in the main text is equal to the projection of $g_k$.
With each partition $\cA=(A_0,\ldots, A_{m-1})$ of $[0,1]^n$ we associate the set 
$G_{\cA}=\Big\{g \in G \quad\mbox{such that}\quad  \forall k,i \ \ \alpha_i^{g_k}=\alpha_i^{A_k}\Big\}$
of fuzzy partitions that have the same projections as $\cA$.
\begin{lemma}\label{lm_uniqueness_as_extreme_point}
A partition $\cA=(A_0,\ldots, A_{m-1})$ of $[0,1]^n$ is a partition of uniqueness if and only if $G_{\cA}$
is a singleton.
\end{lemma}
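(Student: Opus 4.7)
The $(\Leftarrow)$ direction is immediate: any partition $\cB=(B_0,\ldots,B_{m-1})$ of $[0,1]^n$ canonically defines a fuzzy partition via $g_k=\one_{B_k}$. If $\cB$ has the same axis projections as $\cA$, then this fuzzy partition lies in $G_{\cA}$. Assuming $G_{\cA}$ is a singleton, it must coincide with the fuzzy partition coming from $\cA$, i.e., $\one_{B_k}=\one_{A_k}$ almost everywhere for every $k$. Hence $\cB=\cA$ as partitions (recall we identify sets that differ on a null set), so $\cA$ is a partition of uniqueness.

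For the $(\Rightarrow)$ direction, the plan is to show the contrapositive: if $G_{\cA}$ contains some $g\neq \cA$, then there is another partition $\cB\neq \cA$ with the same projections. The key observation is that $G_{\cA}$ is a convex set (the conditions $\sum_k g_k=1$, $0\leq g_k\leq 1$, and $\alpha_i^{g_k}=\alpha_i^{A_k}$ are all affine in $g$), and it is compact in the weak-$*$ topology of $L^\infty([0,1]^n)^m$ since it is a weak-$*$ closed subset of the product of unit balls. By the Krein--Milman theorem, $G_{\cA}$ is the closed convex hull of its extreme points, and in particular contains at least one extreme point. I will prove that the extreme points of $G_{\cA}$ are exactly the elements coming from partitions, i.e., those $g$ with $g_k\in\{0,1\}$ a.e.\ for all $k$. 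Granted this, the conclusion follows quickly: $\cA$ itself is extreme; since $G_{\cA}$ is non-singleton, it contains a second extreme point, corresponding to some partition $\cB\neq\cA$ with $\alpha_i^{B_k}=\alpha_i^{A_k}$ for all $i,k$, contradicting that $\cA$ is a partition of uniqueness.

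The technical core is the extreme-point characterization, and this is where I expect the main work. First, any $g$ coming from a partition is extreme because $\{0,1\}$-valued functions cannot be written as non-trivial convex combinations of $[0,1]$-valued ones. Conversely, suppose some $g\in G_{\cA}$ is not a partition, so there exist $k\neq k'$ and a set $E\subseteq[0,1]^n$ of positive Lebesgue measure on which both $g_k$ and $g_{k'}$ take values bounded away from $\{0,1\}$. The goal is to construct a non-zero measurable perturbation $\varphi\colon[0,1]^n\to\mathbb{R}$ supported on $E$ such that (i) $g_k\pm\varphi$ and $g_{k'}\mp\varphi$ stay in $[0,1]$, and (ii) every axis projection of $\varphi$ is the zero function, so the perturbation preserves projections. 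This gives $g=\tfrac12(g+\psi)+\tfrac12(g-\psi)$ with $\psi\neq 0$ and both summands in $G_{\cA}$, contradicting extremality. The existence of such $\varphi$ with vanishing marginals is essentially \cite{gutmann1991existence}: applied to $g_k$ (restricted to a suitable sub-box on which the projection constraints can be balanced), this yields the required perturbation, since the set of $[0,1]$-valued functions with prescribed axis marginals has indicator-valued functions as its extreme points.

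Put together, these two steps give: if $\cA$ is a partition of uniqueness and $g\in G_{\cA}$, then every extreme point of $G_{\cA}$ is a partition with the projections of $\cA$, hence equals $\cA$; so $G_{\cA}=\{\cA\}$ by Krein--Milman. The main obstacle, as noted, is verifying the marginal-preserving perturbation claim with enough care that $g_k\pm\varphi$ remains in $[0,1]$; but this can always be arranged by choosing $\varphi$ uniformly small on $E$, leveraging that $g_k,g_{k'}$ are bounded away from the endpoints on $E$.
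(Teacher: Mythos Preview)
Your proposal is correct and follows essentially the same approach as the paper: show $G_{\cA}$ is compact convex, invoke Krein--Milman, and prove that every extreme point is $\{0,1\}$-valued by constructing a zero-marginal perturbation via \cite{gutmann1991existence}. The only cosmetic differences are that the paper frames compactness in the weak topology on signed measures rather than weak-$*$ on $L^\infty$, and it invokes a specific corollary of \cite{gutmann1991existence} (two disjoint positive-measure subsets of any positive-measure set with identical axis projections) to build the perturbation explicitly as $\varepsilon(\one_{D_1}-\one_{D_2})$, rather than your slightly looser appeal to the extreme-point result there.
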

Note that $G_\cA$ always contains at least one element, namely, the indicators of the partition $\cA$, i.e.,  $(\one_{A_0},\ldots, \one_{A_{m-1}})\in G_\cA$. The idea behind the lemma is that all extreme points of $G_{\cA}$ are indicators of partitions with the same projections as $\cA$. Hence, if $G_{\cA}$ is not a singleton it has at least two distinct extreme points, i.e., there is at least one more partition with the same projections as $\cA$, which is incompatible with the fact that $\cA$ is a partition of uniqueness. This identification of extreme points and indicators has appeared before in the context of sets of uniqueness \citep[see][]{gutmann1991existence}.
\begin{proof}
First we show that $G_\cA$ can be treated as a non-empty compact convex subset of a locally convex Hausdorff vector space. Non-emptiness and convexity is straightforward and compactness is to be checked once  an appropriate topology is defined. 

Let $M([0,1]^n)$ be the set of all finite signed measures on $[0,1]^n$ endowed with the topology of weak convergence, making it a locally convex {completely metrizable (and hence Hausdorff)} topological vector space. We identify a bounded function $g_k\colon\, [0,1]^n\to \mathbb{R}$ with a measure $\mu_k$ on $[0,1]^n$ having the density $g_k$ with respect to the Lebesgue measure, i.e., $\dd\mu_k(x_1,\ldots, x_n)=g_k(x)\dd x_1\ldots\dd x_n$. Hence, $G_\cA$ can be identified with a subset of $\Big(M([0,1]^n)\Big)^\Omega$. 

Let $\Delta_\leq([0,1]^n)$ be the set of sub-probability measures, i.e., non-negative measures $\mu$ with $\mu([0,1]^n)\leq 1$. The set $\Delta_\leq([0,1]^n)$ is a compact subset of $M([0,1]^n)$, {as it is the closed convex hull of the compact set given by the union of the probability measures and the zero measure \citep[Theorem 5.35]{aliprantis2006infinite}.} As $G_\cA$ is a subset of the compact set  $\big(\Delta_\leq([0,1]^n)\big)^\Omega$, compactness of $G_\cA$ follows from its closedness. To check closedness, we rewrite the conditions defining $G_\cA$ in an integrated form using as test functions the continuous functions $h$ on $[0,1]^n$. The tuple of measures   $(\mu_0,\ldots,\mu_{m-1})\in \Big(M([0,1]^n)\Big)^\Omega$ belongs to $G_\cA$ if and only if
\begin{align}
  \int_{[0,1]^n} \big|h(x_1,\ldots,x_n)\big|\dd\mu_k&\geq 0 \label{eq_positivity}\\
  \sum_k\int_{[0,1]^n} h(x_1,\ldots,x_n)\dd\mu_k&=\int_{[0,1]^n} h(x_1,\ldots,x_n)\dd x_1\ldots\dd x_n \label{eq_on_the_sum}\\
   \int_{[0,1]^n} h(x_i)\dd\mu_k&=\int_{[0,1]} h(x_i)\alpha_i^{A_k}(x_i)\dd x_i  \label{eq_on_projections}
\end{align}
for all $k=0,\ldots, m-1$, $i=1,\ldots, n$, and continuous functions $h$ on $[0,1]^n$ (in the last condition, $h$ depends on one of the coordinates only). Condition \eqref{eq_positivity} is non-negativity, condition \eqref{eq_on_the_sum} is equivalent to $\sum_k g_k =1$, and condition \eqref{eq_on_projections} corresponds to the equal projections condition $\alpha_i^{g_k}=\alpha_i^{A_k}$. 
By the definition of the weak topology, integration of a continuous function commutes with taking weak limits. We conclude that $G_\cA$ contains all its limit points and thus is closed.

By the Krein-Milman theorem, any compact convex subset of a locally convex Hausdorff vector  space is the closed convex hull of its extreme points \citep[see][Theorem 7.68]{aliprantis2006infinite}. Thus $G_\cA$ is the closed convex hull of its extreme points. Consequently, if $G_\cA$ is not a singleton, it has at least two distinct extreme points. 
To prove the lemma, it remains to demonstrate that all extreme points of $G_\cA$ correspond to partitions. Towards a contradiction, assume that $g=(g_0,\ldots,g_{m-1})$ is an extreme point of $G_\cA$ but it is not a partition, i.e., there is a state $k_0$ such that $g_{k_0}(x)\notin \{0,1\}$ for  $x=(x_1,\ldots,x_n)$ in a set of positive Lebesgue measure. Since $\sum_k g_k=1$, there is $k'\ne k$ such that the set of $x$ where both $g_{k}(x)>0$ and $g_{k'}(x)>0$ has positive measure. Hence, for some $\varepsilon>0$, the set $D\subseteq[0,1]^n$ of $x$ such that both $g_{k}(x)>\varepsilon$ and $g_{k'}(x)>\varepsilon$ also has positive measure. Without loss of generality, we assume that $k=0$ and $k'=1$.

By Corollary~2  of~\cite{gutmann1991existence}, for any $D$ of positive measure, there are two disjoint sets $D_1,D_2\subseteq D$ also of positive measure having the same projections, i.e., $\alpha_i^{D_1}=\alpha_i^{D_2}$ for any $i=1,\ldots,n$. Hence, the function $a(x)=\varepsilon\big(\one_{D_1}(x)-\one_{D_2}(x)\big)$ has zero projections, is bounded by $\varepsilon$ in absolute value, and is equal to zero outside of the set $D$. For $\sigma\in \{-1,+1\}$, define 
$$g_0^\sigma(x)=g_0(x)+\sigma\cdot a(x),\qquad g_1^\sigma(x)=g_1(x)-\sigma\cdot a(x).$$
By definition, $g_0^\sigma$ and $g_1^\sigma$ have the same projections as $g_0$ and $g_1$, they are non-negative, and $g_0^\sigma+g_1^\sigma=g_0+g_1$ (hence, $g_0^\sigma+g_1^\sigma+\sum_{k\geq 2} g_k=1$).

We conclude that the two tuples $(g_0^\sigma,g_1^\sigma,g_2,g_3,\ldots,g_{m-1})$, $\sigma\in \{-1,+1\}$, belong to $G_\cA$. They are not equal to each other as the sets $D_1$ and $D_2$ are disjoint. Since the original collection $(g_0,\ldots, g_{m-1})$ is the average of the two constructed ones, it cannot be an extreme point. This contradiction implies that all the extreme points of $G_\cA$ correspond to partitions and completes the proof.
\end{proof}
Relying on the last two lemmas, we can demonstrate that any structure associated with a partition of uniqueness is Blackwell-Pareto optimal.
\begin{lemma}
\label{lm_uniqueness_implies_PO}
Let $\mathcal{I}$ be a private private information structure equivalent to a structure associated with a partition of uniqueness, then $\mathcal{I}$ is Blackwell-Pareto optimal.
\end{lemma}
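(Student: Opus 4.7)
The plan is to proceed by contradiction, chaining Lemma~\ref{lm_garbling_imperfection} with Lemma~\ref{lm_uniqueness_as_extreme_point}. Since Pareto optimality depends only on the equivalence class, I may replace $\mathcal{I}$ by an equivalent structure $\mathcal{J}=(\omega,t_1,\ldots,t_n)$ associated with the partition of uniqueness $\cA=(A_0,\ldots,A_{m-1})$. Assume, toward a contradiction, that $\mathcal{J}$ is not Pareto optimal, and pick a private private $\mathcal{I}'=(\omega,s_1',\ldots,s_n')$ that strictly dominates $\mathcal{J}$.

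The first step is to produce a non-perfect private private structure equivalent to $\mathcal{J}$. By Blackwell's theorem applied agentwise, each $t_i$ is Blackwell equivalent to some garbling $\tilde{t}_i$ of $s_i'$; performing these garblings with independent external randomness yields a private private $\tilde{\mathcal{I}}=(\omega,\tilde{t}_1,\ldots,\tilde{t}_n)$ that is simultaneously a garbling of $\mathcal{I}'$ and equivalent to $\mathcal{J}$. Since $\mathcal{I}'$ strictly dominates $\mathcal{J}\sim\tilde{\mathcal{I}}$, the structures $\mathcal{I}'$ and $\tilde{\mathcal{I}}$ are inequivalent, so Lemma~\ref{lm_garbling_imperfection} delivers that $\tilde{\mathcal{I}}$ is not perfect.

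The second step is to convert this non-perfection into a violation of the singleton property of $G_\cA$. By Lemma~\ref{lm_uniform_signals_wlog} I reparametrize $\tilde{\mathcal{I}}$ so that each signal is uniform on $[0,1]$; since perfection is preserved under measurable bijections, the reparametrized structure remains non-perfect. Setting $g_k(x)=\Pr{\omega=k}{\tilde{t}=x}$ defines a fuzzy partition $g=(g_0,\ldots,g_{m-1})$ of $[0,1]^n$, and non-perfection is precisely the statement that $g$ fails to be an indicator tuple. Its projections $\alpha_i^{g_k}$ are the one-signal posteriors of $\omega=k$ given $\tilde{t}_i$, which by equivalence with $\mathcal{J}$ are equi-distributed, as random variables on uniform $[0,1]$, with the projections $\alpha_i^{A_k}$. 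Precomposing the $i$-th coordinate with a suitable measure-preserving bijection of $[0,1]$ upgrades this to pointwise equality $\alpha_i^{g_k}=\alpha_i^{A_k}$ for every $i$ and $k$. Then $g\in G_\cA$ but $g\neq(\one_{A_0},\ldots,\one_{A_{m-1}})\in G_\cA$, so $G_\cA$ is not a singleton, and Lemma~\ref{lm_uniqueness_as_extreme_point} says $\cA$ is not a partition of uniqueness, contradicting the hypothesis.

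The delicate step is the alignment of projections. The two posterior maps $x_i\mapsto(\alpha_i^{A_0}(x_i),\ldots,\alpha_i^{A_{m-1}}(x_i))$ and $x_i\mapsto(\alpha_i^{g_0}(x_i),\ldots,\alpha_i^{g_{m-1}}(x_i))$ are equi-distributed Borel random variables into $\Delta(\Omega)$ defined on the standard non-atomic space $[0,1]$, so one can be obtained from the other by precomposition with a measure-preserving bijection of $[0,1]$; applying this bijection to $\tilde{t}_i$ keeps the signals uniform and independent while realigning the projections to those of $\cA$. This is the only place the argument uses structure beyond the earlier lemmas, and is what lets the non-perfection detected by Lemma~\ref{lm_garbling_imperfection} translate into a violation of uniqueness.
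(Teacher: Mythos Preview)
Your overall strategy mirrors the paper's: assume a strict Pareto improvement $\mathcal I'$ exists, garble it back to an equivalent non-perfect structure via Lemma~\ref{lm_garbling_imperfection}, and then exhibit a non-indicator element of $G_\cA$ to contradict Lemma~\ref{lm_uniqueness_as_extreme_point}. The divergence is in how you build that element, and your construction breaks at exactly the step you flag as delicate.

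The claim that two equi-distributed Borel maps $F,G\colon[0,1]\to\Delta(\Omega)$ can always be aligned by a measure-preserving bijection $\phi_i$ of $[0,1]$ (so that $G\circ\phi_i=F$) is false in general. The obstruction is fiberwise: if $F=(\alpha_i^{A_k})_k$ is injective---as for the triangle structure of Figure~\ref{fig:uniform}---while your uniformization of $\tilde{\mathcal I}$ via Lemma~\ref{lm_uniform_signals_wlog} (which appends independent uniform noise) forces $G=(\alpha_i^{g_k})_k$ to have nonatomic level sets, then any $\phi_i$ with $G\circ\phi_i=F$ would have to push the Dirac conditional measure on each $F$-fiber to the nonatomic conditional measure on the corresponding $G$-fiber, which is impossible. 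Dropping ``bijection'' in favour of a mere measure-preserving map does not remove this obstruction.

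The paper avoids reparametrization altogether by working in posterior space. After obtaining the non-perfect garbling, it replaces each signal by its induced posterior, yielding a structure $\mathcal I''$ with signals in $\Delta(\Omega)$ (still non-perfect, since posteriors are functions of the original signals). The conditional law $f$ of $\omega$ given all signals is then a function on $\Delta(\Omega)^n$, which is pulled back to $[0,1]^n$ via the posterior maps $h_i(x_i)=\big(\alpha_i^{A_k}(x_i)\big)_k$ of the \emph{original} structure: set $g(x)=f\big(h_1(x_1),\ldots,h_n(x_n)\big)$. A direct computation using the martingale property and the equality of the laws of $h_i$ and of the posteriors in $\mathcal I''$ gives $\alpha_i^{g_k}=\alpha_i^{A_k}$ on the nose---no alignment step is needed. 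Your argument can be repaired by substituting this posterior-space construction for the bijection.
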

\begin{proof}
Without loss of generality, $\mathcal{I}=(\omega,s_1\ldots,s_n)$ is itself
 a structure associated with a partition of uniqueness $\cA=(A_,\ldots,A_{m-1})$ of $[0,1]^n$.

Towards a contradiction, assume that there is a private private information structure $\mathcal{J}$ strictly dominating~$\mathcal{I}$.
By Blackwell's theorem, $\mathcal{I}$ is equivalent to some garbling of $\mathcal{J}$ denoted by $\mathcal{I}'=(\omega,s_1',\ldots,s_n')$. By Lemma~\ref{lm_garbling_imperfection}, $\mathcal{I}'$ is not perfect. Let $t_i=p(s_i')\in \Delta(\Omega)$ be the posterior belief induced by $s_i'$ and $\mu_i\in \Delta(\Delta(\Omega))$ be its distribution. Consider the structure $\mathcal{I}''=(\omega,t_i,\ldots, t_n)$. It  is equivalent to $\mathcal{I}'$ (and hence to $\mathcal{I}$) by Lemma~\ref{lm_posteriors_as_signals}.
As $t_i$ is a function of $s_i'$ and $\mathcal{I}'$ is not perfect, $\mathcal{I}''$ cannot be perfect either (this is also a consequence of the fact that $\mathcal{I}''$ is a garbling of $\mathcal{J}$).

Let $f\colon \, \Delta(\Omega)\times\ldots\times \Delta(\Omega)\to \Delta(\Omega)$ be the conditional distribution of $\omega$ given the realized signals $t_1,\ldots, t_n$. This function is defined $\mu$-everywhere with  $\mu=\mu_1\times\ldots\times\mu_n$. As $\mathcal{I}''$ is not perfect, there is a state $k_0 \in \Omega$ such that $f_{k_0}\notin\{0,1\}$ on a set of positive $\mu$-measure.

Choose a fuzzy partition $g\colon\ [0,1]^n\to \Delta(\Omega)$ so that the following identity holds:\footnote{To construct such a $g$, define $h_i \colon [0,1] \to \Delta(\Omega)$ by
$h_i(x_i)(k) = \Pr{\omega=k}{s_i=x_i}.$
That is, $h_i$ is the map that assigns to each signal realization the induced posterior, so that $p(s_i) = h_i(s_i)$ holds as an equality of random variables.
Then let $g\colon\ [0,1]^n\to \Delta(\Omega)$ be given by $
    g(x_1,\ldots,x_n) = f(h_1(x_1),\ldots,h_n(x_n))$.}
$g(s_1,\ldots, s_n)=f\big(p(s_1),\ldots, p(s_n)\big).$ The distributions of posteriors $(p(s_1),\ldots,p(s_n))$ and $(p(t_1),\ldots,p(t_n))$ both coincide with $\mu$ as the structures $\mathcal{I}$ and $\mathcal{I}''$ are equivalent. Hence, $g_{k_0}\ne\{0,1\}$ on a set of positive Lebesgue measure, i.e., $g$ does not correspond to a partition. On the other hand, $g$ has the same projections as the partition $\cA$. Indeed, let us compute $\alpha_i^{g_k}(x)$:
\begin{align*}
    \alpha_i^{g_k}(x)
    &=\E{g_k(s_1,\ldots,s_n)}{ s_i=x}\\
    &=\E{g_k(s_1,\ldots,s_{i-1},x,s_{i+1},\ldots,s_n)}\\
    &=\E{f_k(p(s_1),\ldots,p(s_{i-1}),q,p(s_{i+1}),\ldots,p(s_n))},
\end{align*}
where $q$ is the posterior induced by $s_i=x$. Since the distribution of $p(s_j)$ is identical to that of $t_j$,
\begin{align*}
    \alpha_i^{g_k}(x)
    &=\E{f_k(t_1,\ldots,t_{i-1},q,t_{i+1},\ldots,t_n)}\\
    &=\E{f_k(t_1,\ldots,t_n)\mid t_i=q}\\
    &=q(k),
\end{align*}
where in the last equality we rely on the fact that the belief induced by $t_i$ coincides with $t_i$. Since $q$ is the posterior induced by $s_i=x$, the posterior $q(k)$ is equal to~$\alpha_i^{A_k}(x)$.

 We thus constructed $g$ not equal to $(\one_{A_0},\ldots, \one_{A_{m-1}})$ but having the same projections. By Lemma~\ref{lm_uniqueness_as_extreme_point}, the partition $\cA=(A_0,\ldots, A_{m-1})$ cannot be a partition of uniqueness. This contradiction shows that no structure can  dominate the one associated with a partition of uniqueness, i.e., such structures are Blackwell-Pareto optimal.
 \end{proof}
The proof of Theorem~\ref{thm:pareto_appendix} is now immediate. 
\begin{proof}[Proof of Theorem~\ref{thm:pareto_appendix}]
 By Lemma~\ref{lm_PO_implies_uniqueness}, for each Blackwell-Pareto optimal $\mathcal{I}$, we can find an equivalent structure associated with  a partition of uniqueness $\cA=(A_0,\ldots, A_{m-1})$. By Lemma~\ref{lm_uniqueness_implies_PO}, any  structure admitting such an equivalent representation is Blackwell-Pareto optimal.
\end{proof}

\subsection{Proof of Theorem~\ref{th_Lorentz}}

\begin{proof}
\cite*{lorentz1949problem}'s characterization of two-dimensional sets of uniqueness uses the idea of a \emph{non-increasing rearrangement} $\grave \varphi$ of a function $\varphi:\, [0,1]\to [0,1]$. The function  $\grave \varphi$ is defined almost everywhere by the following two properties: it is non-increasing  on $[0,1]$ and,
for any $q\in [0,1]$, the lower-contour sets $\{t\in[0,1]\,:\, \varphi(t)\leq q\}$ and $\{t\in[0,1]\,:\, \grave \varphi(t)\leq q\}$ have the same Lebesgue measure. A non-increasing rearrangement exists and moreover is unique (as an element of $L^\infty([0,1])$).

\cite*{lorentz1949problem} proved that $A\subseteq [0,1]^2$ is a set of uniqueness if and only if  the non-increasing rearrangements of its two projections are inverses of each other, i.e.,
\begin{equation}\label{eq_Lorentzs_condition}
\grave \alpha_1^A=\big(\grave\alpha_2^A\big)^{-1}.
\end{equation}
Formally, if the inverse $\big(\grave\alpha_2^A\big)^{-1}(t)$ is not unique for some $t$, the equality~\eqref{eq_Lorentzs_condition} is to be understood as the inclusion: $\grave\alpha_2^A(t)\in \big(\grave\alpha_1^A\big)^{-1}(t)$.

Let us demonstrate that the characterization from Theorem~\ref{th_Lorentz} is equivalent to the original characterization of \cite*{lorentz1949problem}. That is, we need to check that a set $A$ is a rearrangement of an upward-closed set if and only if the condition~\eqref{eq_Lorentzs_condition} holds.
Note that for any downward-closed\footnote{A set $B\subseteq[0,1]^2$ is downward-closed if, with each point $(x_1,x_2)$, it contains all the points $(x_1',x_2')\in [0,1]^2$ such that $x_1'\leq x_1$ and $x_2'\leq x_2$.} set $B$, its image under  the map $x_1\mapsto 1-x_1$ and $x_2\mapsto 1-x_2$ is upward-closed. Hence, it is enough to check the equivalence between~\eqref{eq_Lorentzs_condition} and the existence of a downward-closed rearrangement of $A$.

Suppose that $A$ is a rearrangement of a downward-closed set $B$. Towards showing that~\eqref{eq_Lorentzs_condition} holds, note that any downward-closed set $B$ can be represented through its projections in two symmetric ways: $B=\{x_2\leq\alpha_1^B(x_1)\}$ and 
$B=\{x_1\leq\alpha_2^B(x_2)\}$ up to a zero-measure set. Hence,
\begin{equation}\label{eq_downward_identity}
\alpha_2^B=(\alpha_1^B)^{-1}.  
\end{equation}
Since  $B$ is downward-closed, its projections are non-increasing. Moreover, the sets $\{t\in[0,1]\,:\, \alpha_i^B(t)\leq q\}$ and $\{t\in[0,1]\,:\, \alpha_i^A(t)\leq q\}$ have the same measure for any $i$ and $q$ as $B$ is a rearrangement of $A$. Thus $\alpha_i^B=\grave\alpha_i^A$ and we obtain~\eqref{eq_Lorentzs_condition} from~\eqref{eq_downward_identity}.  

Now assume that the condition~\eqref{eq_Lorentzs_condition} is satisfied and construct the downward-closed set $B$ as follows:
$B=\{(x_1,x_2)\in [0,1]^2\,:\, x_2\leq \grave\alpha_1^A(x_1)\}.$
By the definition, the projection $\alpha_1^B$ equals $\grave\alpha_1^A$. For any downward closed set, the projections satisfy the identity~\eqref{eq_downward_identity} and thus
$\alpha_2^B=\big(\alpha_1^B\big)^{-1}=\big(\grave\alpha_1^A\big)^{-1}=\grave\alpha_2^A,$
where the last equality follows from~\eqref{eq_Lorentzs_condition}. Hence, for any $i$ and $q$, the measure of $\{\alpha_i^B(t)\leq q\}$ coincides with that of $\{\grave\alpha_i^A(t)\leq q\}$ and thus with the measure of $\{\alpha_i^A(t)\leq q\}$. We conclude that $B$ is a downward-closed rearrangement of $A$.
\end{proof}

\subsection{Proof of Theorem~\ref{thm:binary-pareto}}\label{sec_proof_th1}
First, we show that the conjugate of a cumulative distribution function on $[0,1]$ is also a cumulative distribution function.

\begin{claim}
\label{clm:conjugate}
The conjugate $\hat F$ is a cumulative distribution function. Furthermore, it has the same mean: $\int x\,\dd\hat F(x) = \int x\,\dd F(x)$.
\end{claim}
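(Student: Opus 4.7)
My plan is to split the claim into two parts: the structural fact that $\hat F$ is a CDF, and the mean-preservation property. Both parts rest on two standard facts about the quantile function $F^{-1}$ that can be read off from the definition \eqref{eq:f-inv}: namely, $F^{-1}\colon[0,1]\to[0,1]$ is non-decreasing, and it is left-continuous (this is the standard companion of the right-continuity of $F$; by the minimality in \eqref{eq:f-inv}, if $x_k\uparrow x$ then the common value $y^\star=\sup_k F^{-1}(x_k)$ satisfies $F(y^\star)\ge x_k$ for all $k$, and right-continuity of $F$ at limit points gives $F(y^\star)\ge x$, hence $F^{-1}(x)=y^\star$).

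With these facts in hand, Step 1 is direct. Non-decreasingness of $\hat F$: since $x\mapsto 1-x$ is decreasing and $F^{-1}$ is non-decreasing, $x\mapsto F^{-1}(1-x)$ is non-increasing, so $\hat F(x)=1-F^{-1}(1-x)$ is non-decreasing. Right-continuity of $\hat F$: since $F^{-1}$ is left-continuous and $x\mapsto 1-x$ is continuous, $x\mapsto F^{-1}(1-x)$ is right-continuous, hence so is $\hat F$. The boundary value is $\hat F(1)=1-F^{-1}(0)=1-0=1$, where $F^{-1}(0)=0$ by \eqref{eq:f-inv} (since $F(0)\geq 0$). Thus $\hat F$ is the CDF of a unique Borel probability measure on $[0,1]$.

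Step 2 reduces the mean identity to an area computation. For any probability measure on $[0,1]$ with CDF $G$, the standard identity $\int_0^1 t\,\dd G(t)=\int_0^1(1-G(x))\,\dd x$ holds by Fubini. Applied to $F$ and $\hat F$, the mean equality will follow once I show
\[
\int_0^1 F(x)\,\dd x \;=\; \int_0^1 \hat F(x)\,\dd x.
\]
For the right-hand side, the change of variable $u=1-x$ gives
\[
\int_0^1 \hat F(x)\,\dd x=\int_0^1\bigl(1-F^{-1}(u)\bigr)\,\dd u.
\]
For the left-hand side, I apply layer-cake in the form $F(x)=\int_0^1 \one\{F(x)\ge u\}\,\dd u$ and use the well-known equivalence $F(x)\ge u \iff x\ge F^{-1}(u)$, which gives $\int_0^1 F(x)\,\dd x=\int_0^1(1-F^{-1}(u))\,\dd u$. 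The two expressions agree, completing the proof.

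I do not expect a genuine obstacle, as the argument uses only elementary properties of quantile functions and Fubini's theorem. The only point that calls for care is the equivalence $F(x)\ge u \iff x\ge F^{-1}(u)$, which must be justified from the minimality in \eqref{eq:f-inv} and right-continuity of $F$; once that is in hand, the two integrals collapse onto the same expression and mean-preservation follows immediately.
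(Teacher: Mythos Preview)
Your proof is correct. Step~1 is essentially identical to the paper's argument: both establish that $F^{-1}$ is non-decreasing and left-continuous, and read off the properties of $\hat F$ from there. For Step~2, the paper gives a one-line geometric justification---the region $\{(x,y)\in[0,1]^2:y\le F(x)\}$ is carried to the region under $\hat F$ by the measure-preserving reflection $(x,y)\mapsto(1-y,1-x)$, so the two areas (and hence the two means) coincide. Your layer-cake computation is the analytic unpacking of exactly this picture: the Galois equivalence $F(x)\ge u\iff x\ge F^{-1}(u)$ is what makes the reflection carry one subgraph to the other, and your two displayed integrals both compute the common area $\int_0^1(1-F^{-1}(u))\,\dd u$. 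So the content is the same; your version trades the geometric intuition for a more self-contained calculation that makes the role of the quantile identity explicit.
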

\begin{proof}
To show that $\hat F$ is a cumulative distribution function it suffices to show that it is weakly increasing, right-continuous, that $\hat F(0) \geq 0$, and that $\hat F(1)=1$.

We first note that $F^{-1}$ is weakly increasing, by its definition at $x$ as the minimum of the preimage of $[x,\infty)$ under $F$. Hence $\hat F$ is also weakly increasing.

To see that $\hat F$ is right continuous, let $\lim_k x_k = x \in [0,1]$, with $x_k \leq x$. Then
\begin{align*}
    \lim_k F^{-1}(x_k)
    &=  \lim_k \min\{y \,:\, F(y) \geq x_k\}\\
    &= \min\{y \,:\, F(y) \geq x\}\\
    &= F^{-1}(x)
\end{align*}
where the penultimate equality follows from the fact that $F$ is right-continuous. Hence $F^{-1}$ is left-continuous, and so $\hat F$ is right-continuous. 

It is immediate from the definitions that $\hat F(0) \geq 0$ and $\hat F(1)=1$, and thus $F$ is a cumulative distribution function. Finally, the expectations of $F$ and $\hat F$ are identical since the shape under $F$ (whose measure is equal to its expectation), given by $\{(x,y) \in [0,1]^2\,:\, y \leq F(x)\}$ is congruent to the shape under $\hat F$, since one is mapped to the other by the measure-preserving transformation $(x,y) \mapsto (1-y,1-x)$.
\end{proof}

Now, we prove Theorem~\ref{thm:binary-pareto}.

\begin{proof}
First, suppose $\mathcal{I}$ is Blackwell-Pareto optimal. 
By Theorem \ref{thm:pareto}, $\mathcal{I}$ is equivalent to some structure $\mathcal{I}'$ associated with a set of uniqueness
$A$. By Theorem \ref{th_Lorentz}, $A$ is a rearrangement of
 an upward-closed set $A'$, whose associated structure $\mathcal{I}''$ must
also be equivalent to $\mathcal{I}$. We show that the two 
belief distributions induced by $\mathcal{I}''$ are conjugates of each other.

Define $\tilde{h}:[0,1]\to[0,1]$ by $\tilde{h}(x_{1})=\inf\{x_{2}:(x_{1},x_{2})\in A'\}$.
We have that $\tilde{h}$ is a decreasing function since $A'$ is upward-closed. Define a left-continuous version of $\tilde{h}$ as $h(x)=\lim_{z\to x^{-}}\tilde{h}(z)$.
For any $q\in[0,1]$, in the structure associated with $A'$, {the belief induced by the realization of the first signal equal to $x_1$ is} 
lower or equal to~$q$ if and only if $h(x_{1})\ge1-q$, so the cumulative distribution function
of {beliefs} is $F_{1}(q)=\max\{x_{1}:h(x_{1})\ge1-q\}.$
{For the second signal, the belief for any signal realization} lower
than $x_{2}=h(1-q)$ is lower or equal to~$q$, while beliefs at
higher signals are strictly {above} $q.$ So, the cumulative distribution
function of {beliefs induced by the second signal}
is $F_{2}(q)=h(1-q).$ Note
that $F_{2}^{-1}(1-q)=\min\{y:h(1-y)\ge1-q\}=1-\max\{x_{1}:h(x_{1})\ge1-q\}=1-F_{1}(q),$
so $F_{1}$ and $F_{2}$ are conjugates.

Conversely, suppose the distributions of $p(s_{1})$ and $p(s_{2})$
in a private private information structure $\mathcal{I}$ are conjugates. Write
$\tilde{F}_{1}$ and $\tilde{F}_{2}$ for the cumulative distribution functions
of $p(s_{1})$ and $p(s_{2})$, and consider the set $A\subseteq[0,1]^{2}$
where $(x_{1},x_{2})\in A$ if and only if $x_{2}\ge\tilde{F}_{2}(1-x_{1}).$
We show that the structure associated with $A$ is equivalent to~$\mathcal{I}$; Figure~\ref{fig:induce_conjugates} illustrates the construction. 
Let $\tilde{h}(x)=\tilde{F}_{2}(1-x)$, and define a left-continuous
version of $\tilde{h}$ as $h(x)=\lim_{z\to x^{-}}\tilde{h}(z)$.
For the structure associated with $A$, by the same argument as above,
the distribution function of {beliefs induced by the second signal}
is $F_{2}(q)=h(1-q)=\tilde{F}_{2}(q).$
The distribution function of {beliefs induced by the first signal}
is $F_{1}(q)=\max\{x_{1}:h(x_{1})\ge1-q\}=1-\min\{x_2:\tilde{F}_{2}(x_2)\ge1-q\}=1-\tilde{F}_{2}^{-1}(1-q).$
Using the hypothesis that $\tilde{F}_{1}$ and $\tilde{F}_{2}$ are
conjugates, $1-\tilde{F}_{2}^{-1}(1-q)=\tilde{F}_{1}(q).$ So, the structure associated with $A$ is equivalent to $\mathcal{I}$. Because
$x_{1}\mapsto \tilde{F}_{2}(1-x_{1})$ is a decreasing function, the set $A$
is upward-closed. Using Theorem~\ref{thm:pareto} and Theorem~\ref{th_Lorentz},
$\mathcal{I}$ is Blackwell-Pareto optimal. 
\end{proof}

\subsection{Proof of Theorem~\ref{thm:disclosure}}\label{app_disclosure_proof}
We first demonstrate that the definition of an {optimal privacy-preserving recommendation} (Definition~\ref{def_disclosure}) is equivalent to the requirement of Blackwell-Pareto optimality of the corresponding private private information structure.  For this purpose, we need the following result refining Lemma~\ref{lem:pareto-dominated} and applicable for any number of states.
\begin{lemma}\label{lm_single_improvement}
If $\mathcal{I}=(\omega,s_1,s_2)$ is a private private information structure that is not Blackwell-Pareto optimal, then there exists $s_2'$ independent of $s_1$ such that the private private structure $(\omega, s_1,s_2')$ dominates $\mathcal{I}$ and is Blackwell-Pareto optimal.
\end{lemma}
\begin{proof}
Denote by $\mu_1$ and $\mu_2$ the distributions of beliefs induced by $s_1$ and $s_2$, respectively. Let us verify the existence of a Blackwell-Pareto optimal structure  $\mathcal{J}=(\omega,t_1,t_2)$ such that $p(t_1)$ is distributed according to $\mu_1$ and $t_2$ dominates $s_2$.
Consider the set {${\F_{2|1}}(\mu_1)$} of distributions $\mu_2'$ of beliefs such that the pair $(\mu_1,\mu_2')$ is {feasible under the constraint of privacy}, i.e., there is a private private information structure inducing these distributions. In particular, $\mu_2$ belongs to {${\F_{2|1}}(\mu_1)$}.
By Lemma~\ref{lm_feasible_is_closed}, {${\F_{2|1}}(\mu_1)$} is compact in the weak topology  as a closed subset of the set of feasible pairs $\F_2$. Since the Blackwell order is continuous in the weak topology (see the proof of Lemma~\ref{lem:pareto-dominated}), there is a maximal element $\mu_2'\in{\F_{2|1}(\mu_1)}$ dominating~$\mu_2$. Let $\mathcal{J}=(\omega,t_1,t_2)$ be the private private information structure inducing the pair of distributions $(\mu_1, \mu_2')$. The structure $\mathcal{J}$ must be Blackwell-Pareto optimal. Else, by Lemma~\ref{lm_garbling_imperfection}, there is an equivalent structure $\mathcal{J}'=(\omega,t_1',t_2')$ where the signals do not determine the state. Then, by the construction from Lemma~\ref{lm_PO_implies_perfection}, there exists an informative signal $t$ independent of $(t_1',t_2')$. By 
{augmenting the second signal with $t$,}
we obtain a strict Blackwell-Pareto improvement of $\mathcal{J}'$ where the distribution of beliefs induced by the first signal remains fixed, but the distribution of beliefs induced by the second signal is improved to $\mu_2''$. So we have $\mu_2''\in{\F_{2|1}}(\mu_1)$ and $\mu_2''$ strictly dominates $\mu_2'$, which contradicts the maximality of $\mu_2'$ in ${\F_{2|1}}(\mu_1)$.
This contradiction implies that  $\mathcal{J}$ is Blackwell-Pareto optimal.

Without loss of generality, we can assume that $s_1=p(s_1)$ and $t_1=p(t_1)$, i.e., signals coincide with the induced posteriors. Thus, by the equivalence of $s_1$ and $t_1$, the joint distributions of $(\omega,s_1)$ and $(\omega,t_1)$ are the same. We complete the proof by defining $s_2'$ so that the joint distributions of $(\omega,s_1,s_2')$ and $(\omega, t_1,t_2)$ coincide.
\end{proof}
\begin{corollary}\label{cor_disclosure_def_equivalence}
For any number of states, given a one-signal information structure $(\omega,s_1)$, a signal $s_2$ is an optimal privacy-preserving recommendation if and only if $\mathcal{I}=(\omega,s_1,s_2)$ is a Blackwell-Pareto optimal private private information structure.
\end{corollary}
\begin{proof}
If $\mathcal{I}$ is Blackwell-Pareto optimal, then it cannot be dominated by any private private structure, in particular, by a structure of the form $(\omega, s_1,s_2')$. Therefore, $s_2$ is an optimal privacy-preserving recommendation.
Conversely, if $\mathcal{I}$ is not Blackwell-Pareto optimal, then, by Lemma~\ref{lm_single_improvement}, there is a dominating private private structure of the form $(\omega, s_1,s_2')$. Thus $s_2$ is dominated by $s_2'$, and so it is not an optimal privacy-preserving recommendation.
\end{proof}
We  are now ready to prove Theorem~\ref{thm:disclosure}.
\begin{proof}[Proof of Theorem~\ref{thm:disclosure}]
We are given $(\omega,s_1)$ and aim to construct a {dominating recommendation $s_2^\star$, i.e., a new signal independent of $s_1$  such that  any other signal $s_2$ independent of $s_1$ is weakly dominated by~$s_2^\star$; see Definition~\ref{def_dominant}.}

As usual, $p(s_1)$ is the belief induced by $s_1$. We sample $s_2^\star$ uniformly from the interval $[1-p(s_1),\,1]$ if the state is $\omega=1$ and from  $[0,\,1-p(s_1)]$ if $\omega=0$. Hence, conditioned on $s_1$, the constructed signal is distributed uniformly on $[0,1]$ and so   $s_2^\star$ is independent of $s_1$. Denote by  $F$ the cumulative distribution function of $p(s_1)$ and compute the belief induced by $s_2^\star$. The conditional probability of  $\omega=1$ given $s_2^\star=t$ is equal to $\Pr{1-p(s_1)\leq t}$. Hence, $p(s_2^\star)=1-F(1-s_2^\star)$. Thus the distribution function $F^\star$ of $p(s_2^\star)$ is given by
$$F^\star(x)=\Pr{p(s_2^\star)\leq x}=\Pr{1-F(1-s_2^\star)\leq x}=\Pr{s_2^\star\leq 1- F^{-1}(1-x)},$$
where $F^{-1}$ is defined as in \eqref{eq:f-inv}.  
Since $s_2^\star$ is uniformly distributed on $[0,1]$, we get
$F^\star(x)=1- F^{-1}(1-x)=\hat{F}(x),$
where $\hat{F}$ is the conjugate of $F$.

We conclude that $(\omega,s_1,s_2^\star)$ is a private private information structure and the distributions of posteriors induced by $s_1$ and $s_2^\star$ are conjugates. Therefore, $(\omega,s_1,s_2^\star)$ is Blackwell-Pareto optimal by Theorem~\ref{thm:binary-pareto}. Corollary~\ref{cor_disclosure_def_equivalence} implies that $s_2^\star$ is an optimal privacy-preserving recommendation.

Now, we show that any $s_2$ independent of $s_1$ is (weakly) dominated by $s_2^\star$. If $(\omega,s_1,s_2)$ is itself Blackwell-Pareto optimal, then,  by Theorem~\ref{thm:binary-pareto}, the cumulative distribution function of beliefs induced by $s_2$ is $\hat{F}$ and thus $s_2$ is equivalent to $s_2^\star$. 
Hence, it suffices to consider the case where $\mathcal{I}=(\omega,s_1,s_2)$ is not Blackwell-Pareto optimal. By Lemma~\ref{lm_single_improvement}, there exists an $s_2'$ dominating $s_2$ such that  $(\omega,s_1,s_2')$ is a Blackwell-Pareto optimal private private structure. 
Hence, by Theorem~\ref{thm:binary-pareto}, the distribution of beliefs induced by $s_2'$ is the conjugate of~$F$, {and so $s_2'$ is equivalent to~$s_2^\star$. We conclude that $s_2^\star$ dominates any $s_2$ independent of $s_1$, and thus $s_2^\star$ is a dominant privacy-preserving recommendation.}
\end{proof}

\subsection{Proof of Proposition~\ref{prop_optimal_utility}}\label{app_optimal_utility}

The indirect utility of the decision maker $U(q) =\sup_{a\in A} \big((1-q)\cdot u(a,0)+ q\cdot u(a,1)\big)$ is a continuous convex function as $u$ is assumed to be bounded from above. By a standard argument, the expected payoff of the decision maker observing a signal $s$ is equal to $\E{U(p(s))}$.
 By Theorem~\ref{thm:disclosure}, the distribution of $p(s)$ for the {dominant} privacy-preserving recommendation $s=s_2^\star$ is the conjugate to that of~$p(s_1)$. Thus  Proposition~\ref{prop_optimal_utility} is implied by the following lemma about integrating with respect to conjugate distributions.
\begin{lemma}\label{lm_change_of_variables}
Let $F$ and $\hat{F}$ be a pair of conjugate distributions, and  $U$ be a  continuous function. The following identity holds
$$\int_{[0,1]}U(q)\,\dd\hat{F}(q)=\int_{[0,1]}U\big(1-F(t)\big)\,\dd t.$$
\end{lemma}
\begin{proof}
Assume first that  $F$ is a bijection $[0,1]\to[0,1]$. By the definition of a conjugate, we obtain \begin{equation}\label{eq_change_of_variable}
\int_{[0,1]}U(q)\,\dd\hat{F}(q)=\int_{[0,1]}U(q)\,\dd\left(1-F^{-1}(1-q)\right)=\int_{[0,1]}U\big(1-F(t)\big)\,\dd t,
\end{equation}
where we changed the variable $q=1-F(t)$ in the second equality. 

Now we show that the  identities~\eqref{eq_change_of_variable} hold even without the assumption that $F$ is a bijection. Since any continuous function on $[0,1]$ can be approximated by a linear combination of indicators $\one_{[0,a]}$ in the $\sup$-norm, it is enough to prove that
$$\int_{[0,1]}\one_{[0,a]}(q) \dd\hat{F}(q)=\int_{[0,1]}\one_{[0,a]}\big(1-F(t)\big)\dd t$$ or, equivalently, that
\begin{equation}\label{eq_check_on_indicators}
\hat{F}(a)=\lambda\big(\{t\in [0,1]:\, 1-F(t)\leq a \}\big),
\end{equation}
where $\lambda$ stands for the Lebesgue measure. By the monotonicity of $F$, the set from the right-hand side of~\eqref{eq_check_on_indicators} is an interval $[t_a,1]$ where $t_a=\min\{t: F(t)\geq 1-a\}$, i.e., $t_a=F^{-1}(1-a)$ as defined in~\S\ref{sec_Pareto_for_two}. We conclude that~\eqref{eq_check_on_indicators} holds as it is equivalent to the equality $\hat{F}(a)=1-F^{-1}(1-a)$ defining the conjugate distribution and thus~\eqref{eq_change_of_variable} holds as well.
\end{proof}

{
\subsection{Proof of Lemma~\ref{lm_convex_U}}\label{sec_proof_lm_convex_U}
\begin{proof}
The upper semicontinuity of $U$ and the compactness of the set of belief distributions that can be induced by private private structures (Lemma~\ref{lm_feasible_is_closed}) ensure that the optimum is attained at some private private structure $\mathcal{I}=(\omega, s_1,\ldots, s_n)$; see Lemma~4.3 in \cite{villani2008optimal} for a proof that integration against an upper semicontinuous function defines an upper semicontinuous functional over distributions, endowed with the weak topology.
By the same compactness argument, there exists a Blackwell-Pareto optimal private private structure $\mathcal{I}^*=(\omega, s_1^*,\ldots, s_n^*)$ such that each signal $s_j^*$ is equivalent to $s_j$ for $j\ne i$, and  $s_i^*$ weakly dominates $s_i$. 
By the convexity of~$U$ in $q_{i}$,  the expected value $\E{U}$ cannot decrease when $i$'s signal becomes Blackwell-more informative. Since $\mathcal{I}$ is optimal for the designer and the value of $\mathcal{I}^*$ is at least as high, $\mathcal{I}^*$ is also optimal.
\end{proof}
}

\subsection{Proof of Proposition~\ref{prop:ladder}}\label{app_welfare}

\begin{proof}

{Consider the problem of maximizing $W(\mathcal{I})=\E{U_1\big(p(s_1)\big)+U_2\big(p(s_2)\big)}$ with continuous $U_1,U_2$ and convex $U_2$
over private private information structures $\mathcal{I}=(\omega,s_1,s_2)$. 
By Lemma~\ref{lm_convex_U}, the convexity of the objective in one of the arguments implies that the optimum is attained at a Blackwell-Pareto optimal $\mathcal{I}$.} By Theorem~\ref{thm:binary-pareto}, the distributions of posteriors $p(s_1)$ and $p(s_2)$ induced by $\mathcal{I}$ are conjugates. Denote the distribution of $p(s_1)$ by $\mu$, which can be an arbitrary measure on $[0,1]$ with mean equal to the prior $p$. Denote the set of all such measures by $\Delta_p([0,1])$. The choice of $\mu\in \Delta_p([0,1])$ determines the distribution $\hat{\mu}$ of $p(s_2)$. Thus, to maximize $W(\mathcal{I})$ over $\mathcal{I}$ it is enough to find $\mu\in \Delta_p([0,1])$ maximizing the functional
\begin{equation}\label{eq_w_mu}
w(\mu)=\int_{[0,1]}U_1(q)\dd\mu(q)+\int_{[0,1]}U_2(q)\dd\hat{\mu}(q).
\end{equation}
Below, we check that $w(\mu)$ is convex and continuous in the weak topology. Hence, by Bauer's principle, the optimum is attained at an extreme point of $\Delta_p([0,1])$. It is well-known that the extreme points of this set are measures with the support of size at most two: see, e.g., \cite*{winkler1988extreme}. Since the optimal $\mu$ is supported on at most two points, its conjugate $\hat{\mu}$ is supported on at most three points (see the discussion after Theorem~\ref{thm:binary-pareto}) and we conclude that there is an optimal structure $\mathcal{I}$ where $s_1$ takes at most two values and $s_2$ takes at most three values.

{We now check that $w$ is convex and continuous in the weak topology. The first integral in~\eqref{eq_w_mu} is linear in $\mu$ (hence, convex) and continuous thanks to the continuity of the integrated.  We, therefore, focus on the second integral. By Lemma~\ref{lm_change_of_variables}, it can be expressed through the cumulative distribution function $F$ of~$\mu$ as follows 
\begin{equation}\label{eq_2nd_integral}
\int_{[0,1]}U_2(q)\dd\hat{\mu}(q)=\int_{[0,1]} U_2\big(1-F(q)\big) \dd q.
\end{equation}
It is a convex function of $F$ by convexity of $U_2$. 
To show its continuity,} note that the weak convergence $\mu_k\to \mu$ implies the convergence of $F_k(q)\to F(q)$ for all points $q$ of continuity of $F$ \cite[see][Theorem 15.3]{aliprantis2006infinite}.
Since any monotone function is continuous almost everywhere with respect to the Lebesgue measure, the sequence of functions $U_2\big(1-F_k\big)$ converges almost everywhere in $[0,1]$ and is bounded thanks to the boundedness of $U_2$. The Lebesgue dominated convergence theorem implies that $\int_{[0,1]}U_2\big(1-F_k(q)\big)\dd q$ converges to $\int_{[0,1]}U_2\big(1-F(q)\big)\dd q$. {We conclude that the second integral in~\eqref{eq_w_mu} is a convex continuous function of $\mu$, and thus so is $w(\mu)$ itself.}

{It remains to show that there is an optimal structure with binary signals if both $U_1$ and $U_2$ are convex. As a first step, we demonstrate uniqueness of optimal belief distributions for strictly convex  $U_2$ without any assumptions on $U_1$.
It is enough to show that $w(\mu)$ is a strictly convex functional of $\mu$, i.e.,
$$w(\mu)> \alpha\cdot w(\nu)+(1-\alpha) w(\nu') \quad\text{with}\quad \mu=\alpha\cdot\nu+(1-\alpha)\nu'$$
for all distinct $\nu,\nu'\in\Delta_p([0,1])$ and $\alpha\in(0,1).$
The first integral  in~\eqref{eq_w_mu} is affine in~$\mu$, and thus we need to show strict convexity of the second integral. Denote the cumulative distribution functions of $\nu$, $\nu'$, and $\mu$ by $G$, $G'$, and $F=\alpha\cdot G+(1-\alpha)G'$. By~\eqref{eq_2nd_integral}, we get
\begin{align*}
\int_{[0,1]}U_2(q)\dd\hat{\mu}(q)=\int_{[0,1]} U_2\Big(1-\alpha\cdot G(q)+(1-\alpha)G'(q)\Big) \dd q.
\end{align*}
By the assumption that $\nu\ne \nu'$, the functions $G$ and $G'$ differ on a set of positive Lebesgue measure. On this set, by strict convexity of $f(t)=U_2(1-t)$, we get 
$$U_2\big(1-\alpha\cdot G(q)+(1-\alpha)G'(q)\big)>\alpha\cdot U_2\big(1-G(q)\big)+(1-\alpha) U_2\big(1-G'(q)\big).$$
We conclude that 
$$
\int_{[0,1]}U_2(q)\dd\hat{\mu}(q)>\alpha \int_{[0,1]}U_2(q)\dd\hat{\nu}(q)+(1-\alpha) \int_{[0,1]}U_2(q)\dd\widehat{\nu'}(q).$$
Thus the functional $w(\mu)$ is strictly convex in $\mu$, and so the optimal $\mu$ is unique.

Now suppose that $U_1$ and  $U_2$ are strictly convex. By convexity of~$U_2$, there is an optimal private private structure $(\omega,s_1,s_2)$ inducing distributions of posteriors $\mu$ and $\hat{\mu}$ such that $\mu$ is supported on at most two points and $\hat \mu$ on at most three. Exchanging the roles of signals $s_1$ and $s_2$ and using convexity of $U_1$, we conclude that there is another optimal structure inducing distributions $\mu'$ and $\widehat{\mu'}$ with~$\mu'$ supported on at most three points and $\widehat{\mu'}$, on at most two. By the strict convexity, the optimal distributions are unique and so $\mu=\mu'$ and $\hat{\mu}=\widehat{\mu'}$. We obtain that, in $(\omega,s_1,s_2)$, each signal induces at most two different posteriors and thus this structure is equivalent to a structure with each signal taking at most two different values.

Finally, we relax the assumption of strict convexity. Consider convex but not necessarily strictly convex~$U_1$ and $U_2$. For $\varepsilon>0$, define $U_i^\varepsilon(q)=U_i(q)+\varepsilon\cdot q^2$. By strict convexity of  $U_i^\varepsilon$, the corresponding problem admits optimal belief distributions $\mu^\varepsilon$ and $\widehat{\mu^\varepsilon}$ with at most two atoms. Extracting a weakly-convergent subsequence as~$\varepsilon\to 0$, we deduce the existence of a solution to the original problem with binary signals.}
\end{proof}

\subsection{Proof of Corollary~\ref{prop:comparative}}
\begin{proof}
Since $(\omega,s_1,s_2)$ is Blackwell-Pareto optimal, the signal $s_2$ can be seen as an optimal privacy-preserving recommendation corresponding to $(\omega,s_1)$. By Theorem~\ref{thm:disclosure}, such $s_2$ dominates any other signal $s_2'$ independent of $s_1$. Hence, to conclude that $s_2$ dominates $t_2$, it is enough to demonstrate that there is a private private information structure $(\omega,s_1',s_2')$ such that $s_1'$ is equivalent to $s_1$ and $s_2'$ is equivalent to $t_2$. By the assumption, $t_1$ dominates $s_1$ and, therefore, the signal $s_1$ is equivalent to some garbling $s_1'$ of $t_1$. Putting $s_2'=t_2$, we get the desired private private information structure $(\omega,s_1',s_2')$ and deduce that $t_2$ is dominated by~$s_2$.
\end{proof}

\subsection{Proof of Proposition~\ref{prop:mutual-info}}\label{app:mutual-info}

\begin{proof}[Proof of Proposition~\ref{prop:mutual-info}]
We have $I(\omega; (s_1, \ldots, s_n)) \le H(p)$. By Lemma~\ref{lm_supperadditive} proved in Appendix~\ref{app_prelim_lemmas}, $\sum_{i=1}^n I(\omega;s_i)\leq I(\omega; (s_1, \ldots, s_n))$ provided that the signals are independent.
\end{proof}

\subsection{Proof of Proposition~\ref{prop:mutual-info-two}}\label{app:mutual-info-two}

\begin{proof}
We have $I(\omega; (s_1, \ldots, s_n)) \le H(p)$, so it suffices to show that 
\begin{equation}\label{eq_superadditive_strict}
    \sum_i I(\omega;s_i) \leq I(\omega; (s_1, \ldots, s_n)) - c_p\sum_{i < j}I(\omega;s_i)I(\omega;s_j). 
\end{equation}
Similarly to the proof of Lemma~\ref{lm_supperadditive}, the result for general $n$ follows from the result for $n=2$ via an inductive argument. Indeed, assume that the statement holds for $n\leq n_0$ with $n_0\geq 2$ and show that it holds for $n=n_0+1$ as well:
\begin{align*}
\lefteqn{I\Big(\omega\,;\,\big(s_1,\ldots,s_{n_0}, s_{n_0+1}\big)\Big)}\\
&=I\Big(\omega\,;\,\big((s_1,\ldots,s_{n_0}), s_{n_0+1}\big)\Big)\\ 
&\geq I\Big(\omega\,;\,\big(s_1,\ldots,s_{n_0}\big)\Big)+I(\omega\,;\,s_{n_0+1})+c_p\cdot  I\Big(\omega\,;\,\big(s_1,\ldots,s_{n_0}\big)\Big)\cdot I(\omega\,;\,s_{n_0+1}),
\end{align*}
where we applied the two signal version of~\eqref{eq_superadditive_strict} for the pair of signals $(s_1,\ldots, s_{n_0})$ and $s_{n_0+1}$. Estimating $I(\omega\,;\,s_1,\ldots,s_{n_0})$ from below via the $n_0$-signal version of~\eqref{eq_superadditive_strict}, we get 
\begin{align*}
\lefteqn{I\Big(\omega\,;\,\big(s_1,\ldots,s_{n_0}, s_{n_0+1}\big)\Big)}\\
&\geq 
\sum_{i=1}^{n_0} I\big(\omega\,;\,s_i\big)+c_p\sum_{1\leq i<j\leq n_0}I\big(\omega\,;\,s_i\big)\cdot I\big(\omega\,;\,s_j\big)+ I(\omega\,;\,s_{n_0+1})+\\
&\quad+c_p \cdot I(\omega\,;\,s_{n_0+1})\cdot \left(\sum_{i=1}^{n_0} I\big(\omega\,;\,s_i\big)+c_p\sum_{1\leq i<j\leq n_0}I\big(\omega\,;\,s_i\big)\cdot I\big(\omega\,;\,s_j\big)\right)
\end{align*}
Eliminating all the cubic terms from the second line can only decrease the right-hand side and leads to inequality~\eqref{eq_superadditive_strict} for $n=n_0+1$:
$$I\Big(\omega\,;\,\big(s_1,\ldots,s_{n_0}, s_{n_0+1}\big)\Big)\geq 
\sum_{i=1}^{n_0+1} I\big(\omega\,;\,s_i\big)+c_p\sum_{1\leq i<j\leq n_0+1}I\big(\omega\,;\,s_i\big)\cdot I\big(\omega\,;\,s_j\big).$$
\smallskip

It thus remains to prove the result for $n=2$. We aim to show that 
\begin{equation}\label{eq_strict_supperadditive}
    I(\omega\,;\,s_1)+I(\omega\,;\,s_2)-I(\omega\,;\,s_1,s_2)\leq -c_p\cdot  I(\omega\,;\,s_1)\cdot I(\omega\,;\,s_2).
\end{equation}
Denote the left-hand side of~\eqref{eq_strict_supperadditive} by $\Delta$ and the posterior probabilities of the high state by $p_i=\Pr{\omega=1}{ s_i}$ and $p_{12}=\Pr{\omega=1}{s_1,s_2}$. By the martingale property, $\E{p_{12}}{s_i}=p_i$ and $\E{p_i}=p$. Thanks to the martingale property, we can represent $I(\omega\,;\,s_i)$ as follows:
$$I(\omega\,;\,s_i)=\E{p_{12}\log_2\left(\frac{p_i}{p}\right)+ (1-p_{12})\log_2\left(\frac{1-p_i}{1-p}\right)},$$
where $p_i$ outside of the logarithm was replaced by $p_{12}$. Hence,
$$\Delta=\E{p_{12}\log_2\left(\frac{p_1\cdot p_2}{p_{12}\cdot p}\right)+ (1-p_{12})\log_2\left(\frac{(1-p_1)(1-p_2)}{(1-p_{12})(1-p)}\right)}.$$
By the concavity of the logarithm, a convex combination of logarithms is at most the logarithm of the convex combination. Therefore,
$$\Delta\leq \E{\log_2\left(\frac{p_1p_2}{p}+\frac{(1-p_1)(1-p_2)}{(1-p)}\right)}.$$
Denote the centered posteriors by $\bar p_1=p_1-p$ and $\bar p_2=p_2-p$. The right-hand side simplifies to
$$\E{\log_2\left(\frac{p_1\cdot p_2}{p}+\frac{(1-p_1)(1-p_2)}{(1-p)}\right)}=\E{\log_2\left(1+\frac{\bar p_1\cdot \bar p_2 }{p(1-p)}\right)}.$$
Note that $\frac{\bar p_1\cdot \bar p_2}{p(1-p)}$ belongs to the interval $\left[-1,\max\left\{\frac{1-p}{p},\frac{p}{1-p}\right\}\right]$. 
Consider the function $f(x)=\log_2(1+x)$. By the Taylor formula, for any $x> -1$, 
$$f(x)=f(0)+f'(0)\cdot x+ \frac{f''(y)}{2}x^2$$
for some $y$ between $0$ and $x$. Computing the derivatives, we get
$$f(x)=\frac{1}{\ln 2}x+\frac{1}{2\ln 2}\frac{-1}{(1+y)^2}x^2\leq \frac{1}{\ln 2}x-\frac{\min\{p^2,\, (1-p)^2\}}{2\ln 2}x^2,$$
where in the last inequality we used the fact that $y\in \left[-1,\max\left\{\frac{1-p}{p},\frac{p}{1-p}\right\}\right]$. Taking into account that $\min\{p^2,\, (1-p)^2\}\geq 4p^2(1-p)^2$, we obtain 
$$f(x)\leq \frac{1}{\ln 2}x-\frac{2p^2(1-p)^2}{\ln 2}x^2$$
and conclude that
$$\E{\log_2\left(1+\frac{\bar p_1\cdot \bar p_2}{p(1-p)}\right)}\leq \frac{1}{\ln 2}\E{\frac{\bar p_1\cdot \bar p_2}{p(1-p)}}-\frac{2p^2(1-p)^2}{\ln2 }\E{\left(\frac{\bar p_1\cdot \bar p_2}{p(1-p)}\right)^2}.$$
Since the expectation of the product  is the product of expectations for the independent random variables $\bar p_1$ and $\bar p_2$, 
$$\Delta\leq -\frac{2}{\ln 2}\var[p_1]\cdot\var[p_2].$$
It remains to lower-bound the variance by the mutual information. The Kullback–Leibler divergence between Bernoulli random variables with success probabilities $p$ and $x$ is defined as follows:  
$D_\mathrm{KL}(x||p)=x\log_2\left(\frac{x}{p}\right)+(1-x)\log_2\left(\frac{1-x}{1-p}\right)$. Then $I(\omega\,;\,s_i)=\E{D_\mathrm{KL}(p_i||p)}$. Applying the inequality $\ln t\leq t-1$ to both logarithms
and taking into account that $\log_2 t =\frac{1}{\ln 2}\ln t$, we obtain 
$$D_\mathrm{KL}(x||p)\leq \frac{1}{\ln 2}\left(x\cdot\left( \frac{x}{p}-1\right)+(1-x)\cdot\left( \frac{1-x}{1-p}-1\right)\right)= \frac{1}{p(1-p)\ln 2}(x-p)^2$$
for $x\in[0,1]$. Therefore, $$\var[p_i]\geq \left(p(1-p)\ln 2\right)\cdot  I(\omega\,;\,s_i)$$ and we conclude that
$$\Delta \leq -2\ln 2\cdot p^2(1-p)^2\cdot I(\omega\,;\,s_1)\cdot I(\omega\,;\,s_2)=-c_p\cdot I(\omega\,;\,s_1)\cdot I(\omega\,;\,s_2),$$
which is equivalent to the desired inequality~\eqref{eq_strict_supperadditive}.
\end{proof}

{
\subsection{Proof of Claim~\ref{claim_feasible_reduced_forms}}\label{app:proof_reduced_forms}

\begin{proof}
Each social choice rule ${C}$ defines a joint distribution of the outcome $\omega$ and the types $(t_1,\ldots,t_n$). Interpreting $\omega$ as the state and types as independent signals about~$\omega$, we conclude that the posterior belief $p(t_i)$ equals the reduced-form rule ${C}_i(t_i)$. This equivalence completes the proof. 
\end{proof}
}

\section{{Signal Informativeness Measured by Variance Reduction}}\label{app_quadratic_informativeness}

{In \S\ref{sec_responsive}, we measured signal informativeness by mutual information. Mutual information is the expected utility associated with a particular decision problem: one in which the indirect utility is given by a constant minus the entropy. Here we consider quadratic indirect utility, which results in an alternative measure of signal informativeness.}

{Let $\omega\in \Omega$ be a random state taking a finite number of different values and 
$s$ be a signal about $\omega$ inducing a belief $p(s)\in \Delta(\Omega)$. Denote $\bar H(q) = \sum_{k \in \Omega}q(k)(1-q(k))$ for~$q \in \Delta(\Omega)$. Analogously to mutual information defined in \eqref{eq:I:measure}, consider
\begin{align*}
    \bar I(\omega;s) = \bar H\left(\E{p(s)}\right)-\E{\bar H(p(s))}.
\end{align*}
Loosely speaking, $\bar I(\omega;s)$  is the expected reduction in the variance in $\omega$ after observing~$s$.}

{We establish an analog of Proposition~\ref{prop:mutual-info} for $\bar I(\omega;s)$ thus obtaining new bounds on causal strength or the social choice rule's responsiveness as discussed in \S\ref{sec_responsive} and getting additional necessary conditions for feasibility of belief distributions under the privacy constraint.}

\begin{proposition}
\label{prop:quadartic-info}
{For any finite $\Omega$ 
and a private private structure $(\omega, s_1,\ldots, s_n)$, 
 $$\sum_i \bar I(\omega; s_i) \leq \bar H(p).$$
 where $p\in \Delta(\Omega)$ is the prior distribution of $\omega$.}
\end{proposition}

{While this statement of Proposition~\ref{prop:quadartic-info} is completely analogous to that of Proposition~\ref{prop:mutual-info}, } the proof uses a different technique, exploiting the  $L^2$ orthogonality of independent random variables. Indeed, we do not know of a unifying argument that implies both propositions, and we furthermore do not know of additional decision problems that yield analogous statements. We note that Proposition~\ref{prop:quadartic-info}  is a generalization---from the binary state case---of the ``concentration of dependence'' principle of \cite*{mossel2020social}. A very similar idea appeared earlier in the economics literature \citep*{al2000pivotal} and is standard in the analysis of Boolean functions \citep[see, e.g.,][]{kahn1988influence, o2014analysis}.

\begin{proof}
As in the proof of Proposition~\ref{prop:mutual-info}, we show a stronger statement:
\begin{align*}
    \sum_i \bar I(\omega ;s_i) \leq \bar I(\omega; (s_1,\ldots,s_n)).
\end{align*}
This implies the statement of Proposition~\ref{prop:quadartic-info} since $\bar H$ is concave, and so, as with mutual information, $\bar I(\omega; (s_1,\ldots,s_n)) \leq \bar H(p)$.

Applying the definition of $\bar I$, and using the martingale property $$\E{p(s_i)} = \E{p(s_1,\ldots,s_n)} = p,$$ what we want to prove is that
\begin{align*}
    \sum_i \sum_{k \in \Omega} \E{[p(s_i)(k)-p(k)]^2} \leq \sum_k \E{[p(s_1,\ldots,s_n)(k)-p(k)]^2}.
\end{align*}
In fact, we prove an even stronger statement, showing that the inequality holds already for each $k \in \Omega$ separately, rather than only when summed over $\Omega$. 

To this end, fix $k$, and denote the centered posteriors by $\bar p_i = p(s_i)(k)-p(k)$, so that $\bar p_i$ is a zero-mean bounded random variable. Likewise denote $\bar p = p(s_1,\ldots,s_n)(k)-p(k)$. We want to prove that $\E{\bar p^2} \geq \sum_i \E{\bar p_i^2}$.

Let $V$ be the vector space of zero-mean random variables spanned by $\{\bar p,\bar p_1,\ldots,\bar p_n\}$. As a subspace of $L^2$, it is endowed with the inner product given by the expectation of the product. 

Since the structure is private private, $\E{\bar p_i \cdot \bar p_j} = \E{\bar p_i}\cdot \E{\bar p_j}=0$ for $i \neq j$. That is, the vectors $\{\bar p_1,\ldots,\bar p_n\}$ are orthogonal. Hence, $V = \mathrm{span}\{q, \bar p_1,\ldots,\bar p_n\}$ for some $q \in V$ that is orthogonal to each $\bar p_i$ (note that $q=0$ is allowed and corresponds to the case where $\bar{p}$ can be represented as a linear combination of $\bar{p}_i$). Since $\bar p \in V$, we can write
\begin{align*}
    \bar p = \alpha q + \sum_i \alpha_i \bar p_i
\end{align*}
for some scalars $\alpha,\alpha_1,\ldots,\alpha_n$. By the martingale property, $\E{\bar p}{\bar p_i} = \bar p_i$, and so $\E{(\bar p-\bar p_i)\cdot p_i}=0$. That is, $\bar p - \bar p_i$ is orthogonal to $\bar p_i$. Hence $\alpha_i = 1$, and 
\begin{align*}
    \bar p = \alpha q + \sum_i \bar p_i.
\end{align*}
Since $\{q, \bar p_1,\ldots,\bar p_n\}$ are orthogonal,
\begin{align*}
    \E{\bar p^2} = \alpha^2 \E{q^2} + \sum_i \E{\bar p_i^2},
\end{align*}
and in particular $\E{\bar p^2} \geq \sum_i \E{\bar p_i^2}$.

\end{proof}
Note that we used the assumption that the structure is private private only inasmuch as it implies that posteriors {induced by different signals}
are uncorrelated.

\section{Non-Uniqueness of Optimal Privacy-Preserving Recommendation for Non-Binary States}\label{app_non_uniqueness}

{Theorem~\ref{thm:disclosure} shows that, when the state is binary, there is a dominant privacy-preserving recommendation $s_2^\star$ for each $s_1$. This $s_2^\star$ dominates any other $s_2$ independent of $s_1$, and so all optimal privacy-preserving recommendations are Blackwell-equivalent to  $s_2^\star$. In this section, we show that there may be non-equivalent optimal recommendations $s_2$ for non-binary states. In particular, a dominant recommendation may fail to exist.} 

Consider the case of $\Omega=\{0,1,2\}$ where $\omega\in \Omega$ is distributed according to the prior $p=\big(\nicefrac{1}{4},\nicefrac{1}{2},\nicefrac{1}{4}\big)$. The signal $s_1$ is binary: if 
$\omega=2$ then $s_1=1$, if $\omega=0$ then $s_1=0$, and   if $\omega=1$ then $s_1\in\{0,1\}$ equally likely. The induced beliefs  $p(s_1)$ are equal to either $\big(\nicefrac{1}{2},\nicefrac{1}{2},0\big)$ or $\big(0,\nicefrac{1}{2},\nicefrac{1}{2}\big)$, each with probability $\nicefrac{1}{2}$.

 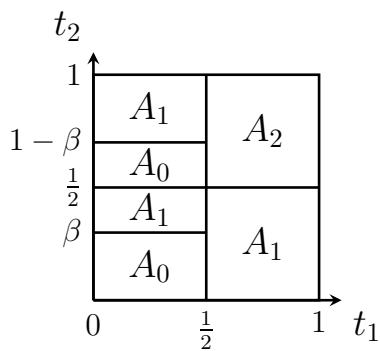
\begin{figure}[h]
\begin{center}
\begin{tikzpicture}[scale=0.3, line width=1pt]
\draw (0,0) -- (10,0)--(10,10)--(0,10)--(0,0);
\draw (5,0) -- (5,10);
\draw (5,5) -- (10,5);
\draw (0,3) -- (5,3);
\draw (0,7) -- (5,7);
\draw (0,5) -- (5,5);
\node at (7.5,2.5) {\large $A_1$};
\node at (7.5,7.5) {\large $A_2$};
\node at (2.5,1.5) {\large $A_0$};
\node at (2.5,6) {\large $A_0$};
\node at (2.5,8.5) {\large $A_1$};
\node at (2.5,4) {\large $A_1$};

\draw [->,>=stealth] (0,0) -- (11,0);
\draw [->,>=stealth] (0,0) -- (0,11);
\node[below right] at (11,0) {\large $t_1$};
\node[above left] at (0,11) {\large $t_2$}; 

\node[below] at (5,0) {$\frac{1}{2}$};
\node[below] at (10,0) {$1$};
\node[left] at (0,5) {$\frac{1}{2}$};
\node[left] at (0,3) {$\beta$};
\node[left] at (0,7) {$1-\beta$};
\node[left] at (0,10) {$1$};
\node[below] at (0,-0.1) {$0$};
\end{tikzpicture}

\end{center}
\caption{In the private private information structure associated with the partition $(A_0,A_1,A_2)$, the signal $t_1$ induces the same distribution of posteriors as  $s_1$. For any parameter $\beta\in [0,\nicefrac{1}{2}]$, this partition is a partition of uniqueness and, hence,  we get  a one-parametric family of non-equivalent optimal privacy-preserving recommendations given by  the signal $t_2$. 
\label{fig_nonunique}
}
\end{figure}

To construct an optimal privacy-preserving recommendation $s_2$ we first build an auxiliary private private information structure $(\omega,t_1,t_2)$, associated with the partition of $[0,1]^2$ into three sets $A_0$, $A_1$, and $A_2$ depending on a parameter $\beta\in[0,\nicefrac{1}{2}]$, as depicted in Figure~\ref{fig_nonunique}.
The pair of signals $(t_1,t_2)$ is uniformly distributed on $[0,1]^2$ and the state $\omega$ equals $k$ whenever the pair of signals belongs to $A_k$. Since the area of $A_1$ is twice the area of $A_0$ and $A_2$, and since the latter two areas are equal,  $\omega$ has the right distribution $p=\big(\nicefrac{1}{4},\nicefrac{1}{2},\nicefrac{1}{4}\big)$. 

Let us check that the signal $t_1$ is equivalent to $s_1$, i.e., it induces the same posterior distribution. Indeed, if the realization of $t_1$ belongs to $[0,\nicefrac{1}{2}]$, half of each vertical slice of the square is covered by $A_0$ and half by $A_1$, and so the induced posterior is $p(t_1)=\big(\nicefrac{1}{2},\nicefrac{1}{2},0\big)$ with probability $\nicefrac{1}{2}$. Similarly, for $t_1\in [\nicefrac{1}{2},1]$, we get $p(t_1)=\big(0,\nicefrac{1}{2},\nicefrac{1}{2}\big)$ also with probability $\nicefrac{1}{2}$.

Let us check that for different values of $\beta$ we obtain non-equivalent recommendations. For this purpose, we compute the distribution of posteriors induced by $t_2$. Note that $t_2$ is equivalent to a signal $s_2$ taking four different values corresponding to different pairs of sets $(A_i,A_j)$ intersected by the horizontal slice of the square. We get the following distribution of posteriors: 
\begin{align*}
p(s_2)=\left\{\begin{array}{cc}
\big(0,\nicefrac{1}{2},\nicefrac{1}{2}\big) & \mbox{with probability $\beta$}\\
\big(\nicefrac{1}{2},0,\nicefrac{1}{2}\big) & \mbox{with probability $\nicefrac{1}{2}-\beta$}\\
\big(0,1,0\big) & \mbox{with probability $\nicefrac{1}{2}-\beta$}\\
\big(\nicefrac{1}{2},\nicefrac{1}{2},0\big) & \mbox{with probability $\beta$}
\end{array}\right. 
\end{align*}
For different values of $\beta$ we get  different distributions, i.e., the constructed recommendations are  not equivalent.

It remains to show that for any value of $\beta$, the signal $s_2$ is an optimal privacy-preserving recommendation. To this end, we check that the partition $(A_0,A_1,A_2)$ is a partition of uniqueness (as defined in Appendix~\ref{app:Pareto_via_uniqueness}). Therefore, by Theorem~\ref{thm:pareto_appendix}, the information structure $(\omega,t_1,t_2)$ is Blackwell-Pareto optimal. Thus  $t_2$  is an optimal privacy-preserving recommendation, and so is $s_2$ as it is equivalent to $t_2$. 

To show that $(A_0,A_1,A_2)$ is a partition of uniqueness,  we rely on the following elementary but useful general observation: if in a partition $(A_0,\ldots, A_{m-1})$ of $[0,1]^n$ all sets except for possibly one are sets of uniqueness, then the partition itself is a partition of uniqueness. In our example, the set $A_2$ is upward-closed and hence is a set of uniqueness by Theorem~\ref{th_Lorentz}. The set $A_0$ is a rearrangement of an upward-closed set (since it can be made upward-closed via a measure-preserving reparametrization of the axes) and so is a set of uniqueness by the same theorem. Thus the partition $(A_0,A_1,A_2)$ is a partition of uniqueness and $s_2$ is an optimal privacy-preserving recommendation for any value of~$\beta$.

The partition $(A_0,A_1,A_2)$ provides an interesting example of the fact that  a partition of uniqueness is not necessarily composed of sets of uniqueness. Indeed, for $\beta\ne 0$, the set $A_1$ is not a set of uniqueness as it has the same marginals as the set obtained by the reflection of $A_1$ with respect to the vertical line $t_1=\nicefrac{1}{2}$.

\subsection{Representing Private Private Signals for Binary $\omega$ as Sets}\label{app_explicit_representation}

To simplify notation, in this section, we consider the case of $n=2$ agents and a binary state $\omega\in \Omega=\{0,1\}$. Nevertheless, the same ideas apply more generally to finitely many agents and possible values of the state. By Proposition~\ref{prop:associated}, any private private information structure $\mathcal{I}$ is equivalent to a structure  associated with some set $A\subseteq [0,1]^2$, which we denote by $\mathcal{I}_A=(\omega,s_1,s_2)$.
In this section, we show how to construct $\mathcal{I}_A$ given~$\mathcal{I}$. We begin with the case where $\mathcal{I}$ is uninformative and describe $\mathcal{I}_A$ for any prior $p=\Pr{\omega=1}$. Relying on this construction, we then describe how to construct $\mathcal{I}_A$ for any $\mathcal{I}$ with a finite number of possible signal values.

Recall that in $\mathcal{I}_A$, the signals $(s_1,s_2)$ are uniformly distributed on $[0,1]^2$, the state is $\omega=\one_A(s_1,s_2)$, and $A$ is some measurable subset of $[0,1]^2$ with Lebesgue measure $\lambda(A)=p$ so that $p=\Pr{\omega=1}$.
Recall that the distribution of posteriors induced by $\mathcal{I}_A$ can be computed as follows: the conditional probability of the high state given that agent $i$ receives a signal $s_i=t$ is exactly $\alpha_i^A(t)$, the one-dimensional Lebesgue measure of the cross-section $\{(y_1,y_2)\in A\,:\, y_i=t\}$. In other words, $\alpha_i^A(s_i)$ is $i$'s posterior corresponding to $s_i$ and the induced distribution of posteriors $\mu_i$ is the image of the uniform distribution under the map $\alpha_i^A$, i.e., $\mu_i([0,t])$ equals the Lebesgue measure of $\{x_i\in [0,1]\,:\, \alpha_i^A(x_i)\leq t\}$.

\begin{example}[Non-informative signals]\label{ex_noninformative}
Consider a private private information structure $\mathcal{I}$, where both agents receive completely uninformative signals, i.e., the induced posteriors are equal to the prior $p$.

To find an equivalent structure $\mathcal{I}_A$, we need to construct a set $A=A_p\subseteq[0,1]^2$ such that the Lebesgue measure of all its projections equals~$p$. To this end, let $Y$ be any subset of $[0,1]$ with measure~$p$ (e.g., $[0,p]$), and let
\begin{align*}
A = \left\{(x_1,x_2) \in [0,1]^2\,:\, \left\lfloor x_1+x_2 \right\rfloor \in Y\right\},
\end{align*}
where $\lfloor x\rfloor$ is the fractional part of $x \in \R$. It is easy to see that $A$ indeed has the desired property.

\end{example}

It turns out that the construction of an information structure $\mathcal{I}_{A_p}$ representing completely uninformative signals can be used to find a representation for any information structure with a finite number of possible signal values.
\begin{example}[Arbitrary finite number of signal values] Let $\mathcal{I}=(\omega, s_1,s_2)$ be a private private information structure with $n=2$ agents and finite signal spaces $S_1$ and $S_2$. Our goal is to construct a set $A\subseteq [0,1]^2$ such that the structure $\mathcal{I}_A$ associated with $A$ is equivalent to $\mathcal{I}$.

 For each agent $i\in \{1,2\}$, consider a disjoint partition of $[0,1]$ into intervals $A_{s_i}$, $s_i\in S_i$, so that the length of each $A_{s_i}$ coincides with the probability that the signal $s_i\in S_i$ is sent under $\mathcal{I}$. 
 Let $q(s_1,s_2)\in [0,1]$ be the conditional probability of $\{\omega=1\}$ given signals $(s_1,s_2)$. 
 
 Recall that, in Example~\ref{ex_noninformative}, we constructed a set $A_p\subseteq [0,1]^2$ such that its projection to each of the coordinates has a constant density $p$.
  Now we construct $A$ by pasting the appropriately rescaled copy of $A_{q(s_1,s_2)}$ into each rectangle $A_{s_1}\times A_{s_2}$. Denote by $T_{[a,b]\times [c,d]}$ an affine map $\R^2\to \R^2$ that identifies $[0,1]^2$ with $[a,b]\times [c,d]$:
  $$T_{[a,b]\times [c,d]}(x_1,x_2)=\big(a+(b-a)x_1, \  c+ (d-c)x_2\big).$$
  
  We define $A$ as the following disjoint union:
  $$A=\bigsqcup_{s_1\in S_1,\, s_2\in s_2} T_{A_{s_1}\times A_{s_2}}\Big(A_{q(s_1,s_2)}\Big).$$
  
  Let $s_i'\in [0,1]$ be a signal received by an agent $i$ in $\mathcal{I}_A$. The signal $s_i'$ falls into $A_{s_i}$ with the same probability that $i$ receives the signal $s_i$ in $\mathcal{I}$. By  construction, the conditional probability of $\{\omega=1\}$ given $s_i'$ is constant over each interval $A_{s_i}$ and coincides with the posterior $p_i(s_i)$ that $i$ gets under $\mathcal{I}$. We conclude that $\mathcal{I}$ and $\mathcal{I}_A$ induce the same distribution of posteriors and so are equivalent.
  
 \end{example}

\end{document}